\newtheorem{thm}{Theorem}[section]
\newtheorem{kor}[thm]{Corollary}
\newtheorem{lem}[thm]{Lemma}
\newtheorem{prop}[thm]{Proposition}
\theoremstyle{definition}
\newtheorem{defn}{Definition}[section]
\theoremstyle{problem}
\theoremstyle{remark}
\newtheorem{rem}{Remark}[section]
\begin{document}
%

\title{Asymptotically Minimax Robust Hypothesis Testing \footnote{\noindent \mbox {\hspace{-4.2mm} \bf This manuscript has been superseded by two separate papers:}
\begin{itemize}
    \item[(i)] \emph{Asymptotically Minimax Robust Likelihood Ratio Test} (arXiv:2602.08174)
    \item[(ii)] \emph{From Asymptotic to Finite-Sample Minimax Robust Hypothesis Testing} (arXiv:2602.19803)
\end{itemize}
Readers are encouraged to consult these versions for the final results.}}

\author{%
  \IEEEauthorblockN{G\"okhan  G\"ul}\\
  \IEEEauthorblockA{Fraunhofer Institute for Microengineering and Microsystems (IMM)\\
                    Carl-Zeiss-Str. 18-20, 55129 Mainz, Germany\\
                    Email: Goekhan.Guel@imm.fraunhofer.de}
}

\maketitle

\begin{abstract}
The design of asymptotically minimax robust hypothesis testing is formalized for the Bayesian and Neyman-Pearson tests of \text{Type-I} and \text{Type-II}. The uncertainty classes based on the KL-divergence, $\alpha$-divergence, symmetrized $\alpha$-divergence, total variation distance, as well as the band model, moment classes and p-point classes are considered. Implications between single-sample-, all-sample- and asymptotic minimax robustness are derived. Existence and uniqueness of asymptotically minimax robust tests are proven using Sion's minimax theorem and the Karush-Kuhn-Tucker multipliers. The least favorable distributions and the corresponding robust likelihood ratio functions are derived in parametric forms, which can then be determined by solving a system of equations. The proposed theory proves that Dabak's design does not produce any asymptotically minimax robust test. Furthermore, it also generalizes the earlier works by Huber and Kassam by allowing analytical derivations, hence, providing answers to the questions 'how?', which were left unanswered. Simulations are provided to exemplify and evaluate the theoretical derivations.

\end{abstract}

\begin{IEEEkeywords}
Hypothesis testing, event detection, robustness, least favorable distributions, minimax optimization.
\end{IEEEkeywords}

%
\IEEEpeerreviewmaketitle

\section{Introduction}
In simple binary hypothesis testing, complete statistical knowledge of the data is required in order to be able to design optimum tests \cite{kay}. However, such an assumption is too strict and often does not hold in practice \cite{levy}. In such a case a reasonable approach is to consider composite hypothesis represented by a set or class of distributions. Both the parametric \cite{kassam,elsawy} as well as non-parametric approaches fall under this category \cite{nonparametric}. While parametric models, including those driven by M-estimators \cite{hube81_2}, implicitly assume that the shape of the distributions is still perfectly known, non-parametric approaches, for example the sign- or the Wilcoxon test make only mild assumptions on the set of underlying distributions, hence they are regarded as conservative approaches \cite{gulbook}.\\
Minimax robust hypothesis testing allows both parametric as well as non-parametric modeling of uncertainties by assuming that under the hypothesis $\mathcal{H}_j$ the true distribution $G_j$ of the received data belongs to an uncertainty class $\mathscr{G}_j$. The choice of the uncertainty classes is usually application dependent and most common choices are either model based, e.g. $\epsilon$-contamination model, or distance based, e.g., uncertainty classes based on the KL-divergence \cite{levy09}. Consequently, the ultimate goal of a designer is to find a decision rule $\hat\delta$ which minimizes a predefined objective function for the least favorable distributions (LFDs) $(\hat{G}_0, \hat{G}_1)\in \mathscr{G}_0\times \mathscr{G}_1$. Under some mild conditions, such a design provides the most powerful test in a well defined minimax sense, i.e. a robust test which provides the best guaranteeable detection performance irrespective of uncertainties imposed on the statistical model.\\
Existence of minimax robust tests are completely determined by the choice of uncertainty sets. In case a minimax robust test does not exist over deterministic decision rules, it may still exist over the set of randomized decision rules \cite{gul7}. The problem with such a design is that the designed test is minimax robust only for a single sample and cannot be extended to multiple samples while maintaining the minimax robustness \cite{gul6}. In the presence of multiple samples and absence of minimax robust tests, probably the best option is to consider asymptotically minimax robust tests, which minimize the asymptotic decrease rate of the error probability. In summary, minimax robust tests can be broadly classified into four categories in terms of the number of samples or the choice of uncertainty classes:
\begin{enumerate}
  \item \textbf{All-sample} minimax robust tests (over deterministic decision rules) \cite{hube65,hube73}.
  \item \textbf{Single-sample} minimax robust tests (over randomized decision rules) \cite{levy09,gul7,gul6}.
  \item \textbf{Asymptotically} minimax robust tests (over deterministic decision rules) \cite{dabak,moment}.
  \item \textbf{Uncertainty classes} defined on some probability space \cite{gulbook}.
\end{enumerate}
\subsection{Related work}
The earliest work in robust hypothesis testing is attributed to P. J. Huber, who published a robust version of the probability ratio test for the $\epsilon$-contamination and total variation classes of probability distributions in 1965 \cite{hube65}. Huber derived the least favorable distributions and showed that the clipped likelihood ratio test was the minimax robust test for both uncertainty classes. The conclusions of this work was later extended by Huber and Strassen to a larger class, which includes five different classes as special cases \cite{hube68}. The largest classes known, for which a minimax robust test exists and is a version of $\hat{l}=\hat{g}_1/\hat{g}_0$, are the $2$-alternating capacities \cite{hube73}, where $\hat{g}_j$ is the density function corresponding to $\hat{G}_j$. All aforementioned works \cite{hube65}, \cite{hube68} and \cite{hube73} are all-sample minimax robust, and it was shown that such tests do not always exist, for example when the uncertainty classes are built with respect to the KL-divergence \cite{gul5}.\\
Clipped likelihood ratio tests (CLRTs) resulting from the uncertainty classes in \cite{hube65} and \cite{hube68} are widely used in practice, especially to deal with outliers. However, the models leading to CLRTs may be unrealistic for many applications, e.g., values of data samples which tend to infinity and still have a positive probability are almost never seen, but such a scenario is fully considered by the models in \cite{hube65} and \cite{hube68}. This was first observed by Dabak and Johnson, who suggested that eliminating such distributions could lead to a smoothed uncertainty model, which may be better suited for practical applications, where modeling errors is of interest. Based on this idea, they considered the KL-divergence as the distance to build the uncertainty classes and derived the corresponding robust test for the asymptotic case, i.e. as the number of measurements tends to infinity \cite{dabak}. Under several assumptions, Levy showed that a single-sample minimax robust test could be designed for the same uncertainty model, if the error minimizing decision rules are allowed to be randomized. Considering a similar approach all the assumptions made by Levy were later removed \cite{gul6}. The shortcomings of the model with the KL-divergence is that both the distance as well as the a priori probabilities of the nominal test are not selectable \cite{gul6}. Replacing the KL-divergence with the $\alpha$-divergence these two final constraints were also removed in \cite{gul7}. Surprisingly, Dabak and Johnson's asymptotically robust test was different from Levy's minimax robust test, which was also different from the CLRT. Even more interestingly, for the whole $\alpha$-divergence neighborhood and for any a priori probabilities of the hypotheses, the corresponding minimax robust test was a censored likelihood ratio test, with a well defined randomization function \cite{gul7}.\\
All aforementioned designs do not allow incorporating approximately known positions, shapes or statistics of the actual probability distributions into the considered model. Therefore, several other uncertainty models have been proposed in the literature. One approach is that the uncertainty classes can fully be defined in terms of the statistics of the actual distributions, such as the moments \cite{moment}. Another approach is to consider the p-point classes, which allow designation of the desired amount of area to the non-overlapping sub-sets of the domain of density functions \cite{elsawy,vastola}. The band models, which was first proposed by Kassam \cite{kassamband} and later revisited by \mbox{Fau\ss} et. al. \cite{fauss}, on the other hand, enable the assignment of the approximate shape and location to the actual distributions.\\
All above-mentioned works in the field of robust hypothesis testing are theoretical. There are also application oriented works, for example \cite{martin}, where Huber's clipped likelihood ratio test is applied to robust detection of a known signal in nearly Gaussian noise. These results are later strengthened for a known signal in contaminated non-Gaussian noise \cite{kassam2}. Robust detection of stochastic signals for Gaussian signal and Gaussian mixture noise is also studied for small and large samples sizes \cite{martin2}. Beside Huber's uncertainty classes, moment classes have also been used in various applications, such as finance \cite{smith}, admission control \cite{brichet} and queueing theory \cite{johnson}. P-point classes have been used in robust detection \cite{elsawy,elsawy2}, rate-distortion \cite{sarkison} and robust smoothing problems \cite{cimini} whereas band models have been used in robust land mine detection \cite{pambudi}, robust distributed detection \cite{leonard}, and robust and sequential gait symmetry detection \cite{ann}. In addition to direct detection procedures, robust detection has also been realized by means of robust estimation in radar applications considering complex elliptically symmetric distributions (CES) to model the clutter distribution \cite{tyler,pascal}, see also \cite{ollila} for a survey of the new results and applications.
\subsection{Motivation}
\begin{enumerate}
  \item The derivations in \cite{dabak, dabak2}, which are later summarized in \cite{levy}, do not yield asymptotically minimax robust Neyman-Pearson (NP) tests. This requires derivations and analysis, which lead to minimax robustness.
  \item The theoretical designs for the asymptotic case consider NP-formulations by default, probably because they result in simpler solutions \cite{dabak,moment}. However, by Chernoff \cite{chernoff}, it is well known that the NP-tests have the worst error exponents. Therefore, it is necessary to obtain the asymptotically minimax robust tests for the fastest decay rate of the error probability.
\end{enumerate}

\subsection{Summary of the paper and its contributions}
In this paper, the design of asymptotically minimax robust binary hypothesis tests is studied for various uncertainty classes. The existence and uniqueness of minimax robust tests are analyzed in general. Considering the Karush-Kuhn-Tucker (KKT) approach, the least favorable distributions and the robust likelihood ratio functions (LRFs) are derived in parametric forms, which can be made explicit by solving a set of non-linear equations. In the sequel, the contributions of this paper together with their relation to prior works are summarized.
\begin{enumerate} 
  \item It is shown that single-sample minimax robustness, hence all-sample minimax robustness (see Proposition~\ref{prop1}) implies asymptotically minimax robustness (see Proposition~\ref{prop2}). The pillar of this work which makes most of other contributions possible is to show that any minimax robust test can be designed via solving
  \begin{equation*}
    \min_{u\in(0,1)} \max_{(G_0,G_1)\in\mathscr{G}_0\times \mathscr{G}_1}D_u(G_0,G_1),
  \end{equation*}
where
\begin{equation*}
D_u(G_0,G_1)=\int_{\Omega}{g_1}^u {g_0}^{1-u}d\mu
\end{equation*}
is denoted as the $u$-divergence. The corresponding test is single-sample minimax robust if there exists one. Otherwise the test is asymptotically minimax robust (see Section~\ref{sec5}).
\item For the KL-divergence neighborhood, the LFDs of the asymptotically minimax robust \text{\emph{NP-tests}} of \text{Type-I} and \text{Type-II} are obtained in parametric forms. The parameters of LFDs can be found by solving four non-linear coupled equations and the corresponding test is different from the ones derived in \cite{dabak,dabak2,levy} (see Theorem~\ref{theorem03np} and Remark~\ref{theorem03np}).
\item For uncertainty classes based on the KL-divergence, $\alpha$-divergence, symmetrized $\alpha$-divergence, total variation distance as well as the band- and $\epsilon$-contamination models, the LFDs of the (asymptotically) minimax robust \emph{rate minimizing} tests are obtained in parametric forms. For moment classes and p-point classes, the design of minimax robust tests is defined as a convex optimization problem (see Section~\ref{sec6}).
\item The derivations regarding the total variation neighborhood generalize the ones obtained earlier by Huber \cite{hube65} via allowing unequal robustness parameters $\epsilon_0\neq \epsilon_1$ to be chosen (see Theorem~\ref{theorem03}). Moreover, the analytical designs regarding the $\epsilon$-contamination model, total variation neighborhood and the band model explain the choice of distributions and necessary parameters made by Huber \cite{hube65} and Kassam \cite{kassamband}, which were originally heuristically designed, i.e. the LFDs were some trial versions, which in turn yielded minimax robust tests. It is shown that two special cases of the band model give rise to two different versions of the $\epsilon$-contamination model which accept the CLRT as the corresponding minimax robust test (see Theorem~\ref{theorem1} and Theorem~\ref{theorem2}). It is also proven that both $\epsilon$-contamination models are single-sample minimax robust (see \cite{hube65} and Theorem~\ref{theoremeps}) and their intersection yields the general version of the band model.
\end{enumerate}

\subsection{Outline of the paper}
The rest of the paper is organized as follows. In Section~\ref{sec2}, a brief overview of the fundamental concepts in minimax robust hypothesis testing is given. In Section~\ref{sec3}, single-sample and all-sample minimax robustness are defined. In Section~\ref{sec4}, asymptotically minimax robustness is introduced and its relation to single- and all-sample minimax robustness is explained. In Section~\ref{sec5}, the equations formulating asymptotic minimax robustness are derived, saddle value condition is characterized and the problem statement is made. In Section~\ref{sec6}, the least favorable distributions and asymptotically minimax robust tests are obtained for various uncertainty classes. In Section~\ref{sec8}, simulations are performed to evaluate and exemplify the theoretical derivations. Finally in Section~\ref{sec9}, the paper is concluded.

\subsection{Notations}
The following notations are applied throughout the paper. Upper case symbols are used for probability distributions and random variables, and the corresponding lower case symbols denote the density functions and observations, respectively. Boldface symbols are used for the sequence of random variables, sequence of observations or joint functions. The hypotheses $\mathcal{H}_0$ and $\mathcal{H}_1$ are associated with the nominal probability measures $F_0$ and $F_1$, whereas the corresponding actual distributions are denoted by $G_0$ and $G_1$. The sets of probability distributions are denoted by $\mathscr G_0$ and $\mathscr G_1$, whereas $\mathscr{M}$ is denoted as the set of all distribution functions on $\Omega$. Every probability measure, e.g. $P[\cdot]$, is associated with its distribution function $P(\cdot)$ i.e., $P(y)=P[Y\leq y]$ for the random variable (r.v.) $Y$ and the observation $y$. The notation $\hat{(\cdot)}$ indicates the least favorable distributions or densities, e.g., $\hat {G}_j\in \mathscr G_j$, or the most favorable decision rules e.g. $\hat\delta$. The expected value of a random variable $Y$ is denoted by $\mathbb{E}[Y]$. The notation for the convex function $f$ is different than the notation for the nominal density functions $f_0$ or $f_1$, or the density functions $f_{\mathcal{L}}$ or $f_{\mathcal{N}}$, which correspond to the Laplacian and Gaussian density functions, respectively. Similarly, the likelihood ratio function denoted by $l$ has nothing in common with $L_0$ or $L_1$, which correspond to the Lagrangians. The Lagrangian parameters $\mu_0$ and $\mu_1$ are also different from the notation used for the measure $\mu$. The symbols $s_0$ and $s_0(\lambda_1)$ denote the same functions, where the latter is just explicitly written in terms of the unknown parameter. $A$ and $A_k$ are defined as some sets belonging to the underlined sigma-algebra $\mathscr{A}$. The argument (value on the domain) of the subsequent operation is denoted by $\arg$. The letter $t$ always indicates a threshold.


\section{Fundamentals of Minimax Robust Hypothesis Testing}\label{sec2}
Let $(Y_k)_{k\geq 1}$ be a sequence of independent and identically distributed (i.i.d.) random variables (r.v.s), each taking values on a measurable space $(\Omega,\mathscr{A})$, where $\mathscr{A}$ is the Borel $\sigma$-algebra and $\Omega$ is a set. Furthermore, let $\mathscr{M}$ denote the set of all probability distribution functions defined on $\Omega$, and let $\mathscr{G}_0\subset \mathscr{M}$ and $\mathscr{G}_1\subset \mathscr{M}$ denote two distinct subsets of $\mathscr{M}$, each associated with the hypothesis $\mathcal{H}_0$ and $\mathcal{H}_1$. The distribution of $Y_k$ is not known exactly but belongs to the uncertainty set $\mathscr{G}_j$ under the hypothesis $\mathcal{H}_j$. The goal is to decide which of the following hypothesis is true
\begin{align}\label{eq4}
\mathcal{H}_0&: Y_k \sim G_0,\quad  G_0\in\mathscr{G}_0,\nonumber\\
\mathcal{H}_1&: Y_k \sim G_1,\quad  G_1\in\mathscr{G}_1.
\end{align}
Given a sequence of observations $\boldsymbol{y}=(y_1,\ldots,y_n)$ from either of the hypothesis, which corresponds to $\boldsymbol{Y}=(Y_1,\ldots,Y_n)$, a statistical test (or a decision rule) is a measurable function $\delta:\boldsymbol{Y}\mapsto\{0,1\}$, which accepts the hypothesis $\mathcal{H}_{\delta(\boldsymbol{y})}$ and rejects the other. Let $\Delta$ be the set of all $\delta$ on $\Omega$, and $P_0=P(\mathcal{H}_0)$ and $P_1=P(\mathcal{H}_1)$ be the a priori probabilities of the hypotheses. Furthermore, let $P_F(\delta,G_0)=G_0[\delta(\boldsymbol{Y})=1]$ and $P_M(\delta,G_1)=G_1[\delta(\boldsymbol{Y})=0]$ define the false alarm and the miss detection probabilities, respectively, and
\begin{equation*}
P_E(\delta,G_0,G_1)=P_0P_F(\delta,G_0)+P_1P_M(\delta,G_1)
\end{equation*}
define the overall error probability. Then, a solution to the optimization problem
\begin{equation}\label{eq10}
\min_{\delta\in\Delta}\sup_{(G_0,G_1)\in {\mathscr{G}}_0\times{\mathscr{G}}_1}P_E(\delta,G_0,G_1)=\sup_{(G_0,G_1)\in {\mathscr{G}}_0\times{\mathscr{G}}_1}\min_{\delta\in\Delta}P_E(\delta,G_0,G_1)
\end{equation}
is sought. The testing procedure and the distribution functions which solve \eqref{eq10} are called the minimax robust decision rule $\hat \delta$ and the least favorable distribution functions (LFDs), $\hat{G}_0$ and $\hat{G}_1$, respectively. The equality sign in \eqref{eq10}, which is $\geq$ in general, is not taken for granted and requires a careful analysis both on the objective function as well as on the uncertainty sets, for instance by using Sion's minimax theorem \cite{sion}. A solution to \eqref{eq10} then implies a saddle value,
\begin{equation}\label{eq11}
P_E(\hat{\delta},G_0,G_1)\leq P_E(\hat{\delta},\hat{G}_0,\hat{G}_1)\leq P_E(\delta,\hat{G}_0,\hat{G}_1).
\end{equation}
Given the densities $\hat{g}_0$ and $\hat{g}_1$ corresponding to the LFDs, $\hat{G}_0$ and $\hat{G}_1$, the minimizing test $\hat\delta$ is known to be the likelihood ratio test
\begin{equation}\label{eq115}
\hat{\boldsymbol{l}}(\boldsymbol{y})=\prod_{k=1}^{n}\hat{l}(y_k)\stackrel{\mathcal{H}_1}{\underset{\mathcal{H}_0}{\gtreqless}}t
\end{equation}
where $\hat{l}=\hat{g}_1/\hat{g}_1$ is the robust likelihood ratio function, $\hat{\boldsymbol{l}}$ is the joint robust likelihood ratio function and $t$ is a threshold. Depending on whether $n=1$, $n<\infty$ or as $n\rightarrow\infty$, analysis may differ. These cases and their interconnections will be handled in the next section. Uncertainty classes $\mathscr{G}_0$ and $\mathscr{G}_1$ will be made explicit whenever they are defined.

\section{Single-Sample and All-Sample Minimax Robustness}\label{sec3}
In the following single-sample minimax robustness is introduced. It is then extended to multiple samples. Existence and uniqueness of single-sample and finite-sample minimax robust tests are mentioned in this section and that of the asymptotic minimax robust test are presented more in details in the next sections.

\begin{prop}\label{prop001}
Let $n=1$ and $Y=Y_1$. If there exist LFDs, $\hat{G}_0\in\mathscr{G}_0$ and $\hat{G}_1\in\mathscr{G}_1$ such that
\begin{align}\label{eq13x}
&G_0\left[\hat{l}(Y)< t\right]\geq \hat{G}_0\left[\hat{l}(Y)< t\right],\nonumber\\
&G_1\left[\hat{l}(Y)< t\right]\leq \hat{G}_1\left[\hat{l}(Y)< t\right],
\end{align}
for all $t\in\mathbb{R}_{\geq 0}$ and all $(G_0,G_1)\in\mathscr{G}_0\times\mathscr{G}_1$, the corresponding test given by \eqref{eq115} solves \eqref{eq10} for the $P_E$ minimizing $t$. 
\end{prop}

\begin{proof}
The proof is straightforward and is omitted, cf. \cite[p. 1754]{hube65}.
\end{proof}

\begin{defn}[Single-sample minimax robustness]\label{def1}
Existence of a single-sample minimax robust test stated by Proposition~\ref{prop001} is called single-sample minimax robustness.
\end{defn}

\noindent There is a tight connection between minimizing a distance and finding a solution to \eqref{eq13x}. This will be stated with the following theorem.
\begin{thm}\label{thm1}
Let $G_0$ and $G_1$ be two probability distributions which are both absolutely continuous with respect to a common measure $\mu$ on $\Omega$. Then, for the $f$-divergence \cite{osterreicher1981} defined by
\begin{equation}\label{eq3}
D_f(G_0,G_1)=\int_{\Omega}f\left(\frac{g_0}{g_1}\right)g_1 d \mu,
\end{equation}
where $f:\mathbb{R}_{\geq 0}\to \mathbb{R}$ is a convex function such that $f(1)=0$, we have
\begin{equation}\label{eq13}
(\hat{G}_0,\hat{G}_1)\in\mathscr{G}_0\times\mathscr{G}_1 \mbox{ satisfies \eqref{eq13x}} \Longleftrightarrow (\hat{G}_0,\hat{G}_1)\in\mathscr{G}_0\times\mathscr{G}_1 \mbox{ minimizes $D_f$}
\end{equation}
over all $(G_0,G_1)\in\mathscr{G}_0\times\mathscr{G}_1$ and for all twice continuously differentiable strictly convex functions $f$, where the $f$-divergence can alternatively be written as
\begin{equation}\label{eq14}
D_f(G_0,G_1)=\int_{0}^{\infty}\left(\min(1,t)-\int_\Omega\min (g_0,tg_1)d\mu\right) d\mu_f(t)
\end{equation}
with
\begin{equation*}
\mu_f(a,b]=\partial^r f(b)-\partial^r f(a),\quad 0<a<b<\infty
\end{equation*}
where $\partial^r$ denotes the right derivative operator.
\end{thm}
\begin{IEEEproof}
The equivalence stated by \eqref{eq13} was proven in \cite[Section 6]{hube73}\footnote{In \cite{hube73} $\Omega$ is defined to be a complete separable metrizable space. Furthermore, if all $(G_0,G_1)\in\mathscr{G}_0\times\mathscr{G}_1$ are absolutely continuous with respect to a fixed measure $\mu$, $\Omega$ may need to be finite.} for a version of $D_f$;
\begin{equation*}
D_{f^{*}}(G_0,G_1)=\int_{\Omega}f^{*}\left(\frac{g_0}{g_0+g_1}\right)(g_0+g_1)d \mu,
\end{equation*}
where $f^{*}:[0,1]\to \mathbb{R}$ is a twice continuously differentiable and strictly convex function. Let $f^{**}(y)=f^{*}(y)-f^{*}(1/2)$ which results in $f^{**}(1/2)=0$. By \cite[Equation 7,8]{vajda}, it is known that $D_{f}=D_{f^{**}}$ using the transformation $f^{**}(t)=tf((1-t)/t)$ for $t\in(0,1)$. Hence, minimizing $D_{f}$ and $D_{f^{**}}$ are equivalent over all twice continuously differentiable and strictly convex $f$. An alternative definition of $D_f$ given by \eqref{eq14} can be found in \cite[Equation 10]{sharp}.
\end{IEEEproof}
Depending on the definition of the uncertainty classes, $\mathscr{G}_0$ and $\mathscr{G}_1$, a minimax robust test may or may not exist. For example it exists for the $\epsilon$-contamination classes of distributions \cite{hube65} and it does not exist for the classes of distributions based on the KL-divergence \cite{gul5}. If the uncertainty classes do not allow a minimax robust test to exist for all thresholds, it may still be possible to obtain a minimax robust test for a unique decision rule, if randomized decision rules are allowed, see \cite{gul5}. The information in the randomization is lost by the multiplication of the likelihood ratios therefore straightforward extension of the test to multiple observations is not minimax robust. However, if a single-sample minimax robust test exists, then it also implies existence of a finite-sample minimax robust test for any $n<\infty$. In order to prove this, let us first consider the following definition and lemma.

\begin{defn}\label{def2}
Let $X$ and $Y$ be two random variables taking values on the same measurable space $(\Omega,{\mathscr{A}})$, having cumulative distribution functions $G_0$ and $G_1$, respectively. $X$ is called stochastically larger than $Y$, i.e. $X\succeq Y$, if $G_1(x)\geq G_0(x)$ for all $x$.
\end{defn}

\begin{lem}\label{lem1}
Let $X_1$, $X_2$, $Y_1$ and $Y_2$ be four random variables on $(\Omega,{\mathscr{A}})$, out of which $X_1$ and $X_2$, and $Y_1$ and $Y_2$ are independent. If $X_1\succeq Y_1$ and $X_2\succeq Y_2$, then $X_1+X_2\succeq Y_1+Y_2$.
\end{lem}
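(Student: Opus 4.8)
The plan is to phrase everything in terms of complementary (survival) distribution functions and to prove the claim by transitivity, interpolating through the auxiliary variable $Z=Y_1+X_2$ with $Y_1$ and $X_2$ taken independent. This interpolation is harmless: the usual stochastic order is a property of marginal laws only, and the law of a sum of two independent variables is the convolution of their laws, so I am free to choose any convenient coupling when forming the auxiliary object $Z$. Writing $\bar{P}_X(x):=1-P_X(x)=\mathbb{P}(X>x)$, Remark~\ref{rem1} says exactly that $X\succeq Y$ is equivalent to $\bar{P}_X(x)\geq\bar{P}_Y(x)$ for every $x$, so the goal reduces to showing $\bar{P}_{X_1+X_2}(t)\geq\bar{P}_{Y_1+Y_2}(t)$ for all $t$.

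First I would replace $X_1$ by $Y_1$. Using independence of $X_1$ and $X_2$ to write the survival function of the sum as a convolution, \[ \bar{P}_{X_1+X_2}(t)=\int\bar{P}_{X_1}(t-s)\,\mathrm{d}P_{X_2}(s), \] and using $X_1\succeq Y_1$, which gives $\bar{P}_{X_1}(t-s)\geq\bar{P}_{Y_1}(t-s)$ pointwise in $s$, I integrate against the nonnegative measure $\mathrm{d}P_{X_2}$ to obtain \[ \bar{P}_{X_1+X_2}(t)\geq\int\bar{P}_{Y_1}(t-s)\,\mathrm{d}P_{X_2}(s)=\bar{P}_{Z}(t), \] i.e. $X_1+X_2\succeq Z$.

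Next I would replace $X_2$ by $Y_2$. The map $s\mapsto\bar{P}_{Y_1}(t-s)$ is non-decreasing in $s$ (since $\bar{P}_{Y_1}$ is non-increasing and $t-s$ decreases in $s$) and takes values in $[0,1]$. I then invoke the standard equivalent characterization of the usual stochastic order, namely that $X_2\succeq Y_2$ holds if and only if $\mathbb{E}[\phi(X_2)]\geq\mathbb{E}[\phi(Y_2)]$ for every bounded non-decreasing $\phi$. Applying it with $\phi(s)=\bar{P}_{Y_1}(t-s)$ yields \[ \bar{P}_{Z}(t)=\int\bar{P}_{Y_1}(t-s)\,\mathrm{d}P_{X_2}(s)\geq\int\bar{P}_{Y_1}(t-s)\,\mathrm{d}P_{Y_2}(s)=\bar{P}_{Y_1+Y_2}(t), \] so $Z\succeq Y_1+Y_2$. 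Chaining the two steps by transitivity of $\succeq$ gives $\bar{P}_{X_1+X_2}(t)\geq\bar{P}_{Y_1+Y_2}(t)$ for all $t$, which is the assertion.

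The hard part will be the monotone-function characterization used in the second step; everything else is elementary manipulation of convolution integrals. One may either cite it as the well-known reformulation of the usual stochastic order, or derive it from $\bar{P}_{X_2}\geq\bar{P}_{Y_2}$ by integration by parts, which requires only that $\phi$ be monotone and bounded (hence of bounded variation). As an alternative route that sidesteps this characterization entirely, I could appeal to Strassen's coupling theorem: realize $X_i'\geq Y_i'$ almost surely with the prescribed marginals and with the pairs $(X_1',Y_1')$ and $(X_2',Y_2')$ independent; then $X_1'+X_2'\geq Y_1'+Y_2'$ holds pointwise, which immediately implies the stochastic order for the two sums. I expect the only delicate bookkeeping to be checking monotonicity and boundedness of the integrand, and, if the characterization is proven from scratch, the limiting argument approximating a monotone $\phi$.
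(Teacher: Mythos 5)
Your proof is correct, and its skeleton --- interpolate through a mixed sum, handling one summand at a time via the convolution representation --- is the same as the paper's: you pass through $Y_1+X_2$ using survival functions, the paper passes through $X_1+Y_2$ using CDFs. The only substantive difference is how the second half-step is closed. The paper never leaves the elementary level: after the first pointwise substitution it rewrites the integral by Fubini as $\iint_{x+y\le z}dP_{X_1}(x)\,dP_{Y_2}(y)=\int P_{Y_2}(z-y)\,dP_{X_1}(y)$, so the same pointwise CDF inequality can simply be applied a second time. You instead invoke the characterization of the usual stochastic order by bounded non-decreasing test functions (or, in your alternative, Strassen's coupling theorem), applied to $\phi(s)=\bar{P}_{Y_1}(t-s)$. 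That tool is standard and your application of it is sound (the monotonicity and boundedness of $\phi$ are checked correctly), but it is heavier than necessary, and as you yourself flag it is the one step that would need its own proof; the same Fubini swap, $\int\bar{P}_{Y_1}(t-s)\,dP_{X_2}(s)=\int\bar{P}_{X_2}(t-y)\,dP_{Y_1}(y)$, reduces your second step to another pointwise domination and keeps the whole argument self-contained. What your route buys in exchange is generality: the test-function or coupling formulation immediately yields $\psi(X_1,X_2)\succeq\psi(Y_1,Y_2)$ for any componentwise non-decreasing $\psi$, not only the sum, and extends to $n$ summands without the iteration used in Proposition~\ref{prop1}.
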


\begin{IEEEproof}
Proof of Lemma~\eqref{lem1} is simple and is omitted.
\end{IEEEproof}

\begin{prop}\label{prop1}
If a single-sample minimax robust test exists, then a minimax robust test exists for any finite-sample size $n<\infty$ with the same LFDs and for the same uncertainty classes.
\end{prop}

\begin{IEEEproof}
If $\hat l(Y)$ satisfies the inequalities in \eqref{eq13x}, so does $X_k=\log \hat{l}(Y_k)$ for all $k$, but for all $t\in\mathbb{R}$. Application of Lemma~\ref{lem1} to the sum of $n$ r.v.s implies that $\sum_{k=1}^n X_k$ is stochastically larger under $\hat{G}_0$ than under $G_0$, and similarly under $G_1$ than under $\hat{G}_1$. Hence, by Definition~\ref{def2} we have
\begin{align}
G_0\left[\sum_{k=1}^n X_k\leq t\right] \geq \hat{G}_0\left[\sum_{k=1}^n X_k \leq t\right],\nonumber\\
G_1\left[\sum_{k=1}^n X_k\leq t\right] \leq \hat{G}_1\left[\sum_{k=1}^n X_k \leq t\right]
\end{align}
for all $t$ and all $(G_0,G_1)\in\mathscr{G}_0\times\mathscr{G}_1$. This proves the assertion since
\begin{equation}
\left\{\exp\left(\sum_{k=1}^n X_k\right)\leq \exp(t)\right\}\equiv\{\hat{\boldsymbol{l}}(\boldsymbol{Y})\leq t^{'}\}.
\end{equation}
\end{IEEEproof}

\begin{defn}[All-sample minimax robustness]\label{def1}
Existence of minimax robust tests for any finite-sample size $n<\infty$ stated by Proposition~\ref{prop1} is called all-sample minimax robustness.
\end{defn}

\begin{rem}\label{remark0}
Proposition~\ref{prop1} implies
\begin{equation*}
\text{Single-sample minimax robustness}\Longleftrightarrow\mbox{All-sample minimax robustness}
\end{equation*}
Notice that in fact $(Y_k)_{k\geq 1}$ are required to be only independent but not necessarily identically distributed. This allows different uncertainty classed to be considered for each $Y_k$.
\end{rem}
Existence of finite-sample minimax robust tests are established by Proposition~\ref{prop1}. Examples of such tests can be found in \cite{hube68, hube73}. However, as shown before, such tests may not always exist even for very simple examples. In this case a minimax robust test can be designed asymptotically as $n\rightarrow\infty$. Asymptotically minimax robust hypothesis testing will be introduced in the next section. Uniqueness of minimax robust finite-sample tests are postponed to the next sections, due to the ease of derivations using the asymptotic theory.

\section{Asymptotically Minimax Robustness}\label{sec4}
If the uncertainty model of interest does not allow all-sample minimax robustness, an asymptotic design is necessary. In the sequel, asymptotic minimax robustness will be formalized with first deriving the minimax equations, then, stating the existence of a saddle value and last, with a clear statement of the problem.

\begin{defn}[Asymptotically minimax robustness]
Let $S_n(\boldsymbol{X})=\frac{1}{n}\log \hat{\boldsymbol{l}}(\boldsymbol{X})$. Then, the tests satisfying
\begin{align*}
&\lim_{n\rightarrow\infty}\frac{1}{n}\log  G_0\left[S_n(\boldsymbol{X})> t\right]\leq \lim_{n\rightarrow\infty}\frac{1}{n}\log  \hat{G}_0\left[S_n(\boldsymbol{X})> t\right]\nonumber\\
&\lim_{n\rightarrow\infty}\frac{1}{n}\log  G_1\left[S_n(\boldsymbol{X})\leq t\right]\leq \lim_{n\rightarrow\infty}\frac{1}{n}\log  \hat{G}_1\left[S_n(\boldsymbol{X})\leq t\right]
\end{align*}
for a fixed $t$ and for all $(G_0,G_1)\in\mathscr{G}_0\times\mathscr{G}_1$ are called asymptotically minimax robust with respect to the chosen $t$.
\end{defn}

\subsection{Derivation of the Rate Functions}\label{sec41}

\begin{thm}\label{theorem0}
Let $\boldsymbol{Y}=(Y_1,\ldots,Y_n)$ be a sequence of i.i.d. r.v.s, where each $Y_k$ is distributed as $G_j\in \mathscr{G}_j$ under $\mathcal{H}_j$. Furthermore, let $\boldsymbol{X}=(X_1,\ldots,X_n)$ be the sequence of r.v.s induced by the robust log-likelihood ratios $X_k:=\log\hat{l}(Y_k)$, where each $X_k$ has a finite moment generating function
\begin{equation}\label{eq29}
M_{X_k}^j(u)=\mathbb{E}_{G_j}\left[\exp\left({u X_k}\right)\right]<\infty,\quad j\in\{0,1\}.
\end{equation}
For the test comparing ${S_n(\boldsymbol{X})=\frac{1}{n}\sum_{k=1}^n X_k}$ to a threshold $t$, if
\begin{equation}\label{eq26}
\mathbb{E}_{G_0}[X_k]<t<\mathbb{E}_{G_1}[X_k],
\end{equation}
then, as $n\to\infty$, false alarm and miss detection probabilities decrease exponentially
\begin{align}\label{eq27}
\lim_{n\rightarrow \infty}\frac{1}{n}\log G_0[S_n(\boldsymbol{X})> t]=-I_0(t)\\\label{eq275}
\lim_{n\rightarrow \infty}\frac{1}{n}\log G_1[S_n(\boldsymbol{X})\leq t]=-I_1(t)
\end{align}
with the rate functions given by
\begin{equation}\label{eq28}
I_j(t)=\sup_{u\in\mathbb{R}}\left(tu-\log M_{X_k}^j(u)\right),\quad j\in\{0,1\}.
\end{equation}
\end{thm}

\begin{IEEEproof}
The results (\eqref{eq27} and \eqref{eq275}, respectively) follow by applying Cram\'{e}r's theorem \cite{Cramer} twice, first under $\mathcal{H}_0$ to the sequence of r.v.s $(X_k)_{k\geq 1}$, each satisfying $\mathbb{E}_{G_0}[X_k]<t$, and second under $\mathcal{H}_1$ to the sequence of r.v.s $(-X_k)_{k\geq 1}$, each satisfying $t<\mathbb{E}_{G_1}[X_k]$. The details of the proof can be found in \cite[p. 84]{gulbook}, \cite[pp. 76-80]{levy}.
\end{IEEEproof}
\begin{kor}\label{corollary1}
Any of the following conditions is sufficient for \eqref{eq26} to hold.
\begin{enumerate}
\item There are $\hat{G}_0\in\mathscr{G}_0$ and $\hat{G}_1\in\mathscr{G}_1$ satisfying single-sample minimax robustness
\item $\mathbb{E}_{\hat{G}_1}[X_k]<\mathbb{E}_{G_1}[X_k]$ and $\mathbb{E}_{\hat{G}_0}[X_k]>\mathbb{E}_{G_0}[X_k]$ for all $(G_0,G_1)\in \mathscr{G}_0\times \mathscr{G}_1$
\item  $\min\mathbb{E}_{G_1}[X_k]>\max\mathbb{E}_{G_0}[X_k]$ for all $(G_0,G_1)\in \mathscr{G}_0\times \mathscr{G}_1$
\end{enumerate}
Here we have $1 \Longrightarrow 2$ and $2 \Longrightarrow 3$ and the converses of these implications are not true in general.
\end{kor}

\begin{rem}\label{remark1}
Even if none of the sufficient conditions listed above holds, it may still be possible to obtain an asymptotically minimax robust test. Let two disjoint subsets be $A_0\times A_1\subset \mathscr{G}_0\times \mathscr{G}_1$ and $B_0\times B_1\subset \mathscr{G}_0\times \mathscr{G}_1$, where $A_0\times A_1\cup B_0\times B_1=\mathscr{G}_0\times \mathscr{G}_1$. Suppose that for all $(G_0,G_1)\in A_0\times A_1$, \eqref{eq26} holds and for all $(G_0,G_1)\in B_0\times B_1$ we have $\mathbb{E}_{G_1}[X_k]<t<\mathbb{E}_{G_0}[X_k]$. Furthermore, suppose that either of the conditions
\begin{enumerate}
\item Whether $(G_0,G_1)$ belongs to $A_0\times A_1$ or $B_0\times B_1$ is known to the decision maker.
\item Increasing behavior of error probability is instantly detectable.
\end{enumerate}
is true. Then, flipping the binary decisions for all $(G_0,G_1)\in B_0\times B_1$ (or whenever increasing $P_E$ is detected) and keeping the original decisions for $(G_0,G_1)\in A_0\times A_1$ allow the convergence of error probability to zero asymptotically for all $(G_0,G_1)\in\mathscr{G}_0\times \mathscr{G}_1$. However, in this case two different robust tests may need to be designed. Notice that for the flipped decisions $I_0$ and $I_1$ are exchanged.
\end{rem}

\subsection{Optimum Threshold}\label{sec42}
For the selected robust likelihood ratio function $\hat{l}$, the asymptotic decrease rates of false alarm and miss detection probabilities can be found from \eqref{eq28} for any threshold $t$ satisfying \eqref{eq26}, cf. Corollary~\ref{corollary1} and Remark~\ref{remark1}. Of particular interest is the optimal value of $t$ which maximizes the asymptotic decrease rate of the error probability. This problem was first solved by Chernoff \cite{chernoff}. In the following, problem statement will be given together with a simple proof.\\
Consider the Bayesian error probability yielding from the likelihood ratio test stated in Theorem~\ref{theorem0},
\begin{equation*}
P_E(n,t)=P_0P_F(n,t)+(1-P_0)P_M(n,t),
\end{equation*}
where $n$ is the total number of samples and $t$ is the threshold. From \eqref{eq27} and \eqref{eq275} one can write
\begin{equation*}
P_E(n,t)\approx C_F(n)P_0\exp\left(-nI_0(t)\right)+C_M(n)(1-P_0)\exp\left(-nI_1(t)\right),
\end{equation*}
where $C_F$ and $C_M$ satisfy
\begin{equation}\label{eq32}
\lim_{n\rightarrow\infty}\frac{1}{n}\log C_F(n)=\lim_{n\rightarrow\infty}\frac{1}{n}\log C_M(n)=0.
\end{equation}
Hence, the exponential decay rate of the error probability is governed by $I_0$ and $I_1$.
\begin{thm}\label{theorem01}
The optimum threshold minimizing the error probability asymptotically is
\begin{equation}
\arg \min_{t}\lim_{n\to\infty}P_E(n,t)=0.
\end{equation}
\end{thm}
A proof of Theorem~\ref{theorem01} is given in Appendix~\ref{appendix3}.

\subsection{From Single-sample to Asymptotic Minimax Robustness}\label{sec43}
The connection between asymptotic and single-, hence, all-sample minimax robustness can be given with the following proposition.
\begin{prop}\label{prop2}
For any pair of uncertainty classes $(\mathscr{G}_0,\mathscr{G}_1)$,
\begin{equation*}
\text{Single-sample minimax robustness}\Longrightarrow\text{Asymptotically minimax robustness,}
\end{equation*}
and the converse is not true in general.
\end{prop}

\begin{IEEEproof}
There are two conditions which guarantee existence of asymptotically minimax robust tests, one of which is the Cram\'{e}r's inequalities and the other one is, as it will be shown in the next sections, the existence of LFDs maximizing $D_u$ for rate minimizing, or minimizing $D_{\mathrm{KL}}$ for Neyman-Pearson type tests. Single-sample minimax robustness implies LFDs which automatically satisfy Cram\'{e}r's inequalities as stated by Corollary~\ref{corollary1}. By Theorem~\ref{thm1}, single-sample minimax robustness also implies LFDs minimizing all $f$-divergences. Since $-D_u$ and $D_{\mathrm{KL}}$ are two special cases of the $f$-divergences, single-sample minimax robustness implies asymptotically minimax robustness.
\end{IEEEproof}

\begin{rem}\label{myrem}
Since asymptotically minimax robustness is a necessary condition for single-sample minimax robustness (and is unique as will be shown later), one can find the asymptotically minimax robust test, and claim that it is also single-sample minimax robust, hence finite-sample minimax robust, in case a single-sample minimax robust test exists.
\end{rem}
By considering Remark~\ref{myrem} it is now possible to derive single-sample minimax robust tests by using the asymptotic theory. As will be shown later, this is much easier than considering the theory from single-sample minimax robustness.

\section{Derivation of Minimax Equations and Problem Formulation}\label{sec5}
From the previous section the moment generating function $M_{X_k}^0(u)$ is equivalent to the \text{$u$-divergence} which is defined as
\begin{equation*}
D_u(G_0,G_1)=\int_{\Omega}{g_1}^u{g_0}^{1-u}d\mu.
\end{equation*}
In fact $D_u$ is a non-scaled version of the Renyi divergence \cite{renyi}. Some important properties of the \text{$u$-divergence} can be listed as follows
\begin{enumerate}
  \item $D_u$ is nonnegative, i.e. $D_u(G_0,G_1)\geq 0$
  \item $D_u$ is continuous and convex in $u$
  \item $D_u$ is continuous and jointly concave in $(g_0,g_1)$
  \item $D_{u=0}(G_0,G_1)=D_{u=1}(G_0,G_1)=1$
  \item $D_u(G_0,G_1)\in[0,1]$ from 1) and H\"older's inequality \cite{holder}
  \item $\hat u=\arg\min_u D_u\Longrightarrow \hat u\in[0,1]$ from 2) and 4)
  \item $D_u$ is related to the $\alpha$-divergence as $D_\alpha(G_0,G_1)=(1-D_u(G_0,G_1))/u(1-u)|_{u:=\alpha}$
\end{enumerate}
While $1)$ and $4)$-$7)$ are trivially correct, $2)$ and $3)$ follow from the properties of the $\alpha$-divergence using $7)$ \cite{entropy}.

\subsection{Minimax Equations}\label{sec5_3}
The minimax robust test that is intended to be designed is a likelihood ratio test with \text{$\hat l=\hat{g}_1/\hat{g}_0$}, for which the worst case data samples are also obtained from $\hat{g}_0$ and $\hat{g}_1$. As a result, $t=0$ can safely be selected as the optimum threshold, see the proof of Theorem~\ref{theorem01}. Hence, using $t=0$ together with \text{$-\inf(x)=\sup(-x)$} for $x<0$ and \text{$\inf(-x)=-\sup(x)$} the minimax equations can be obtained from \eqref{eq29} and \eqref{eq28} as
\begin{align}\label{eq37}
\hat{g}_0=&\arg\sup_{G_0\in \mathscr{G}_0}\left(\inf_{u_0\in \mathbb{R}}\log \int_{\Omega}\left(\frac{\hat{g}_1}{\hat{g}_0}\right)^{u_0} g_0 d\mu\right),\nonumber\\
\hat{g}_1=&\arg\sup_{G_1\in \mathscr{G}_1}\left(\inf_{u_1\in \mathbb{R}}\log \int_{\Omega}\left(\frac{\hat{g}_1}{\hat{g}_0}\right)^{u_1} g_1 d\mu\right).
\end{align}
In their current forms, these two coupled equations are mathematically intractable, especially if $\mathscr{G}_0$ and $\mathscr{G}_1$ are infinite sets. A solution found for \eqref{eq37} with the least favorable densities $\hat{g}_0$ and $\hat{g}_1$ implies
\begin{align}\label{eq38}
C_0(\hat{G}_0,\hat{G}_1;\hat{u}_0)=&\inf_{u_0\in \mathbb{R}}\log D_{u_0}(\hat{G}_0,\hat{G}_1) ,\nonumber\\
C_1(\hat{G}_0,\hat{G}_1;\hat{u}_1)=&\inf_{u_1\in \mathbb{R}}\log D_{1+u_1}(\hat{G}_0,\hat{G}_1).
\end{align}

\begin{prop}\label{prop3}
The results of both optimization problems in \eqref{eq38} are the same with $\hat{u}_0=-\hat{u}_1$, where $\hat{u}_0\in[0,1]$, i.e. $C_0(\hat{G}_0,\hat{G}_1;\hat{u}_0)=C_1(\hat{G}_0,\hat{G}_1;-\hat{u}_1)$.
\end{prop}

\begin{IEEEproof}
Using the $6$th property of $D_u$ in \eqref{eq38} implies a unique minimizing $\hat{u}_0$, which lies in $[0,1]$. Due to same reasoning, one can see that the minimizing $\hat{u}_1^*=-\hat{u}_1$ also lies in $[0,1]$. Since $-C_0$ is the Chernoff distance which is symmetric \cite[p. 82]{levy}, we have
\begin{equation*}
C_0(\hat{G}_0,\hat{G}_1;\hat{u}_0)=C_0(\hat{G}_1,\hat{G}_0;\hat{u}_0)=C_1(\hat{G}_0,\hat{G}_1;\hat{u}_1^*).
\end{equation*}
\end{IEEEproof}
Proposition~\ref{prop3} implies that both optimization problems are equivalent and have the same result for $\hat{u}_0=-\hat{u}_1$. Hence, it is sufficient to consider only one of them. Considering the first formulation the problem to be solved can be reduced to
\begin{equation}\label{eq42}
(\hat{g}_0,\hat{g}_1)=\arg \sup_{(G_0,G_1)\in \mathscr{G}_0\times \mathscr{G}_1} \inf_{u\in[0,1]}D_u(G_0,G_1)
\end{equation}
by removing the $\log$ term. This can be done because for both $\sup$ and $\inf$ we have
\begin{equation*}
\frac{\partial \log D_u}{\partial u}=0\Longrightarrow \frac{\partial D_u}{\partial u}=0
\end{equation*}
and $\log$ is an increasing mapping from $[0,1]$ to $\mathbb{R}_{\leq 0}$, see property $5)$ of $D_u$ .

\subsection{Saddle Value Condition}\label{sec5_4}
In this section existence of a saddle value, hence a solution to the minimax optimization problem in \eqref{eq42}, is discussed. Uniqueness condition depends on the choice of the uncertainty classes and will be discussed in the next section. In general existence of a saddle value is described by an existence of a solution to
\begin{equation}\label{eq45}
\min_{u\in[0,1]}\sup_{(G_0,G_1)\in \mathscr{G}_0\times \mathscr{G}_1} D_u(G_0,G_1)=\sup_{(G_0,G_1)\in \mathscr{G}_0\times \mathscr{G}_1} \min_{u\in[0,1]}D_u(G_0,G_1).
\end{equation}
By applying Sion's minimax theorem \cite{sion} it is shown in Appendix~\ref{appendix2} that there exists a saddle value to \eqref{eq45} and hence we have
\begin{equation}\label{eq46}
D_{\hat u}(G_0,G_1)\leq D_{\hat u}(\hat{G}_0,\hat{G}_1)\leq D_u(\hat{G}_0,\hat{G}_1),
\end{equation}
where $\hat u$ is the minimizing $u$, and $\hat{G}_0$ and $\hat{G}_1$ are the least favorable distributions.
\subsection{Problem Statement}\label{sec5_5}
From \eqref{eq46}, given $(\hat{G}_0,\hat{G}_1)$, the objective function $D_u$ needs to be minimized over $u$ and given the minimizing $\hat u$, $D_{\hat u}$ needs to be maximized over $(G_0,G_1)$. This can compactly be written as
\begin{equation}\label{eq47}
    \begin{aligned}[b]
        \text{Maximization:}\quad &
        \begin{aligned}[t]
            &\hat{g}_0=\mathrm{arg}\sup_{G_0\in\mathscr{G}_0}D_{u}(G_0,G_1)
             \quad \text{s.t. $g_0>0$, $\Upsilon(G_0)=\int_{\Omega}g_0\, d\mu=1$}\\
            &\hat{g}_1=\mathrm{arg}\sup_{G_1\in\mathscr{G}_1}D_{u}(G_0,G_1)
             \quad \text{s.t. $g_1>0$, $\Upsilon(G_1)=\int_{\Omega}g_1\,d\mu=1$}
        \end{aligned}
        \\[12pt]
        \text{Minimization:}\quad &\hat u=\mathrm{arg}\min_{u\in  [0,1]}D_{u}(\hat{G}_0,\hat{G}_1).
    \end{aligned}
\end{equation}
The maximization stage involves two separate constrained optimization problems, which are coupled. The following minimization problem can be solved once $\hat{g}_0$ and $\hat{g}_1$ are derived as functions of $u$.

\section{Least Favorable Distributions and Asymptotically Minimax Robust Tests}\label{sec6}
In this section LFDs and the asymptotically minimax robust tests are derived for various uncertainty classes considering the minimax optimization problem given by \eqref{eq47}. Additionally, the asymptotic NP-tests are also derived. Complete derivations are carried out for the Kullback-Leibler (KL)-divergence neighborhood, and similar steps are skipped for the sake of clarity when the same procedure is repeated for the $\alpha$- and the symmetrized $\alpha$-divergences. The derivations also include the uncertainty classes based on the total variation distance as well as the band model, moment classes, and p-point classes. A meta algorithm summarizing the asymptotic minimax testing process is given in Algorithm~\ref{algorithm1}.

\subsection{KL-divergence Neighborhood}\label{sec6_1}
Consider the uncertainty classes
\begin{equation}\label{eq49}
{\mathscr{G}}_j=\{G_j:D_{\mathrm{KL}}(G_j,F_j)\leq \epsilon_j\},\quad j\in\{0,1\},
\end{equation}
which are induced by the KL-divergence
\begin{equation*}
D_{\mathrm{KL}}(G_j,F_j)=\int_{\Omega}\log\left(g_j/f_j\right)g_j d\mu,
\end{equation*}
where $F_j$ is the nominal distribution under $\mathcal{H}_j$. The KL-divergence is considered as the classical information divergence \cite{inftheory} and used in earlier works to create the uncertainty classes \cite{dabak,levy09}. It is a smooth distance, hence suitable to deal with modeling errors \cite{gul6}.

\subsubsection{Rate Minimizing Tests}
Asymptotically minimax robust tests for the KL-divergence neighborhood can be stated with the following theorem.
\begin{thm}\label{theorem02}
For the uncertainty classes given by \eqref{eq49}, the LFDs
\begin{align*}
\hat{g}_0&=\exp{\left[\frac{-\lambda_0-\mu_0}{\lambda_0}\right]} \exp{\left[\frac{(1-u)(\hat{g}_1/\hat{g}_0)^{u}}{\lambda_0}\right]} f_0,\\
\hat{g}_1&=\exp{\left[\frac{-\lambda_1-\mu_1}{\lambda_1}\right]} \exp{\left[\frac{u (\hat{g}_1/\hat{g}_0)^{-1+u}}{\lambda_1}\right]} f_1,
\end{align*}
with the robust likelihood ratio function
\begin{equation}\label{eq56}
\frac{\hat{g}_1}{\hat{g}_0}=\exp{\left[\frac{-\mu_1}{\lambda_1}+\frac{\mu_0}{\lambda_0}\right]}\exp{\left[\frac{u(\hat{g}_1/\hat{g}_0)^{-1+u}}{\lambda_1}+\frac{(-1+u)(\hat{g}_1/\hat{g}_0)^u}{\lambda_0}\right]}l
\end{equation}
provide a unique solution to \eqref{eq47}. Moreover, given $u$, the Lagrangian parameters $\lambda_0$ and $\lambda_1$, hence $\mu_0$ and $\mu_1$, can be obtained by solving
\begin{align}\label{eq59x3}
\int_{\Omega} r_0 \log\left(r_0/s_0\right)f_0 d\mu/s_0=\epsilon_0, \nonumber\\
\int_{\Omega} r_1 \log\left(r_1/s_1\right)f_1 d\mu/s_1=\epsilon_1, \nonumber\\
\hat{g}_1/\hat{g}_0=(r_1 s_0)/(r_0 s_1),
\end{align}
where
\begin{align*}
s_0(\lambda_0)=\int_{\Omega}r_0(\lambda_0,\hat{g}_1/\hat{g}_0)f_0 d\mu=\int_{\Omega}\exp\left[\frac{(1-u)(\hat{g}_1/\hat{g}_0)^u}{\lambda_0}\right]f_0 d\mu=\exp\left[\frac{\lambda_0+\mu_0}{\lambda_0}\right],\nonumber\\
s_1(\lambda_1)=\int_{\Omega}r_1(\lambda_1,\hat{g}_1/\hat{g}_0)f_1 d\mu=\int_{\Omega}\exp\left[\frac{u(\hat{g}_1/\hat{g}_0)^{-1+u}}{\lambda_1}\right]f_1 d\mu=\exp\left[\frac{\lambda_1+\mu_1}{\lambda_1}\right].
\end{align*}
\end{thm}
\begin{IEEEproof}
Consider the Lagrangian
\begin{equation}\label{eq50}
L_0(g_0,g_1,\lambda_0,\mu_0)=D_{u}(G_0,G_1)+\lambda_0(\epsilon_0-D_{\mathrm{KL}}(G_0,F_0))+\mu_0(1-\Upsilon(G_0)),
\end{equation}
where $\lambda_0$ and $\mu_0$ are the KKT multipliers. A solution to \eqref{eq50} can uniquely be determined, in case all KKT conditions are met \cite[Chapter 5]{bertsekas2003convex}, because $L_0$ is a  strictly concave functional of $g_0$, as $\partial^2 L_0/\partial g_0^2<0$ for every $\lambda_0>0$. Writing \eqref{eq50} explicitly, it follows that
\begin{equation}\label{eq51}
L_0(g_0,g_1,\lambda_0,\mu_0)=\int_{\Omega}\left[\left(\frac{g_1}{g_0}\right)^u -\lambda_0\log\left(\frac{g_0}{f_0}\right)-\mu_0\right]g_0 d\mu+\lambda_0\epsilon_0+\mu_0.
\end{equation}
Imposing the stationarity condition of KKT multipliers and hereby taking the G$\hat{\mbox{a}}$teaux's derivative of Equation~\eqref{eq51} in the direction of $\psi_0$, yields
\begin{equation*}
\int_{\Omega}\left[(1-u)\left(\frac{g_1}{g_0}\right)^u-\lambda_0\log\left(\frac{g_0}{f_0}\right)-\lambda_0-\mu_0\right]\psi_0 d\mu,
\end{equation*}
which implies
\begin{equation}\label{eq53}
(1-u){g_0}^{-u}{g_1}^{u}-\lambda_0\log g_0=\lambda_0+\mu_0-\lambda_0\log f_0,
\end{equation}
since $\psi_0$ is an arbitrary function. Similarly, taking the G$\hat{\mbox{a}}$teaux's derivative of
\begin{equation}\label{eq50x}
L_1(g_0,g_1,\lambda_1,\mu_1)=D_{u}(G_0,G_1)+\lambda_1(\epsilon_1-D_{\mathrm{KL}}(G_1,F_1))+\mu_1(1-\Upsilon(G_1)),
\end{equation}
with respect to $g_1$ in the direction of $\psi_1$ leads to
\begin{equation}\label{eq54}
u{g_1}^{-1+u}{g_0}^{1-u}-\lambda_1\log g_1=\lambda_1+\mu_1-\lambda_1\log f_1.
\end{equation}
From \eqref{eq53} and \eqref{eq54} the least favorable densities can be obtained as a functional of the robust likelihood ratio function $\hat{g}_1/\hat{g}_0$ and the nominal densities $f_0$ and $f_1$ as given in Theorem~\ref{theorem02}. Since the second Lagrangian $L_1$ is also strictly concave for every $\lambda_1>0$, the parametric forms of the LFDs are obtained uniquely. Similarly, for every pair of distribution functions, $(G_0,G_1)\in \mathscr{G}_0\times \mathscr{G}_1$, $D_u$ is strictly convex in $u\in(0,1)$. Hence, for $\hat{g}_0$ and $\hat{g}_1$ the minimizing $u$ is unique as well. In order to obtain the parameters, given any choice of $u$, originally there are four non-linear equations
\begin{align}\label{59x1}
\Upsilon(\hat{G}_0(\lambda_0,\mu_0,\lambda_1,\mu_1))&=1 \nonumber\\
\Upsilon(\hat{G}_1(\lambda_0,\mu_0,\lambda_1,\mu_1))&=1 \nonumber\\
D_{\mathrm{KL}}(\hat{G}_0(\lambda_0,\mu_0,\lambda_1,\mu_1),F_0)&=\epsilon_0 \nonumber\\
D_{\mathrm{KL}}(\hat{G}_1(\lambda_0,\mu_0,\lambda_1,\mu_1),F_1)&=\epsilon_1
\end{align}
which need to be solved together with \eqref{eq56}. These five equations can be reduced to three without any loss of generality. From the first and second equations we have $s_0$ and $s_1$ as functionals of $r_0$ and $r_1$, respectively, as given in Theorem~\ref{theorem02}. Using $(r_0,s_0)$ and $(r_1,s_1)$ in the last two equations of \eqref{59x1} as well as in \eqref{eq56} the three equations given in \eqref{eq59x3} can be obtained.
\end{IEEEproof}

\begin{rem}
By Theorem~\ref{theorem02} the strategy followed is first to perform the maximization and determine the LFDs in parametric forms and then the minimization to determine the minimizing $u\in(0,1)$. Although $D_u$ is convex for any pair of known densities, for the pair of LFDs given in parametric forms as in Theorem~\ref{theorem02} $D_u$ is not necessarily convex, i.e. it is non-trivial to show this via proving or disproving the positivity of second derivative and counterexamples are readily available, see Section~\ref{sec8}. This indicates that a reasonable approach to determine the minimizing $u\in(0,1)$ is solving the equations given by \eqref{eq59x3} together with a fourth equation to minimize the variable $u$.
\end{rem}

\subsubsection{Neyman-Pearson Tests}
The asymptotic NP-tests are designed in such a way that one of the error exponents has the highest exponential decay rate, while the other (although not wanted) has the lowest. For Type-I NP-tests the threshold is chosen as $t_0=\lim_{\epsilon\rightarrow 0}\mathbb{E}_{G_0}[\log\hat{l}(Y_1)]+\epsilon$ such that $P_F$ is asymptotically guaranteed to get below any $\epsilon>0$, while $P_M$ has the highest decay rate. Similarly, for Type-II NP-tests the threshold is chosen as $t_1=\lim_{\epsilon\rightarrow 0}\mathbb{E}_{G_1}[\log\hat{l}(Y_1)]-\epsilon$ such that $P_M$ is asymptotically guaranteed to get below any $\epsilon>0$, while $P_F$ has the highest decay rate. Using the thresholds $t_0$ and $t_1$ in \eqref{eq28} and keeping in mind that $I_1(t)=I_0(t)-t$, one can obtain for the Type-I NP-test $I_0(t_0)=0$ and $I_1(t_0)=-t_0=D_{\mathrm{KL}}(G_0,G_1)$ and for the Type-II NP-test, $I_1(t_1)=0$ and $I_0(t_1)=t_1=D_{\mathrm{KL}}(G_1,G_0)$. Hence, the minimax problem formulation becomes
\begin{equation}\label{eq60}
    \begin{aligned}[b]
        \text{Type-I NP-test:}\quad &
        \begin{aligned}[t]
            &\min_{(G_0,G_1)\in\mathscr{G}_0\times \mathscr{G}_1}D_{\mathrm{KL}}(G_0,G_1)
             \quad \text{s.t. $g_0>0$, $g_1>0$, $\Upsilon(G_0)=1$, $\Upsilon(G_1)=1$}\\
        \end{aligned}\\
        \text{Type-II NP-test:}\quad &
               \begin{aligned}[t]
          &\min_{(G_0,G_1)\in\mathscr{G}_0\times \mathscr{G}_1}D_{\mathrm{KL}}(G_1,G_0)
             \quad \text{s.t. $g_0>0$, $g_1>0$, $\Upsilon(G_0)=1$, $\Upsilon(G_1)=1$}
        \end{aligned}
    \end{aligned}
\end{equation}
A solution to the Type-I NP-test formulation can be stated with the following theorem.
\begin{thm}\label{theorem03np}
The LFDs of the asymptotically minimax robust Type-I NP-test are given by
\begin{align}\label{eq63}
\hat{g}_0&=\left(\lambda_1+\frac{\lambda_1}{\lambda_0}\right)^{-\frac{1}{\lambda_0}}\exp{\left[-1-{\frac{1+\mu_0}{\lambda_0}}\right]}W\left[\frac{\lambda_0 \exp{\left[{\frac{-\lambda_1-\lambda_1\mu_0+\lambda_0\mu_1}{(1+\lambda_0)\lambda_1}}\right]}l^{-\frac{\lambda_0}{1+\lambda_0}}}{(1+\lambda_0)\lambda_1}\right]^{-\frac{1}{\lambda_0}}f_0\nonumber\\
\hat{g}_1&=\left(\lambda_1+\frac{\lambda_1}{\lambda_0}\right)^{-\frac{1+\lambda_0}{\lambda_0}}\exp{\left[-1-{\frac{1+\mu_0}{\lambda_0}}\right]}W\left[\frac{\lambda_0 \exp{\left[{\frac{-\lambda_1-\lambda_1\mu_0+\lambda_0\mu_1}{(1+\lambda_0)\lambda_1}}\right]}l^{-\frac{\lambda_0}{1+\lambda_0}}}{(1+\lambda_0)\lambda_1}\right]^{-\frac{1+\lambda_0}{\lambda_0}}f_0
\end{align}
where $W$ is the Lambert-W function.
\end{thm}
\begin{IEEEproof}
The solution can again be obtained by KKT multipliers. Considering the Lagrangians
\begin{align}\label{eq61}
L_0(g_0,g_1,\lambda_0,\mu_0)=D_{\mathrm{KL}}(G_0,G_1)+\lambda_0(D_{\mathrm{KL}}(G_0,F_0)-\epsilon_0)+\mu_0(\Upsilon(G_0)-1),\nonumber\\
L_1(g_0,g_1,\lambda_1,\mu_1)=D_{\mathrm{KL}}(G_0,G_1)+\lambda_1(D_{\mathrm{KL}}(G_1,F_1)-\epsilon_1)+\mu_1(\Upsilon(G_1)-1),
\end{align}
and following the same steps as before, one can get, respectively,
\begin{align}\label{eq62}
g_1&= \exp{\left[{1+\lambda_0+\mu_0}\right]}g_0^{1+\lambda_0}{f_0}^{-\lambda_0},\\\label{eq62x1}
g_0&= g_1\left(\mu_1+\lambda_1\left(1+\log\left(g_1/f_1\right)\right)\right).
\end{align}
Solving \eqref{eq62} and \eqref{eq62x1} for $g_0$ and $g_1$, respectively, the least favorable densities of the asymptotically minimax robust Type-I NP-test can be obtained as given in Theorem~\ref{theorem02}.
\end{IEEEproof}

\begin{rem}\label{remnp}
The Type-II minimax robust NP-test can similarly be obtained either by following the same procedure for the objective function $D_{\mathrm{KL}}(G_1,G_0)$ or by considering the same equations given by \eqref{eq63}. To accomplish the latter one, before the optimization $\epsilon_0$ and $f_0$ need to be interchanged by $\epsilon_1$ and $f_1$ respectively, and after obtaining the LFDs, $\hat{g}_0$ needs to be interchanged by $\hat{g}_1$, cf. \eqref{eq60}. The related parameters can be obtained directly by solving the equations in \eqref{59x1} for the LFDs \eqref{eq63}. As a side note both NP-tests are the limiting tests of the rate minimizing asymptotically minimax robust test as
\begin{equation*}
\max_{(G_0,G_1)\in\mathscr{G}_0\times \mathscr{G}_1} D_u(G_0,G_1)\equiv \min_{(G_0,G_1)\in\mathscr{G}_0\times \mathscr{G}_1}D_\alpha(G_0,G_1), \quad\forall u=\alpha\in (0,1)
\end{equation*}
and $D_\alpha(G_0,G_1)$ converges to $D_{\mathrm{KL}}(G_0,G_1)$ and $D_{\mathrm{KL}}(G_1,G_0)$, respectively, for $\alpha\rightarrow\ 1$ and $\alpha\rightarrow\ 0$ by using the $7th$ property of $D_u$, see also Section~\ref{sec_alpha}.
\end{rem}
Interestingly, the robust likelihood ratio function $\hat l$ is a nonlinear functional of $l$ through $W$. Moreover, compared to the rate minimizing asymptotically minimax robust tests the LFDs of their NP-counterparts are given only as a functional of the nominal distributions without coupling with $\hat{g}_1/\hat{g}_0$. This simplification, however, results in a complication of the closed form LFDs. Note that the problem formulation given by \eqref{eq60} differs from that of Dabak's formulation \cite{dabak,dabak2}, see also \cite[pp. 250-255]{levy}, i.e. Dabak's test is the result of a joint minimization of $I_0(t_1)$ over all $G_1\in\mathscr{G}_1$ and $I_1(t_0)$ over all $G_0\in\mathscr{G}_0$, hence it yields simpler analytic forms for the LFDs, but not an asymptotically minimax robust test, see Section~\ref{sec8}. However, Dabak's test is surprisingly asymptotically minimax robust for the expected number of samples of the sequential probability ratio test \cite{gulbook,gul6}.

\subsection{$\alpha-$divergence Neighborhood}\label{sec_alpha}
An alternative for the choice of the uncertainty classes is through considering the $\alpha-$divergence,
\begin{equation*}
D_\alpha(G_j,F_j):=\frac{1}{\alpha(1-\alpha)}\left(\int_{\Omega}((1-\alpha)f_j+\alpha g_j -g_j^\alpha f_j^{1-\alpha}) d \mu\right),\quad\alpha\in\Omega\backslash \{0,1\}
\end{equation*}
which is a special case of the $f$-divergence and includes various distances as special cases \cite{liese87},\cite[p.1537]{entropy}, e.g. $D_{\mathrm{KL}}$ as $\alpha\rightarrow 1$. The LFDs resulting from the $\alpha-$divergence neighborhood is stated with the following theorem.
\begin{thm}\label{theorem04}
The least favorable densities of the $\alpha-$divergence neighborhood can be given as
\begin{align}\label{eq67}
\hat{g}_0=&\left(\frac{1-\alpha}{\lambda_0}\left(\mu_0-(1-u)\left(\frac{\hat{g}_1}{\hat{g}_0}\right)^u\right)+1\right)^{\frac{1}{\alpha-1}}f_0,\nonumber\\
\hat{g}_1=&\left(\frac{1-\alpha}{\lambda_1}\left(\mu_1-u\left(\frac{\hat{g}_1}{\hat{g}_0}\right)^{-1+u}\right)+1\right)^{\frac{1}{\alpha-1}}f_1,
\end{align}
where
\begin{equation}\label{eq68}
\frac{\hat{g}_1}{\hat{g}_0}=\left(\frac{\frac{1-\alpha}{\lambda_1}\left(\mu_1-u\left(\frac{\hat{g}_1}{\hat{g}_0}\right)^{-1+u}\right)+1}{\frac{1-\alpha}{\lambda_0}\left(\mu_0-(1-u)\left(\frac{g_1}{g_0}\right)^u\right)+1}\right)^{\frac{1}{\alpha-1}}l.
\end{equation}
\end{thm}
\begin{IEEEproof}
The proof follows by using the same Lagrangian approach as before, i.e. by replacing $D_{\mathrm{KL}}$ with $D_\alpha$ in \eqref{eq50} and \eqref{eq50x} and performing the derivations.
\end{IEEEproof}
The parameters are obtained similarly by solving four non-linear equations coupled with \eqref{eq68}.
\subsubsection{Special Cases}
The LFDs in \eqref{eq67} can explicitly be written for some special choices of the parameters. For instance if $\alpha=1/2$ and $u=1/2$, the robust likelihood ratio function simplifies to
\begin{equation*}
\hat{l}=\sum_{k=0}^2 c_k l^{\frac{k}{2}} 
\end{equation*}
where
\begin{align*}
c_0=\frac{0.25{\lambda_0}^2}{4{\lambda_0}^2{\lambda_1}^2+{\lambda_0}^2{\mu_0}^2+4{\lambda_0}^2{\lambda_1}\mu_0},\nonumber\\
c_1=\frac{\lambda_0(2\lambda_0\lambda_1+\mu_0\lambda_1)}{4{\lambda_0}^2{\lambda_1}^2+{\lambda_0}^2{\mu_0}^2+4{\lambda_0}^2{\lambda_1}\mu_0},\nonumber\\
c_2=\frac{2\lambda_0\lambda_1+\mu_0\lambda_1}{4{\lambda_0}^2{\lambda_1}^2+{\lambda_0}^2{\mu_0}^2+4{\lambda_0}^2{\lambda_1}\mu_0}.
\end{align*}
\subsection{Symmetric $\alpha-$divergence Neighborhood}
The $\alpha-$divergence is not a symmetric distance in general, where $\alpha=1/2$ is an exception. A symmetrized version of the $\alpha-$divergence,
\begin{equation*}
D_\alpha^s(G_j,F_j)=\frac{1}{\alpha(1-\alpha)}\left(\int_{\Omega}((f_j^\alpha -g_j^\alpha)(f_j^{1-\alpha}-g_j^{1-\alpha})) d \mu\right),\quad\alpha\in\Omega\backslash \{0,1\}
\end{equation*}
can be obtained by
\begin{equation*}
D_\alpha^s(G_j,F_j)=D_\alpha(G_j,F_j)+D_\alpha(F_j,G_j).
\end{equation*}
Symmetric $\alpha-$divergence is also an $f$-divergence \cite{gulbook,sharp} including various other symmetric divergences such as the symmetric Chi-squared- or the symmetric KL-divergence \cite{entropy}. The LFDs resulting from the symmetric $\alpha-$divergence neighborhood is stated with the following theorem.
\begin{thm}
The LFDs of the symmetric $\alpha-$divergence neighborhood can be written in terms of two coupled equations
\begin{align}\label{eq74}
&\frac{\lambda_0}{1-\alpha}\left(\frac{\hat{g}_0}{f_0}\right)^{2\alpha-1}+\left((1-u)\left(\frac{\hat{g}_1}{\hat{g}_0}\right)^{u}-\frac{\lambda_0}{\alpha(1-\alpha)}-\mu_0\right)\left(\frac{\hat{g}_0}{f_0}\right)^{\alpha}+\frac{\lambda_0}{\alpha}=0,\\\label{eq74x}
&\frac{\lambda_1}{1-\alpha}\left(\frac{\hat{g}_1}{f_1}\right)^{2\alpha-1}+\left(u\left(\frac{\hat{g}_1}{\hat{g}_0}\right)^{u-1}-\frac{\lambda_1}{\alpha(1-\alpha)}-\mu_1\right)\left(\frac{\hat{g}_1}{f_1}\right)^{\alpha}+\frac{\lambda_1}{\alpha}=0.
\end{align}
\end{thm}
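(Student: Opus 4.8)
The plan is to reuse the Karush--Kuhn--Tucker machinery already developed for the KL- and $\alpha$-divergence neighborhoods (cf.\ the proof of Theorem~\ref{theorem01} and the derivation leading to \eqref{eq67}); the only genuinely new ingredient is the algebraic shape of the stationarity condition produced by the symmetrized kernel. First I would write, for the maximization stage of \eqref{eq47}, the Lagrangian
\[
L_0(g_0,g_1,\lambda_0,\mu_0)=D_u(G_0,G_1)+\lambda_0\bigl(\epsilon_0-D_\alpha(G_0,F_0)\bigr)+\mu_0\bigl(1-\Upsilon(g_0)\bigr),
\]
exactly as in \eqref{eq50} but with the symmetrized $\alpha$-divergence in place of $D_{\mathrm{KL}}$. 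Since the symmetrized $\alpha$-divergence is itself an $f$-divergence it is convex in $g_0$, so $-\lambda_0 D_\alpha$ is strictly concave for every $\lambda_0>0$; combined with the concavity of $D_u$ in $g_0$ (point~3 of Table~\ref{tab1du}) and the linearity of $\Upsilon$, this makes $L_0$ strictly concave, so that a stationary point is the unique maximizer and the KKT conditions are both necessary and sufficient.

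Next I would expand the integrand as $(f_0^\alpha-g_0^\alpha)(f_0^{1-\alpha}-g_0^{1-\alpha})=f_0+g_0-f_0^\alpha g_0^{1-\alpha}-g_0^\alpha f_0^{1-\alpha}$ and take the G\^ateaux derivative of $L_0$ in an arbitrary direction $\psi_0$, as was done to reach \eqref{eq53}. The $D_u$ term contributes $(1-u)(g_1/g_0)^u$, the normalization contributes $-\mu_0$, and the symmetrized divergence contributes $-\tfrac{\lambda_0}{\alpha(1-\alpha)}\bigl[1-(1-\alpha)(g_0/f_0)^{-\alpha}-\alpha(g_0/f_0)^{\alpha-1}\bigr]$. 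Because $\psi_0$ is arbitrary, the pointwise stationarity equation is
\[
\frac{\lambda_0}{1-\alpha}\Bigl(\frac{g_0}{f_0}\Bigr)^{\alpha-1}+\frac{\lambda_0}{\alpha}\Bigl(\frac{g_0}{f_0}\Bigr)^{-\alpha}+(1-u)\Bigl(\frac{g_1}{g_0}\Bigr)^{u}-\frac{\lambda_0}{\alpha(1-\alpha)}-\mu_0=0 .
\]
Multiplying through by $(g_0/f_0)^{\alpha}$ sends the three exponents $\alpha-1,\,-\alpha,\,0$ to $2\alpha-1,\,0,\,\alpha$ and reproduces \eqref{eq74} verbatim. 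The companion relation \eqref{eq74x} follows from the identical calculation on the second Lagrangian, with $D_\alpha(G_1,F_1)$, $\lambda_1$, $\mu_1$ replacing their index-$0$ counterparts; there the $D_u$ term differentiates to $u(g_1/g_0)^{u-1}$, which accounts for the exponent $u-1$ appearing in \eqref{eq74x}.

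The step I expect to demand the most care is interpretive rather than computational. Unlike the closed forms \eqref{eq63} and \eqref{eq67}, the trinomial in $(g_0/f_0)$ carrying the (generally irrational) exponents $2\alpha-1$ and $\alpha$ cannot be inverted in elementary form for arbitrary $\alpha$, so the theorem can only characterize the LFDs \emph{implicitly} through the two coupled equations. I would therefore close by noting that \eqref{eq74}--\eqref{eq74x}, together with the two normalizations $\Upsilon(\hat{g}_0)=\Upsilon(\hat{g}_1)=1$ and the two active constraints $D_\alpha(\hat{G}_j,F_j)=\epsilon_j$, constitute a determined system for $\lambda_0,\lambda_1,\mu_0,\mu_1$ and the ratio $\hat{g}_1/\hat{g}_0$, to be solved numerically as in the equation-solving stage of Theorem~\ref{theorem01}; the strict concavity established in the first step guarantees that the resulting pair is the unique LFD whenever a solution exists.
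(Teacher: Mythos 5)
Your proposal is correct and follows exactly the route the paper intends: the paper's own proof is the one-line remark that ``the proof follows by using the same Lagrangian procedure as before,'' and your expansion of the symmetrized kernel, the G\^ateaux stationarity condition, and the multiplication by $(\hat{g}_0/f_0)^{\alpha}$ reproduce \eqref{eq74}--\eqref{eq74x} verbatim. You in fact supply more detail (the explicit pointwise derivative and the concavity/uniqueness remark) than the paper does.
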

\begin{IEEEproof}
The proof follows by using the same Lagrangian procedure as before.
\end{IEEEproof}

In general, \eqref{eq74} and \eqref{eq74x} need to be solved jointly with four non-linear equations obtained from the Lagrangian constraints in order to determine the parameters. It is however possible to reduce the total number of equations to five if $\alpha$ is given. The idea is to solve \eqref{eq74} and \eqref{eq74x} such that $\hat{g}_0/f_0=h_0(\hat{g}_1/\hat{g}_0)$ and $\hat{g}_1/f_1=h_1(\hat{g}_1/\hat{g}_0)$, respectively. Hence, $\hat{g}_1/\hat{g}_0=(h_1/h_0)l$ is the coupling equation, where $h_0$ and $h_1$ are some functions.

\subsection{Total Variation Neighborhood}
The total variation neighborhood is defined as
\begin{equation*}
{\mathscr{G}}_j=\{G_j:D_{\mathrm{TV}}(G_j,F_j)\leq \epsilon_j\},\quad j\in\{0,1\},
\end{equation*}
where
\begin{equation*}
D_{\mathrm{TV}}(G_j,F_j)=\frac{1}{2}\int_{\Omega}|g_j-f_j| d\mu.
\end{equation*}
The LFDs and the corresponding minimax robust test for the uncertainty classes created by the total variation neighborhood were found earlier by Huber \cite{hube65}. However, the design approach is heuristic, many choices of the parameters and/or functions are unknown and the test is obtained under the assumption that the robustness parameters are equal $\epsilon_0=\epsilon_1$. Since asymptotic minimax robustness is a necessary condition for all-sample minimax robustness, the minimax robust test resulting from the total variation neighborhood can also be analytically derived following the same design procedure as before. The following theorem substantiate this claim.
\begin{thm}\label{theorem03}
For the total variation neighborhood, the robust LRF is given by
\begin{align}\label{eq80}
\frac{\hat{g}_1}{\hat{g}_0} &= \begin{cases} t_l, & l<(k_0 t_l)/k_1,  \\
\frac{k_1}{k_0}l, &(k_0 t_l)/k_1\leq l\leq (k_0 t_u)/k_1 \\
t_u, & l>(k_0 t_u)/k_1 \end{cases},
\end{align}
where $t_l$, $t_u$, $k_0$ and $k_1$ are some real numbers. Moreover, the LFDs can be chosen as
\begin{align*}
\hat{g}_0 &= \begin{cases} (f_0+f_1)/c_1, & l<(k_0 t_l)/k_1,  \\
k_0f_0, &(k_0 t_l)/k_1\leq l\leq (k_0 t_u)/k_1 \\
(f_0+f_1)/c_2, & l>(k_0 t_u)/k_1 \end{cases}
\end{align*}
and
\begin{align*}
\hat{g}_1 &= \begin{cases} t_l(f_0+f_1)/c_1, &l<(k_0 t_l)/k_1,  \\
k_1f_1, &(k_0 t_l)/k_1\leq l\leq (k_0 t_u)/k_1 \\
t_u(f_0+f_1)/c_2, & l>(k_0 t_u)/k_1 \end{cases}
\end{align*}
where
\begin{equation*}
c_1=\frac{1}{k_0}+\frac{t_l}{k_1}\quad \mbox{and}\quad c_2=\frac{1}{k_0}+\frac{t_u}{k_1}.
\end{equation*}
\end{thm}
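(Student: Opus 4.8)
The plan is to follow the same KKT/Lagrangian procedure used for the KL- and $\alpha$-divergence neighborhoods, but to adapt it to the two features that make the total variation case special: the non-smoothness of $D_{\mathrm{TV}}$ and the non-uniqueness of its maximizer. First I would set up, for fixed $g_1=\hat g_1$, the Lagrangian for the $g_0$-maximization in \eqref{eq47},
\[
L_0=\int_{\mathcal Y} g_1^{u}g_0^{1-u}\,\mathrm d\mu+\lambda_0\!\left(\epsilon_0-\tfrac12\int_{\mathcal Y}|g_0-f_0|\,\mathrm d\mu\right)+\mu_0\!\left(1-\int_{\mathcal Y}g_0\,\mathrm d\mu\right),
\]
and the analogous $L_1$ for $g_1$. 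Since $g_0\mapsto|g_0-f_0|$ is not differentiable on the set $\{g_0=f_0\}$, the ordinary G\^ateaux derivative used in Theorem~\ref{theorem01} must be replaced by the subdifferential, whose value is $\mathrm{sgn}(g_0-f_0)\in\{-1,+1\}$ off that set and the whole interval $[-1,1]$ on it.

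The stationarity condition then reads $(1-u)(\hat g_1/\hat g_0)^{u}=\mu_0+\tfrac{\lambda_0}{2}\,\mathrm{sgn}(\hat g_0-f_0)$ pointwise, so the robust likelihood ratio $\hat g_1/\hat g_0$ is forced to be a two-sided clipped function: it is pinned to a lower constant $t_l$ on the set where the constraint drives $\hat g_0$ below $f_0$, to an upper constant $t_u$ where it drives $\hat g_0$ above $f_0$, and it follows a scaled nominal ratio on the set where the total-variation constraint is inactive. Imposing the same analysis on $L_1$ produces matching clip levels, and requiring continuity of the ratio across the breakpoints fixes them at $l=(k_0t_l)/k_1$ and $l=(k_0t_u)/k_1$, giving exactly the clipped form \eqref{eq80}. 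This step recovers the \emph{test}, which is what is actually unique.

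Next I would produce the explicit LFDs by the natural mass-transport choice: leave the nominals proportional on the middle (unclipped) region, $\hat g_0=k_0 f_0$ and $\hat g_1=k_1 f_1$, and spread the displaced mass over the tails as multiples of the mixture $f_0+f_1$ so that the ratio is flat there. The constants $c_1,c_2$ are then dictated by local normalization of the tail pieces, and the four scalars $t_l,t_u,k_0,k_1$ are pinned down by the two normalizations $\int\hat g_j\,\mathrm d\mu=1$ together with the two active constraints $D_{\mathrm{TV}}(\hat G_j,F_j)=\epsilon_j$, $j\in\{0,1\}$; this is where the asymmetry $\epsilon_0\neq\epsilon_1$ enters and breaks the slope-one middle that Huber obtained.

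The main obstacle is optimality rather than feasibility: because $D_{\mathrm{TV}}$ is convex but not \emph{strictly} convex, the maximizer of $D_u$ over the ball is not unique, so the proposed densities must be shown to attain the maximum and not merely to be a stationary pair. I would close this gap exactly as Huber did, by verification of the saddle value \eqref{eq46}: with $u^{*}$ the minimizer obtained from the general theory, I would check that no competing $(g_0,g_1)\in\mathscr G_0\times\mathscr G_1$ raises $D_{u^{*}}$ above $D_{u^{*}}(\hat G_0,\hat G_1)$, using the sign pattern of $\mathrm{sgn}(\hat g_j-f_j)$ across the three regions and complementary slackness. Equivalently, and more cleanly, I would verify the stochastic-dominance inequalities \eqref{eq13x} for the clipped $\hat l$; by Theorem~\ref{thm1} these are equivalent to $(\hat G_0,\hat G_1)$ minimizing every $D_f$, in particular maximizing $D_u$, which yields all-sample minimax robustness and hence, through Proposition~\ref{prop2}, the asymptotic minimax robustness claimed.
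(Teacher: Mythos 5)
Your overall route is the paper's: the same pair of Lagrangians with the total variation constraints, a case analysis driven by the non-smoothness of the $|g_j-f_j|$ term, and a separate, explicitly non-unique choice of LFDs whose free parameters are fixed by the two normalizations and the two active distance constraints. The verification step you append at the end (checking \eqref{eq13x} and invoking Theorem~\ref{thm1} and Proposition~\ref{prop2}) is not in the paper's proof --- the paper stops at stationarity and at the remark that the resulting LRF is independent of $u$ --- and it is a reasonable strengthening, since stationarity of a problem that is not strictly concave does not by itself certify that the proposed pair attains the maximum.

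There is, however, a concrete gap in your middle-region analysis. You anchor the subdifferential at $f_0$, so your stationarity condition $(1-u)(\hat g_1/\hat g_0)^u=\mu_0+\tfrac{\lambda_0}{2}\,\mathrm{sgn}(\hat g_0-f_0)$ pins the ratio to a constant on every set where $\hat g_0\neq f_0$ and leaves it free only on $\{\hat g_0=f_0\}$. But the claimed solution has $\hat g_0=k_0f_0$ and $\hat g_1=k_1f_1$ on the middle region with $k_0,k_1$ in general different from $1$; this is precisely the extra freedom needed to satisfy four scalar constraints when $\epsilon_0\neq\epsilon_1$, and it is what distinguishes the theorem from Huber's $\epsilon_0=\epsilon_1$ result. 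On that region $\mathrm{sgn}(\hat g_0-f_0)$ is a fixed sign, so your condition forces $\hat g_1/\hat g_0$ to be constant there, contradicting the non-constant branch $\frac{k_1}{k_0}l$ of \eqref{eq80}. As written, your derivation only recovers the slope-one middle ($k_0=k_1=1$), which generically cannot meet all four constraints. The paper sidesteps this by performing the case split at $g_j=k_jf_j$ rather than at $g_j=f_j$: its Case~1 imposes no derivative condition on the set $\{g_j=k_jf_j\}$, and the stationarity equations \eqref{eq77}--\eqref{eq78} are applied only off that set. You need either to adopt and justify that case split, or to explain how a proportionality constant different from one survives your stationarity condition. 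A smaller point: $c_1$ and $c_2$ are not produced by ``local normalization'' of the tail pieces; they are fixed by requiring $\hat g_0$ and $\hat g_1$ to be continuous at the two breakpoints $l=(k_0t_l)/k_1$ and $l=(k_0t_u)/k_1$, which is how the paper obtains $c_1=1/k_0+t_l/k_1$ and $c_2=1/k_0+t_u/k_1$.
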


\begin{IEEEproof}
Consider the Lagrangians,
\begin{align}\label{eq75}
L_0(g_0,g_1,\lambda_0,\mu_0)&=D_u(G_0,G_1)+\lambda_0 (D_{\mathrm{TV}}(G_0,F_0)-\epsilon_0)+\mu_0(\Upsilon(G_0)-1)),\nonumber\\
L_1(g_0,g_1,\lambda_1,\mu_1)&=D_u(G_0,G_1)+\lambda_1 (D_{\mathrm{TV}}(G_1,F_1)-\epsilon_1)+\mu_1(\Upsilon(G_1)-1)).
\end{align}
There are three cases of interest.

\subsubsection*{\text{\bf{Case 1}.} $g_j=k_jf_j:$}
\noindent Here, no derivatives are necessary and we simply have $g_0=k_0f_0$ and $g_1=k_1f_1$.
\subsubsection*{\text{\bf{Case 2}.} $g_j<k_jf_j:$}
\noindent Taking the G$\hat{\mbox{a}}$teaux's derivatives of the Lagrangians, $L_0$ and $L_1$, respectively, leads to
\begin{align}\label{eq77}
\int ((1-u)(g_1/g_0)^u+\mu_0-\lambda_0)\psi d \mu=0,\nonumber\\
\int (u(g_1/g_0)^{u-1}+\mu_1-\lambda_1)\psi d \mu=0.
\end{align}
\subsubsection*{\text{\bf{Case 3}.} $g_j>k_jf_j:$}
\noindent Similarly we get
\begin{align}\label{eq78}
\int ((1-u)(g_1/g_0)^u+\mu_0+\lambda_0)\psi d \mu=0,\nonumber\\
\int (u(g_1/g_0)^{u-1}+\mu_1+\lambda_1)\psi d \mu=0.
\end{align}
Since Case 2 and Case 3 cannot coexist, from these three cases, at most three different disjoint sets can be defined:
\begin{align}\label{eq785}
A_1&=\left\{y:g_0=k_0f_0,g_1>k_1f_1\right\}\equiv\left\{y:g_1=k_1f_1,g_0<k_0f_0\right\}\equiv\left\{y:g_1>k_1f_1,g_0<k_0f_0\right\},\nonumber\\
A_2&=\left\{y:g_0=k_0f_0,g_1=k_1f_1\right\}, \nonumber\\
A_3&=\left\{y:g_0=k_0f_0,g_1<k_1f_1\right\}\equiv \left\{y:g_1=k_1f_1,g_0>k_0f_0\right\}\equiv\left\{y:g_1<k_1f_1,g_0>k_0f_0\right\}.
\end{align}
Solving the equations from \eqref{eq77} and \eqref{eq78} together with the condition in Case 1, we get
\begin{align}\label{eq79}
\frac{\hat{g}_1}{\hat{g}_0} &= \begin{cases} \frac{u(-\lambda_0+\mu_0)}{(1-u)(\lambda_1+\mu_1)}, &A_1  \\
\frac{k_1}{k_0}\frac{f_1}{f_0}, &A_2  \\
\frac{u(-\lambda_0-\mu_0)}{(1-u)(\lambda_1-\mu_1)}, &A_3 \end{cases}
\end{align}
where
\begin{equation*}
u=\frac{\log\left(\frac{\lambda_0+\mu_0}{-\lambda_0+\mu_0}\right)}{\log\left(\frac{\lambda_0+\mu_0}{-\lambda_0+\mu_0}\right)+\log\left(\frac{-\lambda_1+\mu_1}{\lambda_1+\mu_1}\right)}.
\end{equation*}
The robust LRF given in Theorem~\ref{theorem03} is then immediate by using \eqref{eq785} in \eqref{eq79}.\\
Clearly, the minimax robust test must be unique. However, the Lagrangian approach considered imposes no constraints on the choice of the LFDs as long as the LFDs yield the robust likelihood ratio function given by \eqref{eq80}. As a result, the LFDs can be chosen as given in Theorem~\ref{theorem03}. In order $\hat{g}_0$ and $\hat{g}_1$ to be continuous, the limits from the left and right should agree for two meeting points of the piece-wise defined functions. This implies:
\begin{align*}
l=&\frac{k_0 t_l}{k_1}\Longrightarrow f_1=f_0 \frac{k_0 t_l}{k_1}\quad\mbox{and}\quad k_0f_0=\frac{f_0+f_1}{c_1}\Longrightarrow c_1=\frac{1}{k_0}+\frac{t_l}{k_1},\\
l=&\frac{k_0 t_u}{k_1}\Longrightarrow f_1=f_0 \frac{k_0 t_u}{k_1}\quad\mbox{and}\quad k_0f_0=\frac{f_0+f_1}{c_2}\Longrightarrow c_2=\frac{1}{k_0}+\frac{t_u}{k_1}.
\end{align*}
Hence, the proof is completed.
\end{IEEEproof}

\begin{rem}
In Theorem~\ref{theorem03}, $\hat{g}_0$ and $\hat{g}_1$ are obtained in four parameters ($t_l$, $t_u$, $k_0$ and $k_1$). The parameters can again be determined by imposing the four constraints defined by \eqref{eq75}, cf. \eqref{59x1}. These results generalize Huber's results allowing the robustness parameters to be chosen without the restriction of $\epsilon_0=\epsilon_1$ \cite{hube65}. Moreover, it is clear why the robust test is unique, the densities are not necessarily and the parameters $c_1$ are $c_2$ are chosen as such. Additionally, the robust likelihood ratio test is independent of $u$. This result is in line with Theorem~\ref{thm1}
\end{rem}

\subsection{Band Model}\label{sec6_band}
So far, the nominal distributions have been assumed to be known or could roughly be determined before constructing the uncertainty classes. In fact, depending on the application nominal distributions may also be unknown; for example only partial statistics of the data samples may be available \cite{moment,vastola} or the shape of the actual distributions may lie within a given band \cite{kassamband}. These cases will be studied here and in the following two sections.\\
The band model is given by the uncertainty classes
\begin{equation}\label{eq84}
\mathscr{G}_j=\left\{G_j\in\mathscr{M}: g_{j}^L \leq g_j  \leq g_{j}^U \right\}
\end{equation}
where $\mathscr{M}$ is the set of all distribution functions on $\Omega$, and $g_{j}^L$ and $g_{j}^U$ are non-negative lower and upper bounding functions such that $\mathscr{G}_0$ and $\mathscr{G}_1$ are nonempty sets. This implies
\begin{equation*}
\int_{\Omega} g_{j}^L d\mu \leq 1\leq \int_{\Omega} g_{j}^U d\mu,\quad j\in\{0,1\}.
\end{equation*}
Moreover, $g_{j}^L$ and $g_{j}^U$ should be chosen such that $g_0$ and $g_1$ are distinct density functions, if not $\mathscr{G}_0\cap\mathscr{G}_1\neq \emptyset$ and minimax hypothesis testing is not possible.
Theoretically, there are two main reasons to study band models. First, the band models are not equivalent to distance based uncertainty classes introduced so far. Because, distribution functions which are not absolutely continuous with respect to nominal distributions can belong to the band model, while this is not possible for the $f$-divergence based uncertainty classes. This result also includes the total variation distance since for any chosen total variation based uncertainty class, the band model should accept $g_j^L=0$ and $g_j^U=\infty$. Otherwise, there are density functions of type $\alpha g_j+(1-\alpha)\delta_x$, where $\delta_x$ is a dirac delta function at $y=x$, which belong to the total variation based uncertainty classes but not to the band model. On the other hand, choosing $g_j^L=0$ and $g_j^U=\infty$ defines the set of all density functions on $\Omega$, which is definitely not produced by the total variation distance unless $\epsilon_j$ are infinite.\\
The second reason to consider the band models is that the band models are in general capacity classes, however, whether they are two alternating has been unclear \cite{vastola}. Therefore, the theory introduced by Huber was not directly applied to band models \cite{hube73}. In fact, Huber has never defined the LFDs explicitly in \cite{hube73}, i.e. the LFDs are the distributions which maximize a version of the $f$-divergence over all distributions belonging to the related uncertainty classes.\\
Practically, the main motivation behind considering band models is that for some applications the density functions estimated from the training data are expressed as lying within a confidence interval and for these applications the band classes are the natural uncertainty model \cite{kassamband}.\\
The asymptotically minimax robust test and least favorable distributions arising from the band model can similarly be obtained as before. Consider the Lagrangians:
\begin{align*}
L_0(g_0,g_1,\lambda_0,\theta_0,\mu_0)=D_{u}(G_0,G_1)+\lambda_0(g_0-g_0^L)+\nu_0(g_0^U-g_0)+\mu_0(\Upsilon(G_0)-1),\nonumber\\
L_1(g_0,g_1,\lambda_1,\theta_1,\mu_1)=D_{u}(G_0,G_1)+\lambda_1(g_1-g_1^L)+\nu_1(g_1^U-g_1)+\mu_1(\Upsilon(G_1)-1),
\end{align*}
where $\mu_j$ are scalar, and $\lambda_j$ and $\nu_j$ are functional Langrangian multipliers. Taking the Gateaux derivatives of the Lagrangians, at the direction of unit area integrable functions $\psi_0$ and $\psi_1$, respectively, leads to
\begin{align}\label{eq87}
\frac{\partial L_0}{\partial g_0}=\int\left((1-u)\left(\frac{g_1}{g_0}\right)^u+\lambda_0-\nu_0+\mu_0\right)\psi_0 d\mu=0,\nonumber\\
\frac{\partial L_1}{\partial g_1}=\int\left(u\left(\frac{g_1}{g_0}\right)^{u-1}+\lambda_1-\nu_1+\mu_1\right)\psi_1 d\mu=0,
\end{align}
Three cases can separately be investigated.
\subsubsection*{\text{\bf{Case 1}.} $g_0^U=\infty$ and $g_1^U=\infty$ \text{(no upper bounding functions):}}
\noindent In this case, letting $g_{j}^L=(1-\epsilon_j)f_j$, the band model can equivalently be written as the lower $\epsilon$-contamination model
\begin{equation*}
\mathscr{G}_j^{\epsilon^{-}}=\left\{G_j: G_j=(1-\epsilon_j)F_j+\epsilon_j H, H\in\mathscr{M} \right\}
\end{equation*}
where $f_j$ are the nominal densities and $0\leq\epsilon_j<1$ \cite{kassamband}. Since we have $\nu_0=0$ and $\nu_1=0$ everywhere, and hence, no constraints regarding the upper bounding functions are in effect, there are four conditions regarding the Lagrangians
\begin{align*}
L_0:&\quad g_0=g_0^L\quad \mbox{on}\quad A_0\quad \mbox{and}\quad g_0>g_0^L\quad \mbox{on}\quad \Omega\backslash A_0,\nonumber\\
L_1:&\quad g_1=g_1^L\quad \mbox{on}\quad A_1\quad \mbox{and}\quad g_1>g_1^L\quad \mbox{on}\quad \Omega\backslash A_1.
\end{align*}
The integrals in \eqref{eq87} are defined for $g_0>g_0^L$ and $g_1>g_1^L$, respectively. Since $\nu_0=0$ and $\nu_1=0$ everywhere, with the assumption that $\lambda_j$ are constant functions, it is the case that
\begin{align}\label{eq89}
\frac{g_1}{g_0}=\frac{1}{k_2}\quad \mbox{on}\quad \bar{A}_0=\Omega\backslash A_0=\{y:g_0>g_0^L\},\nonumber\\
\frac{g_1}{g_0}=k_1\quad \mbox{on}\quad \bar{A}_1=\Omega\backslash A_1=\{y:g_1>g_1^L\},
\end{align}
where $k_1$ and $k_2$ are some positive constants.

\begin{thm}\label{theorem1}
From \eqref{eq89}, it follows that the LFDs and the corresponding likelihood ratio function are unique and given by
\begin{equation}\label{eq90}
\hat{g}_0=\begin{cases}
g_0^L, &  y\in A_0 \\
k_2g_1^L, &  y\in \bar{A}_0
\end{cases},\quad
\hat{g}_1=\begin{cases}
g_1^L, &  y\in A_1 \\
k_1g_0^L, &  y\in \bar{A}_1
\end{cases},
\end{equation}
and
\begin{equation*}
\frac{\hat{g}_1}{\hat{g}_0}=\begin{cases}
\frac{1}{k_2}, &  y\in \bar{A}_0\cap A_1 \\
\frac{g_1^L}{g_0^L}, &  y\in A_0\cap A_1\\
k_1, &  y\in A_0\cap \bar{A}_1
\end{cases}.
\end{equation*}
\end{thm}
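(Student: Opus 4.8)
The plan is to turn the two stationarity relations in \eqref{eq89} into an explicit piecewise description of $\hat g_0$ and $\hat g_1$; the structural crux is that the two free sets $\bar A_0=\{g_0>g_0^L\}$ and $\bar A_1=\{g_1>g_1^L\}$ are $\mu$-essentially disjoint. First I would note that on $\bar A_0\cap\bar A_1$ both relations of \eqref{eq89} must hold at once, which forces $1/k_2=k_1$. Since the constants $1/k_2$ and $k_1$ are fixed globally by the constant multipliers $\lambda_j,\mu_j$, I would argue that $k_1=1/k_2$ is impossible under the standing requirement that $g_0$ and $g_1$ be distinct densities, and therefore conclude $\mu(\bar A_0\cap\bar A_1)=0$, i.e. $\bar A_0\subseteq A_1$ and $\bar A_1\subseteq A_0$ up to a null set.

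Granting disjointness, the remainder is substitution. On $\bar A_0$ we then have $g_1=g_1^L$, so $g_1/g_0=1/k_2$ yields $\hat g_0=k_2 g_1^L$, while $g_0=g_0^L$ on $A_0$ by definition; this gives the stated $\hat g_0$. Symmetrically, on $\bar A_1$ we have $g_0=g_0^L$ and $g_1/g_0=k_1$, so $\hat g_1=k_1 g_0^L$, with $g_1=g_1^L$ on $A_1$. Since the overlap is null, $\mathcal Y$ decomposes $\mu$-a.e. into the three disjoint pieces $\bar A_0\cap A_1$, $A_0\cap A_1$ and $A_0\cap\bar A_1$, and evaluating $\hat g_1/\hat g_0$ on each returns $1/k_2$, $g_1^L/g_0^L$ and $k_1$, respectively, which is exactly the claimed likelihood ratio. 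I would also check the consistency inequalities $k_2 g_1^L>g_0^L$ on $\bar A_0$ and $k_1 g_0^L>g_1^L$ on $\bar A_1$, which identify the three pieces with $\{g_1^L/g_0^L>1/k_2\}$, $\{k_1\le g_1^L/g_0^L\le 1/k_2\}$ and $\{g_1^L/g_0^L<k_1\}$ and exhibit $\hat g_1/\hat g_0$ as the clipping of $g_1^L/g_0^L$ to $[k_1,1/k_2]$.

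For uniqueness I would show the entire configuration is pinned by the two normalization constraints. Once $k_1$ and $k_2$ are fixed, the three regions and the densities on them are determined; moreover $\int\hat g_0\,\mathrm d\mu$ increases with $k_2$ and $\int\hat g_1\,\mathrm d\mu$ increases with $k_1$, so $\int\hat g_0\,\mathrm d\mu=\int\hat g_1\,\mathrm d\mu=1$ can hold for at most one admissible pair $(k_1,k_2)$. Together with the concavity of $D_u$ in $(g_0,g_1)$ from item~3 of Table~\ref{tab1du}, which makes the KKT point a global maximizer, this forces \eqref{eq90} to be the unique solution.

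The disjointness step is the part I expect to be the main obstacle, since it is the only one that is not mechanical: it requires excluding the degenerate case $k_1=1/k_2$ and, more precisely, securing the ordering $k_1<1/k_2$ so that the clipping interval is nonempty. The cleanest route is the degeneracy argument: if $k_1=1/k_2$, the clipping interval collapses to a point, $\hat g_1/\hat g_0$ becomes constant, and normalization then forces $\hat g_0=\hat g_1$ a.e., contradicting the assumption that $g_0^L,g_1^L,g_0^U,g_1^U$ are chosen so that $\mathscr G_0\cap\mathscr G_1=\emptyset$. I would take care to present this without circularity, either by deriving the ordering $k_1\le 1/k_2$ directly from feasibility or by a short optimality argument showing that a positive-measure overlap can never maximize $D_u$.
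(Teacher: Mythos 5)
Your overall architecture matches the paper's: the crux is the essential disjointness of $\bar A_0$ and $\bar A_1$, after which the densities follow by substitution, the three regions are identified with $\{g_1^L/g_0^L>1/k_2\}$, $\{k_1\le g_1^L/g_0^L\le 1/k_2\}$, $\{g_1^L/g_0^L<k_1\}$ exactly as in Corollary~\ref{corollary1}, and the parameters are pinned by the normalization equations \eqref{eq99}. Your uniqueness argument via monotonicity of $\int\max(g_0^L,k_2g_1^L)\,\mathrm{d}\mu$ in $k_2$ (and likewise for $k_1$) is in fact more self-contained than the paper, which only addresses this in a later remark by citing Huber.

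The genuine gap is the disjointness step, and you have correctly located it but not closed it. Your proposed shortcut --- $k_1=1/k_2$ makes $\hat g_1/\hat g_0$ constant, hence $\hat g_0=\hat g_1$ a.e.\ after normalization --- does not go through as stated: equation \eqref{eq89} constrains the ratio only on $\bar A_0\cup\bar A_1$, while on $A_0\cap A_1$ the ratio is $g_1^L/g_0^L$, which is not forced to equal $k_1$. The constancy of the ratio a.e.\ only follows once you know $A_0\cap A_1=\{k_1\le g_1^L/g_0^L\le 1/k_2\}$ collapses to a level set, and that region identification is exactly what presupposes the disjointness and the forms $\hat g_0=k_2g_1^L$ on $\bar A_0$, $\hat g_1=k_1g_0^L$ on $\bar A_1$ --- the circularity you flag but do not break. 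The paper breaks it by a different and more laborious route that you omit entirely: it first proves that all four sets $A_0,A_1,\bar A_0,\bar A_1$ are nonempty (e.g.\ $\bar A_0=\emptyset$ would give $\int\hat g_0\,\mathrm{d}\mu=\int g_0^L\,\mathrm{d}\mu<1$, and $A_0=\emptyset$ leads through a case analysis to $\hat g_1=g_1^L$ a.e., again contradicting normalization), then that $A_0\cap A_1$ is nonempty, and finally that a positive-measure overlap $\bar A_0\cap\bar A_1$ would, via \eqref{eq93}, force $A_0\cap A_1$ to be empty --- a contradiction. Some version of these nonemptiness arguments (which also rule out degenerate forms of \eqref{eq90}) is needed before your substitution and consistency-inequality steps are licensed; without them the proof is incomplete at its only non-mechanical point.
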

\begin{IEEEproof}
The claim follows from the conditions:
\begin{enumerate}
\renewcommand\labelenumi{\bfseries\theenumi.}
\item The sets $A_0$, $A_1$, $\bar{A}_0$ and $\bar{A}_1$ are all non-empty.
\item The set $\bar{A}_0\cap \bar{A}_1$ is empty.
\item On $\bar{A}_0$ and $\bar{A}_1$, respectively, we have $\hat{g}_0=k_2g_1^L$ and $\hat{g}_1=k_1g_0^L$.
\end{enumerate}
\mbox{\bf 1.} The sets $\bar{A}_0$ and $\bar{A}_1$ are trivially non-empty. If not, we have $\int_{\Omega}\hat{g}_0=\int_{\Omega}g_0^L d\mu<1$ and $\int_{\Omega}\hat{g}_1 d\mu=\int_{\Omega}g_1^L d\mu<1$, which are contradictions with the fact that $\hat{g}_0$ and $\hat{g}_1$ are density functions. The set $A_0$ is also non-empty and this can be shown again with contradiction. Assume that $A_0$ is empty. In this case, $A_1$ can either be empty or non-empty. Assume that $A_1$ is also empty. Then, by \eqref{eq89}, we necessarily have $\hat{g}_0=\hat{g}_1$ a.e., which is excluded by a suitable choice of $g_0^L$ and $g_1^L$. Therefore, $A_1$ is non-empty. If $A_1$ is non-empty, then we must have $\hat{g}_0=k_2g_1^L$ on $\bar{A}_0$. If not, $\hat{g}_1/\hat{g}_0$ will not be a constant function on $\bar{A}_0\cap A_1$, which is non-empty since $\Omega=\bar{A}_0$. This again yields a contradiction with $\eqref{eq89}$. Since $\hat{g}_0=k_2g_1^L$ is defined on $\Omega$, in order to satisfy \eqref{eq89}, $\hat{g}_1$ must also be  $g_1^L$ on $\bar{A}_1$. Hence, we have $\hat{g}_1=g_1^L$ a.e. which is again a contradiction with the fact that $\int_{\Omega}\hat{g}_1=1$. Therefore, $A_0$ is non-empty. A similar analysis shows that $A_1$ is also non-empty.\\
\mbox{\bf 2.} The set $\bar{A}_0\cap \bar{A}_1$ is empty. If not, from \eqref{eq90} and \eqref{eq89} we have
\begin{equation}\label{eq93}
\frac{\hat{g}_1}{\hat{g}_0}=\frac{k_1}{k_2}\frac{g_0^L}{g_1^L}=k_1=\frac{1}{k_2}.
\end{equation}
This implies ($\bar{A}_0\cap A_1)\cup (A_0\cap \bar{A}_1)=\Omega$, hence, both $\bar{A}_0\cap \bar{A}_1$ and $A_0\cap A_1$ are empty sets. Since, $A_0\cap A_1$ is non-empty, we have a contradiction, hence, $\bar{A}_0\cap \bar{A}_1$ must be empty.\\
\mbox{\bf 3.} The set $A_0\cap A_1$ is non-empty. If not, $A_0$ and $A_1$ are disjoint sets. This implies at least non-empty $\bar{A}_0\cap A_1$ and $A_0\cap \bar{A}_1$ and at most additionally non-empty $\bar{A}_0\cap \bar{A}_1$. Non-empty $\bar{A}_0\cap \bar{A}_1$ implies $\hat{g}_1/\hat{g}_0=k$ a.e on $\Omega$, see \eqref{eq93}, and this is impossible, unless $k=1$. If only $\bar{A}_0\cap A_1$ and $A_0\cap \bar{A}_1$ are non-empty, i.e. if $\bar{A}_0\cap \bar{A}_1$ and $A_0\cap A_1$ are empty, hence, $(\bar{A}_0\cap A_1) \cup (A_0\cap \bar{A}_1)=\Omega$, we have $A_0= \bar{A}_1$ and $A_1=\bar{A}_0$ together with $A_0\cup A_1=\Omega$. This is possible if and only if $k=k_1=1/k_2$, because
\begin{align*}
\bar{A}_0\cap A_1&=\{g_0>g_0^L,g_1=g_1^L\}=\{1/k_2=g_1/g_0<g_1^L/g_0^L\}, \nonumber\\
A_0\cap \bar{A}_1&=\{g_1>g_1^L,g_0=g_0^L\}=\{k_1=g_1/g_0>g_1^L/g_0^L\}.
\end{align*}
The condition $k=k_1=1/k_2$ also implies $\hat{g}_1/\hat{g}_0=1$ a.e on $\Omega$, which is avoided by suitable choices of $g_0^L$ and $g_1^L$. Hence, $A_0\cap A_1$ cannot be empty.\\
The sets $\bar{A}_0\cap A_1$ and $A_0\cap \bar{A}_1$ are both non-empty. From $A_0\cap A_1\neq\emptyset$, there are four cases  $A_0\subset A_1$, $A_1\subset A_0$, $A_0=A_1$, or $A_0\backslash A_1$ and $A_1\backslash A_0$ are both non-empty.
The first three conditions imply either non-empty $\bar{A}_0\cap \bar{A}_1$, or $A_0=\Omega$, $A_1=\Omega$ or both. The first condition is a contradiction with \eqref{eq93} and the other three imply $\hat{g}_j=g_j^L$ on $\Omega$, which is impossible, see \eqref{eq90}. Therefore, we have non-empty $A_0\cap A_1$ together with non-empty $A_0\backslash A_1$ and $A_1\backslash A_0$. This eventually implies non-empty $\bar{A}_0\cap A_1$ and $A_0\cap \bar{A}_1$.\\
It is known that $\hat{g}_1=g_1^L$ on $A_1$ and on $\bar{A}_0\cap A_1$ we have $\hat{g}_1/\hat{g}_0=1/k_2$. Hence, on $\bar{A}_0$ we must have $\hat{g}_0=k_2 g_1^L$. Similarly, on $\bar{A}_1$ we have $\hat{g}_1=k_1 g_0^L$.\\
\end{IEEEproof}

\begin{kor}\label{corollary2}
The parameters should satisfy $k_1<1/k_2$, hence,
\begin{equation*}
A_0\cap A_1=\{k_1\leq g_1^L/g_0^L \leq 1/k_2\}.
\end{equation*}
Moreover,
\begin{equation*}
A_0=\{g_1^L/g_0^L<1/k_2\},\quad A_1=\{g_1^L/g_0^L>k_1\}.
\end{equation*}
\end{kor}

\begin{IEEEproof}
$k_1=1/k_2$ implies empty $A_0\cap A_1$, which is impossible, and $k_1>1/k_2$ implies non-empty $(\bar{A}_0\cap A_1)\cap (A_0\cap \bar{A}_1)$, which in turn implies $k_1=1/k_2$, another contradiction. Therefore, we have $k_1<1/k_2$. Accordingly, the sets $A_0$ and $A_1$ can be written as
\begin{align*}
A_0=&(A_0\cap A_1)\cup (A_0\cap \bar{A}_1)=\{k_1\leq g_1^L/g_0^L \leq 1/k_2\}\cup \{k_1>g_1^L/g_0^L\}=\{g_1^L/g_0^L\leq 1/k_2\},\nonumber\\
A_1=&(A_0\cap A_1)\cup (\bar{A}_0\cap A_1)=\{k_1\leq g_1^L/g_0^L \leq 1/k_2\}\cup \{1/k_2<g_1^L/g_0^L\}=\{g_1^L/g_0^L\geq k_1\}.
\end{align*}

\end{IEEEproof}

\begin{rem}
Let $t_u=1/k_2$, $t_l=k_1$ and $l=g_1^L/g_0^L$. Then, the LFDs and the robust LRF can be rewritten as
\begin{equation}\label{eq97}
\hat{g}_0=\begin{cases}
g_0^L, &  l\leq t_u \\
1/t_ug_1^L, &  l> t_u
\end{cases},\quad
\hat{g}_1=\begin{cases}
g_1^L, &  l\geq t_l \\
t_lg_0^L, &  l< t_l
\end{cases},
\end{equation}
and
\begin{equation}\label{eq98}
\frac{\hat{g}_1}{\hat{g}_0}=\begin{cases}
t_u, &  l>t_u \\
l, &  t_l\leq l \leq t_u\\
t_l, &  l<t_l
\end{cases}.
\end{equation}
The lower bounding function constraints are satisfied automatically. Because, on $\{l\leq t_u\}$ and $\{l\geq t_l\}$, $\hat{g}_j\geq g_j^L$ holds with equality, and on $\{l> t_u\}$ and $\{l< t_l\}$, we necessarily have $\hat{g}_0=1/t_ug_1^L\geq g_0^L$ and $\hat{g}_1= t_lg_0^L\geq g_1^L$, respectively, as $l=g_1^L/g_0^L$. The density function constraints are satisfied by solving
\begin{align}\label{eq99}
\int_{l\leq t_u} &g_0^L d\mu+\frac{1}{t_u}\int_{l> t_u} g_1^L \mathrm{d}\mu=1, \nonumber\\
\int_{l\geq t_l} &g_1^L d\mu+t_l\int_{l< t_l} g_0^L d\mu=1.
\end{align}
\end{rem}

\begin{rem}
As mentioned earlier the band model reduces to the $\epsilon$-contamination model if no upper bounding functions exist. In this case it is known by Huber that the equations in \eqref{eq99} have unique solutions and the LFDs in \eqref{eq97} are single-sample minimax robust \cite{hube65}. From Theorem~\ref{thm1}, single-sample minimax robust LFDs minimize all $f$-divergences, hence they also maximize all $u$-divergences. This proves that choosing $\lambda_j$ as scalars, which has been made to simplify the derivations, is a correct assumption. By \eqref{eq99}, it is also implied that the parameters $t_l$ and $t_u$ are only dependent on $g_0^L$ and $g_1^L$, i.e. they are independent of the choice of $u$. This is in accordance with Theorem~\ref{thm1}.
\end{rem}

\subsubsection*{\text{\bf{Case 2}.} $g_0^L=0$ and $g_1^L=0$ \text{(no lower bounding functions):}}
\noindent In this case, letting $g_{j}^U=(1+\epsilon_j)f_j$, the band model can equivalently be written as the upper $\epsilon$-contamination model
\begin{equation*}
\mathscr{G}_j^{\epsilon^{+}}=\left\{G_j: G_j=(1+\epsilon_j)F_j-\epsilon_j H, H\in\mathscr{M} \right\}
\end{equation*}
where $f_j$ are the nominal density functions and $\epsilon_j>0$. By the condition of no lower bounding functions, we have $\lambda_0=0$ and $\lambda_1=0$ everywhere. Similarly, the positivity constraints are also not imposed as before because, as it can be seen later, the density functions automatically satisfy these constraints. In this case, there are four conditions regarding the Lagrangians:
\begin{align}\label{eq100}
L_0:&\quad g_0=g_0^U\quad \mbox{on}\quad A_0\quad \mbox{and}\quad g_0<g_0^U\quad \mbox{on}\quad \Omega\backslash A_0,\nonumber\\
L_1:&\quad g_1=g_1^U\quad \mbox{on}\quad A_1\quad \mbox{and}\quad g_1<g_1^U\quad \mbox{on}\quad \Omega\backslash A_1.
\end{align}
The integrals in \eqref{eq87} are defined for $g_0<g_0^U$ and $g_1<g_1^U$, respectively. Since $\lambda_0=0$ and $\lambda_1=0$ everywhere, and with the assumption that $\nu_j$ are constant functions, it is the case that
\begin{align}\label{eq101}
\frac{g_1}{g_0}=\frac{1}{k_2}\quad \mbox{on}\quad \bar{A}_0=\Omega\backslash A_0=\{y:g_0<g_0^U\},\nonumber\\
\frac{g_1}{g_0}=k_1\quad \mbox{on}\quad \bar{A}_1=\Omega\backslash A_1=\{y:g_1<g_1^U\},
\end{align}
where $k_1$ and $k_2$ are some positive constants.

\begin{thm}\label{theorem2}
Let $t_l=1/k_2$, $t_u=k_1$ and $l=g_1^U/g_0^U$. It follows that the LFDs and the corresponding LRF are unique and given by
\begin{equation}\label{eq102}
\hat{g}_0=\begin{cases}
g_0^U, &  l\geq t_l  \\
1/t_lg_1^U, &  l< t_l
\end{cases},\quad
\hat{g}_1=\begin{cases}
g_1^U, &  l\leq t_u \\
t_ug_0^U, &  l> t_u
\end{cases},
\end{equation}
and
\begin{equation}\label{eq103}
\frac{\hat{g}_1}{\hat{g}_0}=\begin{cases}
t_l, &  l<t_l\\
l, &  t_l\leq l \leq t_u\\
t_u, &  l>t_u
\end{cases}.
\end{equation}
Moreover, all the Lagrangian constraints are satisfied and in particular the LFDs are obtained by solving
\begin{align*}
\int_{l\geq t_l} &g_0^U d\mu+\frac{1}{t_l}\int_{l< t_l} g_1^U \mathrm{d}\mu=1, \nonumber\\
\int_{l\leq t_u} &g_1^U d\mu+t_U\int_{l> t_u} g_0^U d\mu=1.
\end{align*}
\end{thm}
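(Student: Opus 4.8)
The plan is to mirror the proof of Theorem~\ref{theorem1}, interchanging the roles of the lower and upper bounding functions: here the active constraint sets are $A_0=\{g_0=g_0^U\}$ and $A_1=\{g_1=g_1^U\}$, and the upper bounds $g_0^U,g_1^U$ play the part that $g_0^L,g_1^L$ played in Case~1. The starting point is the stationarity conditions \eqref{eq87}: since in Case~2 we have $\lambda_0=\lambda_1=0$ and the $\nu_j$ are taken constant (a choice justifiable exactly as the scalar multipliers in Case~1), on the free sets $\bar A_0=\{g_0<g_0^U\}$ and $\bar A_1=\{g_1<g_1^U\}$ the integrands must vanish, which forces the likelihood ratio to be piecewise constant precisely as in \eqref{eq101}, namely $g_1/g_0=1/k_2$ on $\bar A_0$ and $g_1/g_0=k_1$ on $\bar A_1$. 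Setting $t_l=1/k_2$, $t_u=k_1$ and $l=g_1^U/g_0^U$ then reduces the task to identifying which of the sets $\bar A_0\cap A_1$, $A_0\cap A_1$, $A_0\cap\bar A_1$ and $\bar A_0\cap\bar A_1$ are non-empty and what the densities are on each.

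The combinatorial core, which I expect to be the main obstacle, is the exact analogue of the three-part argument in Theorem~\ref{theorem1}. First, $\bar A_0$ and $\bar A_1$ are non-empty, because otherwise $g_j=g_j^U$ a.e.\ would give $\int_{\mathcal{Y}} g_j^U\,\mathrm{d}\mu=1$, contradicting $\int_{\mathcal{Y}} g_j^U\,\mathrm{d}\mu>1$ for a non-degenerate band; and $A_0,A_1$ are non-empty, since an empty $A_0$ would make $g_1/g_0$ globally constant and, combined with the treatment of $A_1$, would force $\hat g_0$ and $\hat g_1$ to be proportional or to coincide a.e., which is excluded by the choice of $g_0^U,g_1^U$. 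Second, $\bar A_0\cap\bar A_1$ must be empty, for on it the ratio would equal both $t_l$ and $t_u$, contradicting $t_l<t_u$, where this strict ordering is obtained exactly as $k_1<1/k_2$ was in Corollary~\ref{corollary1}. Third, $A_0\cap A_1$ is non-empty by the same elimination of the degenerate configurations $A_0\subset A_1$, $A_1\subset A_0$ and $A_0=A_1$ used in Theorem~\ref{theorem1}, each of which collapses the LFDs onto a constant ratio. These facts pin down the partition of the domain into $\bar A_0\cap A_1$, $A_0\cap A_1$ and $A_0\cap\bar A_1$.

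It then remains to read off the densities region by region and verify feasibility. On $A_0\cap A_1$ both densities sit at their upper bounds, so $g_0=g_0^U$, $g_1=g_1^U$ and the ratio is $l$; on $\bar A_0\cap A_1$ we have $g_1=g_1^U$ and $g_1/g_0=t_l$, giving $g_0=(1/t_l)g_1^U$, which is $<g_0^U$ precisely when $l<t_l$; and on $A_0\cap\bar A_1$ we have $g_0=g_0^U$ and $g_1/g_0=t_u$, giving $g_1=t_ug_0^U<g_1^U$ precisely when $l>t_u$. This is exactly \eqref{eq102} and \eqref{eq103}, and the upper-bound constraints hold automatically from these inequalities. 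Imposing $\int_{\mathcal{Y}}\hat g_0\,\mathrm{d}\mu=\int_{\mathcal{Y}}\hat g_1\,\mathrm{d}\mu=1$ over this partition yields the two normalization equations stated in the theorem, which fix the remaining parameters. Uniqueness follows as before from the strict joint concavity of $D_u$ in $(g_0,g_1)$ together with the convexity of the feasible band (items~2.\ and~3.\ of Table~\ref{tab1du}), since the robust likelihood ratio is a monotone clipping of $l$ and is therefore determined uniquely.
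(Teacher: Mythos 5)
Your proposal is correct and follows essentially the same route as the paper, whose own proof of this theorem is a one-line deferral to the arguments of Theorem~\ref{theorem1} and Corollary~\ref{corollary1} with the roles of the bounding functions interchanged (so that the ordering becomes $k_1>1/k_2$, i.e.\ $t_l<t_u$, exactly as you derive); your region-by-region identification, feasibility checks, and normalization equations reconstruct precisely what the paper leaves implicit. The only blemish is the closing appeal to ``strict joint concavity'' of $D_u$, which does not hold (it is concave but linear along rays of fixed $g_1/g_0$); this is not load-bearing, since uniqueness is already pinned down by your combinatorial determination of the partition together with the two normalization constraints.
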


\begin{IEEEproof}
The definition of the sets $A_j$, their intersections, their relation to $l$, $k_1$ and $k_2$, and the fact that $k_1>1/k_2$ trivially follow from the same line of arguments used in Theorem~\ref{theorem1} and Corollary~\ref{corollary2} by considering \eqref{eq100} and \eqref{eq101}. The lower bounding function constraints are automatically satisfied as $\hat{g}_0$ and $\hat{g}_1$ are non-negative functions. The upper bounding function constraints are also satisfied in the same way as explained in Case $1$. The LFDs are obtained by unit density function constraints.
\end{IEEEproof}

\begin{thm}\label{theoremeps}
The LFDs in Theorem~\ref{theorem2} are single-sample minimax robust, i.e.
\begin{align}\label{eq105}
&G_0\left[\hat{l}< t\right]\geq \hat{G}_0\left[\hat{l}< t\right],\nonumber\\
&G_1\left[\hat{l}< t\right]\leq \hat{G}_1\left[\hat{l}< t\right]
\end{align}
for all $t\in\mathbb{R}_{\geq 0}$ and $(G_0,G_1)\in\mathscr{G}_0\times\mathscr{G}_1$, and/hence, $\nu_j$ can be chosen as constant functions.
\end{thm}

\begin{IEEEproof}
For any $g_j\in \mathscr{G}_j$, if $t>t_u$, the event $A=[\hat{l}< t]$ has a full probability and if $t\leq t_l$, it has a null probability. Therefore, \eqref{eq105} holds trivially for these cases. For $t_l<t\leq t_u$, we have
\begin{align*}
G_1(A)=&(1+\epsilon_1)F_1(A)-\epsilon_1 h \leq (1+\epsilon_1)F_1(A)=\hat{G}_1(A)\nonumber \\
G_0(A)=&(1+\epsilon_0)F_0(A)-\epsilon_0 h \geq (1+\epsilon_0)F_0(A)-\epsilon_0=1-(1+\epsilon_0)(1-F_0(A))\nonumber\\
      =&1-(1+\epsilon_0)F_0(\bar{A})=1-G_0^U(\bar{A})=1-\hat{G}_0(\bar{A})=\hat{G}_0(A).
\end{align*}
Hence, $\hat{g}_0$ and $\hat{g}_1$ are single-sample minimax robust. Moreover, by the virtue of Theorem~\ref{thm1}, single-sample minimax robust LFDs minimize all $f$-divergences, accordingly they also maximize all $u$-divergences. This proves that choosing $\nu_j$ as constant functions, which was made to simplify the derivations, was a correct assumption.
\end{IEEEproof}

\subsubsection*{\text{\bf{Case 3}.} $g_j^L<g_j<g_j^U$ \text{(the general case):}}
\noindent The uncertainty classes for the general case are obtained by the intersection of lower and upper $\epsilon$-contamination neighborhoods
\begin{equation*}
\mathscr{G}_j=\mathscr{G}_j^{\epsilon^{-}}\cap\mathscr{G}_j^{\epsilon^{+}}.
\end{equation*}
There are six conditions regarding the Lagrangians
\begin{align}\label{eq108}
L_0:&\quad g_0=g_0^L\quad \mbox{on}\quad A_0,\quad g_0=g_0^U\quad \mbox{on}\quad A_1\quad \mbox{and}\quad g_0^L<g_0<g_0^U \quad \mbox{on}\quad A_2,\nonumber\\
L_1:&\quad g_1=g_1^L\quad \mbox{on}\quad A_3,\quad  g_1=g_1^U\quad \mbox{on}\quad A_4\quad \mbox{and}\quad g_1^L<g_1<g_1^U\quad \mbox{on}\quad A_5.
\end{align}
The integrals in \eqref{eq87} are defined for $g_0^L<g_0<g_0^U$ and $g_1^L<g_1<g_1^U$, respectively. With the assumption that both $\lambda_j$ and $\nu_j$ are constant functions, it is the case that
\begin{align}\label{eq109}
\frac{g_1}{g_0}=k_2\quad \mbox{on}\quad  A_2=\{y:g_0^L<g_0<g_0^U\},\nonumber\\
\frac{g_1}{g_0}=k_1\quad \mbox{on}\quad  A_5=\{y:g_1^L<g_1<g_1^U\},
\end{align}
where $k_1$ and $k_2$ are some positive constants.

\begin{thm}\label{theorem3}
Assume that both $\lambda_j$ and $\nu_j$ are constant functions. Then, there are three different asymptotically minimax robust LRFs,
\begin{equation*}
\text{\rm Type-I}:\quad\frac{\hat{g}_1}{\hat{g}_0}=\begin{cases}
g_1^U/g_0^L, &  g_1^U/g_0^L\leq k_2\\
k_2, &  g_1^U/g_0^L>k_2>g_1^U/g_0^U\\
g_1^U/g_0^U, &  k_2\leq g_1^U/g_0^U\leq k_1\\
k_1, &  g_1^U/g_0^U>k_1>g_1^L/g_0^U\\
g_1^L/g_0^U, &  g_1^L/g_0^U\geq k_1
\end{cases},
\end{equation*}

\begin{equation*}
\text{\rm Type-II}:\quad\frac{\hat{g}_1}{\hat{g}_0}=\begin{cases}
g_1^U/g_0^L, &  g_1^U/g_0^L\leq k_1\\
k_1, &  g_1^U/g_0^L>k_1>g_1^L/g_0^U\\
g_1^L/g_0^U, &  g_1^L/g_0^U\geq k_1
\end{cases},
\end{equation*}

\begin{equation*}
\text{\rm Type-III}:\quad\frac{\hat{g}_1}{\hat{g}_0}=\begin{cases}
g_1^U/g_0^L, &  g_1^U/g_0^L\leq k_1\\
k_1, &  g_1^U/g_0^L>k_1>g_1^L/g_0^L\\
g_1^L/g_0^L, &  k_1\leq g_1^L/g_0^L\leq k_2\\
k_2, &  g_1^L/g_0^L>k_2>g_1^L/g_0^U\\
g_1^L/g_0^U, &  g_1^L/g_0^U\geq k_2
\end{cases},
\end{equation*}

with the corresponding pairs of LFDs, respectively,

\begin{equation*}
\hat{g}_0=\begin{cases}
g_0^L, &  g_1^U/g_0^L\leq k_2  \\
\frac{1}{k_2}g_1^U, &  g_1^U/g_0^L>k_2>g_1^U/g_0^U\\
g_0^U, &  g_1^U/g_0^U\geq k_2
\end{cases},\quad
\hat{g}_1=\begin{cases}
g_1^L, &  g_1^L/g_0^U\geq k_1 \\
k_1g_0^U, &  g_1^U/g_0^U>k_1>g_1^L/g_0^U\\
g_1^U, & g_1^U/g_0^U\leq k_1
\end{cases},
\end{equation*}

\begin{equation*}
\hat{g}_0=\begin{cases}
g_0^L, &  g_1^U/g_0^L\leq k_1  \\
k_2(g_0^L+h_1), &  g_1^U/g_0^L>k_1\geq g_1^L/g_0^L\\
\frac{k_2}{k_1}(g_1^L+h_2), &  g_1^L/g_0^L>k_1>g_1^L/g_0^U\\
g_0^U, &  g_1^L/g_0^U\geq k_1
\end{cases},\quad
\hat{g}_1=\begin{cases}
g_1^U, &  g_1^U/g_0^L\leq k_1 \\
k_1k_2(g_0^L+h_1), &  g_1^U/g_0^L>k_1\geq g_1^L/g_0^L\\
k_2(g_1^L+h_2), &  g_1^L/g_0^L>k_1>g_1^L/g_0^U\\
g_1^L, & g_1^L/g_0^U\geq k_1
\end{cases},
\end{equation*}

\begin{equation*}
\hat{g}_0=\begin{cases}
g_0^L, &  g_1^L/g_0^L\leq k_2 \\
\frac{1}{k_2} g_1^L, &  g_1^L/g_0^L>k_2>g_1^L/g_0^U\\
g_0^U, &  g_1^L/g_0^U\geq k_2
\end{cases},\quad
\hat{g}_1=\begin{cases}
g_1^L, &  g_1^L/g_0^L\geq k_1 \\
k_1 g_0^L, &  g_1^U/g_0^L>k_1>g_1^L/g_0^L\\
g_1^U, & g_1^U/g_0^L\leq k_1
\end{cases},
\end{equation*}
Moreover, LRFs of \text{Type-I} and \text{Type-III} tend to clipped likelihood ratio functions, e.g., as given by \eqref{eq98} with the corresponding LFDs defined by \eqref{eq97}.
\end{thm}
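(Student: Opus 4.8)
The plan is to repeat the Karush--Kuhn--Tucker analysis already carried out for Cases~1 and~2, now keeping both the lower and the upper bounding constraints in play. The six active-set conditions \eqref{eq108} partition $\mathcal{Y}$ according to which bound, if any, binds for each density, and the constancy relations \eqref{eq109}---valid under the stated assumption that $\lambda_j$ and $\nu_j$ are constant functions---pin the likelihood ratio to $k_2$ throughout the interior region $A_2$ of $g_0$ and to $k_1$ throughout the interior region $A_5$ of $g_1$. Everywhere else each of $\hat g_0,\hat g_1$ sits on one of its bounds, so the robust LRF $\hat g_1/\hat g_0$ is forced to equal the corresponding ratio of bounding functions. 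The entire solution is therefore fixed once one knows where the two clipping levels $k_1,k_2$ fall among the ordered bounding-function ratios $g_1^U/g_0^L$, $g_1^U/g_0^U$, $g_1^L/g_0^L$, $g_1^L/g_0^U$.

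The key dichotomy is that on $A_2\cap A_5$ both relations in \eqref{eq109} would have to hold at once, forcing $k_1=k_2$ there; hence whenever $k_1\neq k_2$ the interior sets $A_2$ and $A_5$ are disjoint and the two clip levels simply order themselves. I would then show, exactly as in Theorem~\ref{theorem1}, that the boundary sets are non-empty and that $\hat g_0=\hat g_1$ a.e.\ is excluded, which rules out the degenerate configurations. The three Types then correspond to $k_2<k_1$ (the five-piece Type~I profile), $k_1<k_2$ (the five-piece Type~III profile), and the coalesced case $k_1=k_2$, in which $A_2\cap A_5$ may be non-empty and carries a shared interior region with $g_1=k_1 g_0$; the residual freedom in splitting that region is precisely what produces the auxiliary functions $h_1,h_2$ and the three-piece Type~II profile, and is the source of the asserted non-uniqueness. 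For each Type I would then write down the candidate LFDs of the statement and verify the three KKT requirements as in Theorems~\ref{theorem1} and~\ref{theorem2}: that each density lies in its band $g_j^L\le \hat g_j\le g_j^U$, which is automatic because on every clipped piece $\hat g_j$ is a bounded multiple of the opposing bounding function and the clip level equals the relevant bounding-function ratio; that each integrates to unity, which is the pair of normalization equations fixing the free constants (the analogues of \eqref{eq99}); and that $\hat g_1/\hat g_0$ reproduces \eqref{eq109} on $A_2$ and $A_5$.

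The main obstacle is the combinatorics of the active-set bookkeeping: with both bounds present several intersections $A_i\cap A_j$ can be simultaneously non-empty, so I must check that each proposed Type is internally self-consistent---its claimed active set really does yield the stated LRF with no constraint violated---and that the three orderings of $k_1,k_2$ are distinct and exhaust the admissible configurations. Because $D_u$ is only jointly concave (not strictly) in $(g_0,g_1)$, its maximizer need not be unique, which is exactly why at least three LRFs survive rather than one; appealing to Theorem~\ref{thm1}, single-sample robustness implies maximization of all $u$-divergences, which is what retroactively justifies treating $\lambda_j$ and $\nu_j$ as scalars, just as in Cases~1 and~2. Finally, the limiting claims follow by degeneration: sending $g_j^U\to\infty$ deactivates the upper bound and collapses Type~I to the clipped LRF \eqref{eq98} with LFDs \eqref{eq97}, while sending $g_j^L\to 0$ deactivates the lower bound and collapses Type~III to \eqref{eq103} with LFDs \eqref{eq102}.
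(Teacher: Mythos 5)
Your overall strategy---KKT stationarity with constant multipliers, the observation that $A_2\cap A_5$ must be empty when the two clip levels differ, and verification of band membership plus normalization as in Theorems~\ref{theorem1} and~\ref{theorem2}---is the same as the paper's. But the step that actually generates the three Types is organized differently and is left unexecuted. The paper's trichotomy does not start from the ordering of $k_1$ and $k_2$: it starts from the binary choice in \eqref{eq116} of whether, on the interior set $A_2$ (where $g_1/g_0=k_2$ but $g_1$ must sit on one of its bounds), one has $\hat g_0=\frac{1}{k_2}g_1^L$ or $\hat g_0=\frac{1}{k_2}g_1^U$, and correspondingly on $A_5$. Each choice forces a specific list of intersections (for the lower-bound choice, $A_1\cap A_5$, $A_2\cap A_4$ and $A_2\cap A_5$ in \eqref{eq117}) to be empty, hence $A_2\subset A_3$ and $A_5\subset A_0$, and a continuity argument then pins every region to an explicit inequality in the bounding-function ratios via \eqref{eq119}, yielding Type III; the upper-bound choice yields Type I, and Type II is obtained by merging the middle three regions. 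The ordering $k_2\le k_1$ (Type I) versus $k_1\le k_2$ (Type III) is a consequence of that analysis, not its starting point, and your claim that the three orderings exhaust the admissible configurations is both unproved and stronger than the theorem, which asserts only ``at least three'' (mixed choices in \eqref{eq116} are not ruled out by you or by the paper). Deferring the ``combinatorics of the active-set bookkeeping'' defers essentially the whole proof: without it you cannot say which bound each density occupies on each clipped piece, so you cannot derive the specific five-piece formulas and LFDs in the statement.

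There is also a concrete error in your limiting argument. Case 1 ($g_j^U=\infty$) produces \eqref{eq98}, which clips $l=g_1^L/g_0^L$, while Case 2 ($g_j^L=0$) produces \eqref{eq103}, which clips $l=g_1^U/g_0^U$. The middle piece of Type I is $g_1^U/g_0^U$ and that of Type III is $g_1^L/g_0^L$, so under your own degeneration mechanism it is Type III that collapses to \eqref{eq98} with LFDs \eqref{eq97} when the upper bounds are deactivated, and Type I that collapses to \eqref{eq103} with LFDs \eqref{eq102} when the lower bounds are deactivated --- the opposite of what you wrote. Indeed, as $g_j^U\to\infty$ the Type I pieces $g_0^U$ and $\frac{1}{k_2}g_1^U$ diverge and every region of its LRF except the $k_1$-clip empties out, so it cannot tend to \eqref{eq98}. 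Note also that the paper obtains the clipped limits by a different mechanism: it keeps the bounding functions fixed and pushes $k_1$ and $k_2$ to extremes so that the outermost regions $A_0\cap A_4$ and $A_1\cap A_3$ become empty, rather than degenerating the bands themselves.
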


\begin{IEEEproof}
From \eqref{eq108} and \eqref{eq109}, LFDs can be written as
\begin{equation}\label{eq116}
\hat{g}_0=\begin{cases}
g_0^L, &  A_0 \\
\frac{1}{k_2} g_1^L\,\,\mbox{or } \frac{1}{k_2} g_1^U,&  A_2\\
g_0^U, &  A_1
\end{cases},\quad
\hat{g}_1=\begin{cases}
g_1^L, &  A_3 \\
k_1 g_1^L\,\,\mbox{or } k_1 g_1^U, &  A_5\\
g_1^U, &  A_4
\end{cases},
\end{equation}
Let $\hat{g}_0=\frac{1}{k_2} g_1^L$ on $ A_2$ and $\hat{g}_1=k_1 g_1^L$ on $A_5$. Then,
\begin{equation}\label{eq117}
A_1\cap  A_5,\quad A_2\cap  A_4,\quad \mbox{and}\quad A_2\cap  A_5
\end{equation}
are all empty sets, because their existence contradicts with \eqref{eq109}. Accordingly, the robust LRF can implicitly be written as
\begin{equation*}
\quad\frac{\hat{g}_1}{\hat{g}_0}=\begin{cases}
g_1^U/g_0^L, & A_0\cap  A_4 \\
k_1, & A_0\cap  A_5 \\
g_1^L/g_0^L, & A_0\cap  A_3 \\
g_1^U/g_0^U, & A_1\cap  A_4 \\
k_2, & A_2\cap  A_3 \\
g_1^L/g_0^U, &  A_1\cap  A_3
\end{cases}.
\end{equation*}
Furthermore, from \eqref{eq108} and \eqref{eq109} we have
\begin{align}\label{eq119}
A_0\cap A_5&=\{g_1^L<g_1<g_1^U,g_0=g_0^L\}=\{g_1^L/g_0^L<k_1=g_1/g_0<g_1^U/g_0^L\}, \nonumber\\
A_2\cap A_3&=\{g_0^L<g_0<g_0^U,g_1=g_1^L\}=\{g_1^L/g_0^U<k_2=g_1/g_0<g_1^L/g_0^L\}.
\end{align}
The empty sets in \eqref{eq117} imply $A_2\subset A_3$ and $A_5\subset A_0$, which in turn imply $A_5=A_0\cap A_5$ and $A_2=A_2\cap A_3$. Accordingly, $A_2$ and $A_5$ can also be made explicit in \eqref{eq116}. The sets $A_0$, $A_1$ and $A_2$ are disjoint, as well as the sets $A_3$, $A_4$ and $A_5$. On $A_2$ we have $g_1^L/k_2<g_0^U$ and due to continuity $\frac{1}{k_2} g_1^L=g_0^U$ at least on a single point. It is also at most on a single point, if not $A_1$ and $A_2$ are not disjoint. For $A_1$, the only choice left is then $A_1=\{g_1^L/k_2\geq g_0^U\}$. Similarly, i.e. considering $g_0^L< g_1^L/k_2$ on $A_2$ etc., we have $A_0=\{g_0^L\leq g_1^L/k_2\}$. Performing the same analysis over $A_2\cap A_3$, leads to the explicit definition of the sets $A_3$, $A_4$ and $A_5$. This implies that $A_1\cap  A_4$ is an empty set. Hence, $\hat{g}_0$, $\hat{g}_1$ and $\hat{g}_1/\hat{g}_0$ follow as defined by Theorem~\ref{theorem3}, \text{Type-III}. Following the same line of arguments for the cases $\hat{g}_0=\frac{1}{k_2} g_1^U$ on $ A_2$ and $\hat{g}_1=k_1 g_1^U$ on $A_5$ we have
\begin{align*}
A_1\cap A_5&=\{g_1^L<g_1<g_1^U,g_0=g_0^U\}=\{g_1^L/g_0^U<k_1=g_1/g_0<g_1^U/g_0^U\}, \nonumber\\
A_2\cap A_4&=\{g_0^L<g_0<g_0^U,g_1=g_1^U\}=\{g_1^U/g_0^U<k_2=g_1/g_0<g_1^U/g_0^L\},
\end{align*}
in the places of $A_0\cap  A_5$ and $A_2\cap  A_3$, respectively, empty $A_0\cap A_3$, and the explicit definition of the sets $A_j$, which leads to the LRF of \text{Type-I} and the corresponding LFDs. The LRF of \text{Type-II} is a special case arising from merging the middle three regions of the LRFs of \text{Type-I} and \text{Type-III} as $k_2\to k_1$. Moreover, LRFs of \text{Type-I} and \text{Type-III} tend to clipped likelihood ratio functions for $k_1$ small enough and $k_2$ large enough, and $k_1$ large enough and $k_2$ small enough, respectively. This implies empty $A_0\cap  A_4$ and $A_1\cap  A_3$.
\end{IEEEproof}

\begin{rem}
Three different types of LFDs given in Theorem~\ref{theorem3} were first proposed in \cite{kassamband} without any details about how they were obtained. Here, the robust LRFs and the corresponding LFDs have been derived analytically with the assumption that optimum Lagrangian parameters $\lambda_j$ and $\nu_j$ are constant functions. The correctness of these assumptions is due to Theorem~$1$ and $2$ in \cite{kassamband} which show that these pairs of LFDs are minimax robust and minimize all $f$-divergences.
\end{rem}
There are two different cases of consideration. If the type of LRFs are/can be known, it may be preferable to determine the parameters $k_1$ and $k_2$ by solving the equations which impose unit area density function constraints. However, if this knowledge is unavailable the minimax equations described by the problem formulation in Section~\ref{sec5_5} may need to be solved by a convex optimization method for the uncertainty classes given by \eqref{eq84}, which introduce linear constraints. To do this, the densities are first sampled, and hence discretized. For any integration, a numerical integration method can be adopted, for instance the trapezoidal integration.

\subsection{Moment Classes}
A common approach to partial statistical modeling is through moments, typically mean and correlation. The moment classes, which were originally introduced in \cite{moment}, can be generalized as
\begin{equation}\label{eq84a}
\mathscr{G}_j=\left\{G_j\in\mathscr{M}: c_{j,0}^k\leq \mathbb{E}_{G_j}[h_j^k(Y)] \leq c_{j,1}^k\right\},\quad k\in\{1,\ldots,K\},
\end{equation}
where $c_{j,0}^k$ and $c_{j,1}^k$ are some constants, $h_j^k$ are real valued continuous functions and $K$ is the total number of constraints. The constants and the functions should be chosen such that $\mathscr{G}_0\cap \mathscr{G}_1=\emptyset$. The original version of the moment classes have been studied for asymptotically minimax robust Neyman-Pearson tests in \cite{moment}. With the theory introduced so far it is now possible to obtain asymptotically minimax robust tests for the generalized version of the moment classes both in Neyman-Pearson as well as in rate minimizing sense.

\subsection{P-point Classes}
The partial information available for the robust hypothesis testing may also be in the form of probability masses which are assigned to every non-overlapping subsets of $\Omega$. The original definition of p-point classes can be extended covering a more general case as follows:
\begin{equation}\label{eq85a}
\mathscr{G}_j=\left\{G_j\in\mathscr{M}: c_{j,0}^k \leq G_j(A_j^k)\leq c_{j,1}^k\right\},\quad k\in\{1,\ldots,K\},
\end{equation}
where $A_j^k\in\mathscr{A}$ are some disjoint subsets of $\Omega$. The model by \eqref{eq85a} together with the theory introduced so far generalizes \cite{elsawy,elsawy2}.

\begin{rem}
For moment classes as well as p-point classes the LFDs are determined by first discretizing the domain of density functions and then solving the minimax equations described by the problem formulation in Section~\ref{sec5_5} for the uncertainty classes defined by \eqref{eq84a} or \eqref{eq85a}. It may also be of interest to combine p-point classes with moment classes in a hybrid model.
\end{rem}

So far seven different uncertainty classes have been introduced for which the design process either be performed by solving non-linear equations (all distance based uncertainty classes and the band model) or by solving a convex optimization problem (moment classes, p-point classes and the band model). Algorithm~\ref{algorithm1} summarizes the asymptotic minimax testing process with respect to these two cases, which are denoted by A and B.

\begin{algorithm}[ttt]\label{algorithm1}
\caption{Asymptotically minimax robust test design}
\vspace{1mm}
\SetAlgoLined
\KwIn{\hspace{3mm}A: $f_0,f_1,D,\epsilon_0,\epsilon_1,t$ \textbf{or} $g_{0}^L, g_{0}^U, g_{1}^L, g_{1}^U,t$ \\
\hspace{15mm} B: $\mathscr{G}_0$, $\mathscr{G}_1$ and $t$}
\KwOut{LFDs $\hat{g}_0$ and $\hat{g}_1$ of the asymptotically minimax robust test}
\Case{A}{\vspace{1mm}
Consider the LFDs $\hat{g}_0$ and $\hat{g}_1$ in parametric forms\\
Solve related equations and determine $\hat{g}_0$ and $\hat{g}_1$
}
\Case{B}{\vspace{1mm}
Discretize the domain of probability density functions $\Omega$\\
Determine $\hat{g}_0$ and $\hat{g}_1$ by solving \eqref{eq47} or \eqref{eq60} numerically
}
\Return $\frac{1}{n}\sum_{k=1}^{n}\log\hat{l}(y_k)\stackrel{\mathcal{H}_1}{\underset{\mathcal{H}_0}{\gtreqless}}t$
\end{algorithm}

\section{Simulations}\label{sec8}
In this section, the theoretical findings are evaluated and exemplified. For solving all systems of equations damped Newton's method \cite{ralph} and for all convex optimization problems interior point methods \cite{potra} are used. To make the simulations transparent and easily repeatable the parameter values are explicitly stated. The notation $|_a^b$ stands for testing with the $(a)$-test while the data samples are obtained from the LFDs of the $(b)$-test. In all theoretical examples, the nominal distributions listed in Table~\ref{tab1} are considered. The notation $\mathcal{N}(\mu,\sigma^2)$ stands for the Gaussian distribution with mean $\mu$ and variance $\sigma^2$ whereas $\mathcal{L}(0,1)$ denotes the standard Laplace distribution with the respective parameters. The density functions are similarly denoted by $f_{\mathcal{N}}$ and $f_\mathcal{L}$, respectively. In the following, the least favorable distributions, robust likelihood ratio functions, parameters of the equations, and (non)-convexity of $D_u$ are illustrated.
\begin{table}[ttt]
\caption{Pair of nominal distributions used in the simulations}
\begin{center}
\begin{tabular}{|c|l|l|}
\hline
Acronym & Under $\mathcal{H}_0$ & Under $\mathcal{H}_1$ \\
\hline \hline
$d_1$ & $\mathcal{N}(-1,1)$ & $\mathcal{N}(1,1)$  \\
\hline
$d_2$ & $\mathcal{N}(-1,1)$ & $\mathcal{N}(1,4)$ \\
\hline
$d_3$ &  $\mathcal{L}(0,1)$ & $f_{\mathcal{L}}(y)(\sin(2\pi y)+1)$   \\
\hline
\end{tabular}
\end{center}
\label{tab1}
\end{table}
\begin{figure}[ttt]
  \centering
  \centerline{\includegraphics[width=8.8cm]{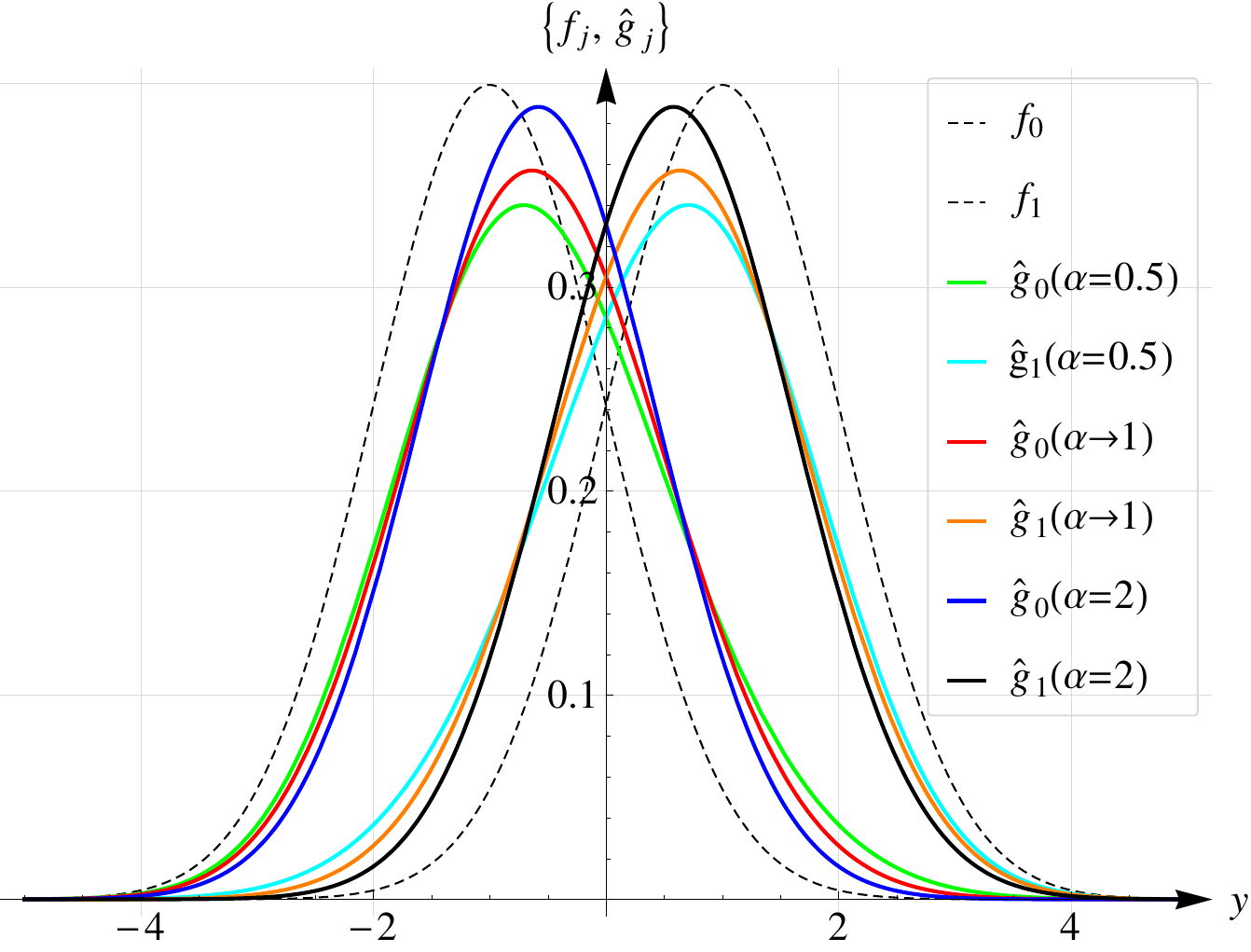}}
\caption{The nominal distributions denoted by $d_1$ and the corresponding LFDs for $\epsilon_0=\epsilon_1=0.1$, where $u=0.5$ for all $\alpha$.\label{fig1}}
\end{figure}
\begin{figure}[ttt]
  \centering
  \centerline{\includegraphics[width=8.8cm]{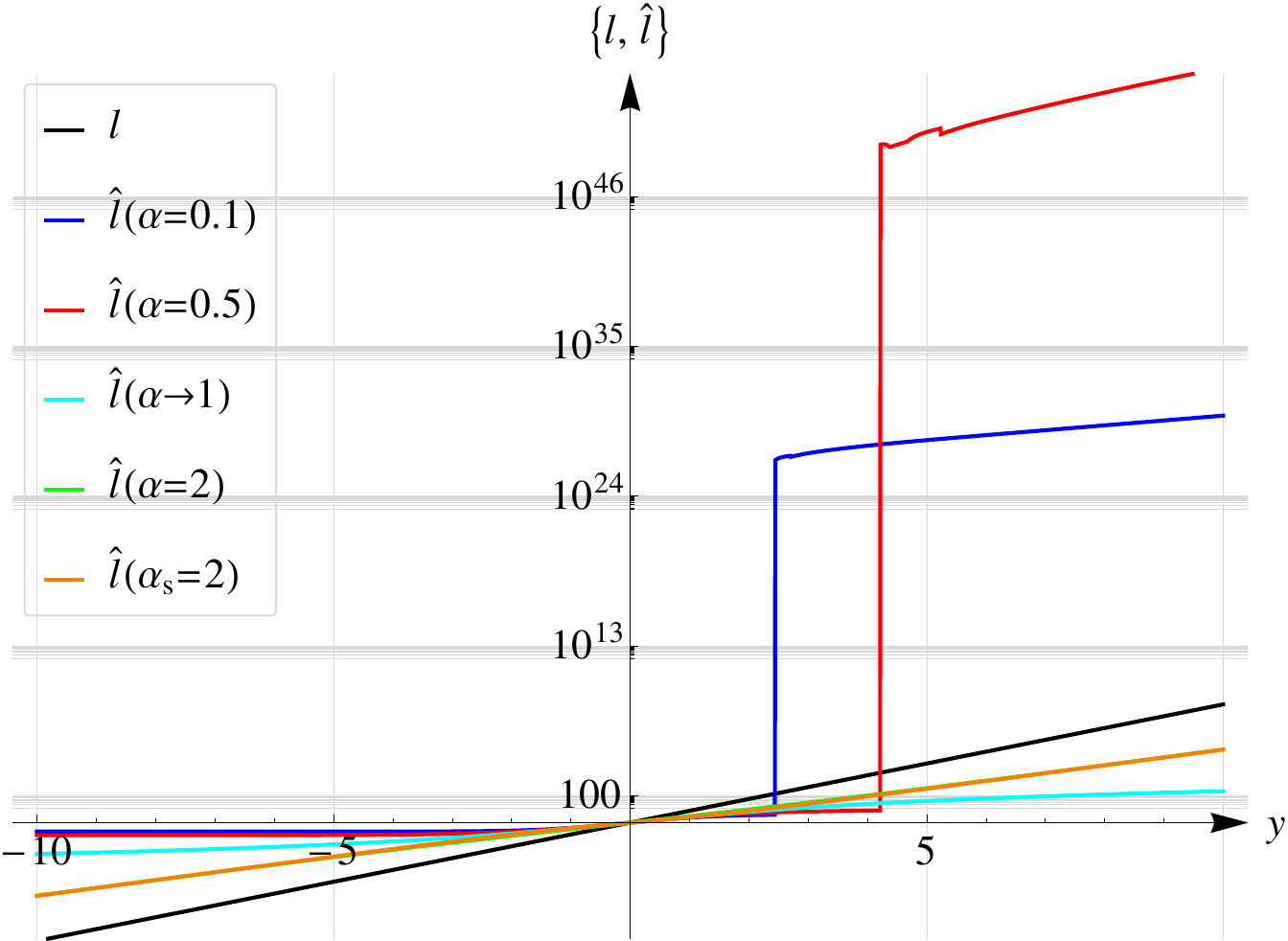}}
\caption{Robust and nominal LRFs found for the nominal distributions denoted by $d_1$ and $\epsilon_0=\epsilon_1=0.1$, including the symmetric case $\alpha_s$.\label{fig2}}
\end{figure}
\subsection{LFDs and Robust LRFs}
Comparative simulations are required in order to get the intuition about how robustness is achieved depending on the choice of the distance. Consider the pair of distributions denoted by $d_1$ in Table~\ref{tab1} and let the robustness parameters be $\epsilon_0=\epsilon_1=0.1$. For this setup, Figure~\ref{fig1} illustrates the LFDs together with the nominal distributions for the KL-divergence ($\alpha\rightarrow 1$) as well as for various $\alpha$-divergences. Symmetrized $\alpha$-divergence is not included for the sake of clarity. There are two observations from this example:
\begin{itemize}
\item The LFDs are non-Gaussian (not visible but verified by means of curve fitting).
\item The variance of the LFDs are decreasing as $\alpha$ increases.
\end{itemize}
In Figure~\ref{fig2} the corresponding likelihood ratio functions are depicted, including the symmetrized $\alpha=2$-divergence, denoted by $\alpha_s=2$. For $\alpha=0.1$ and $\alpha=0.5$ there is a strong amplification of the likelihood ratios for larger observations and clipping for smaller observations (not well visible) in order to achieve asymptotic robustness. Moreover, there is no recognizable difference between the LRFs of $\alpha=2$ and $\alpha_s=2$.\\
The pair of nominal distributions $d_1$ are symmetric about the origin and asymmetric nominals are known to complicate the solution of the non-linear equations \cite{gul6}. Additionally, the LRFs can be visualized in a reduced range, focusing more on the clipping range rather than the range of amplification, to be complimentary to the previous example. In this regard, the pair of nominal distributions denoted by $d_2$ in Table~\ref{tab1} are considered. Figure~\ref{fig3} illustrates the LFDs together with the nominals whereas Figure~\ref{fig4} shows the corresponding robust LRFs for $\epsilon_0=\epsilon_1=0.1$. For larger observations in absolute value, there is huge amplification (not well visible), whereas for the smaller observations there is no hard clipping as in the previous example. The difference between $\alpha=2$ and the symmetrized $\alpha_s=2$ divergences is now visible.
\begin{figure}[ttt]
  \centering
  \centerline{\includegraphics[width=8.8cm]{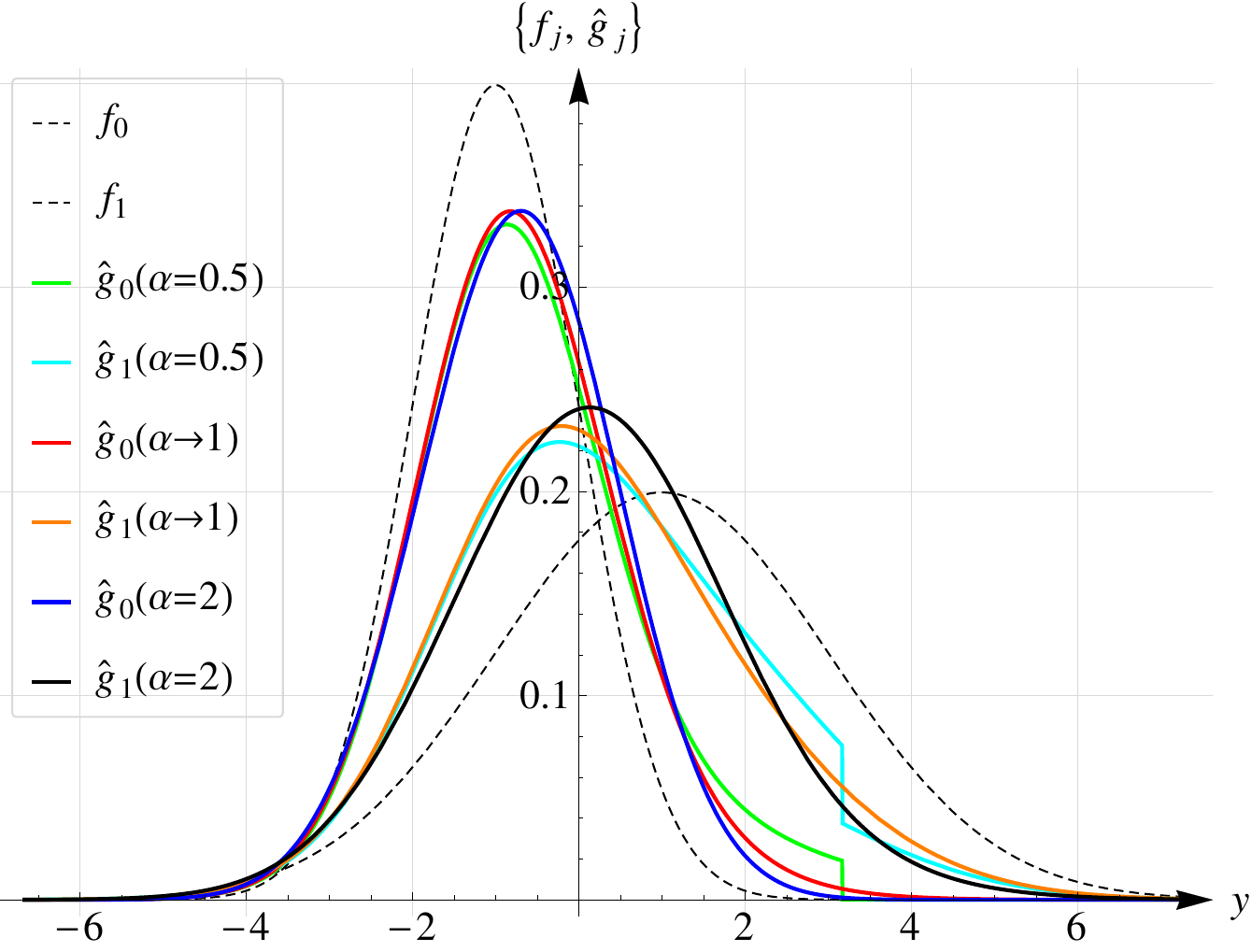}}
\caption{The nominal distributions denoted by $d_2$ and the corresponding LFDs for $\epsilon_0=\epsilon_1=0.1$.\label{fig3}}
\end{figure}
\begin{figure}[ttt]
  \centering
  \centerline{\includegraphics[width=8.8cm]{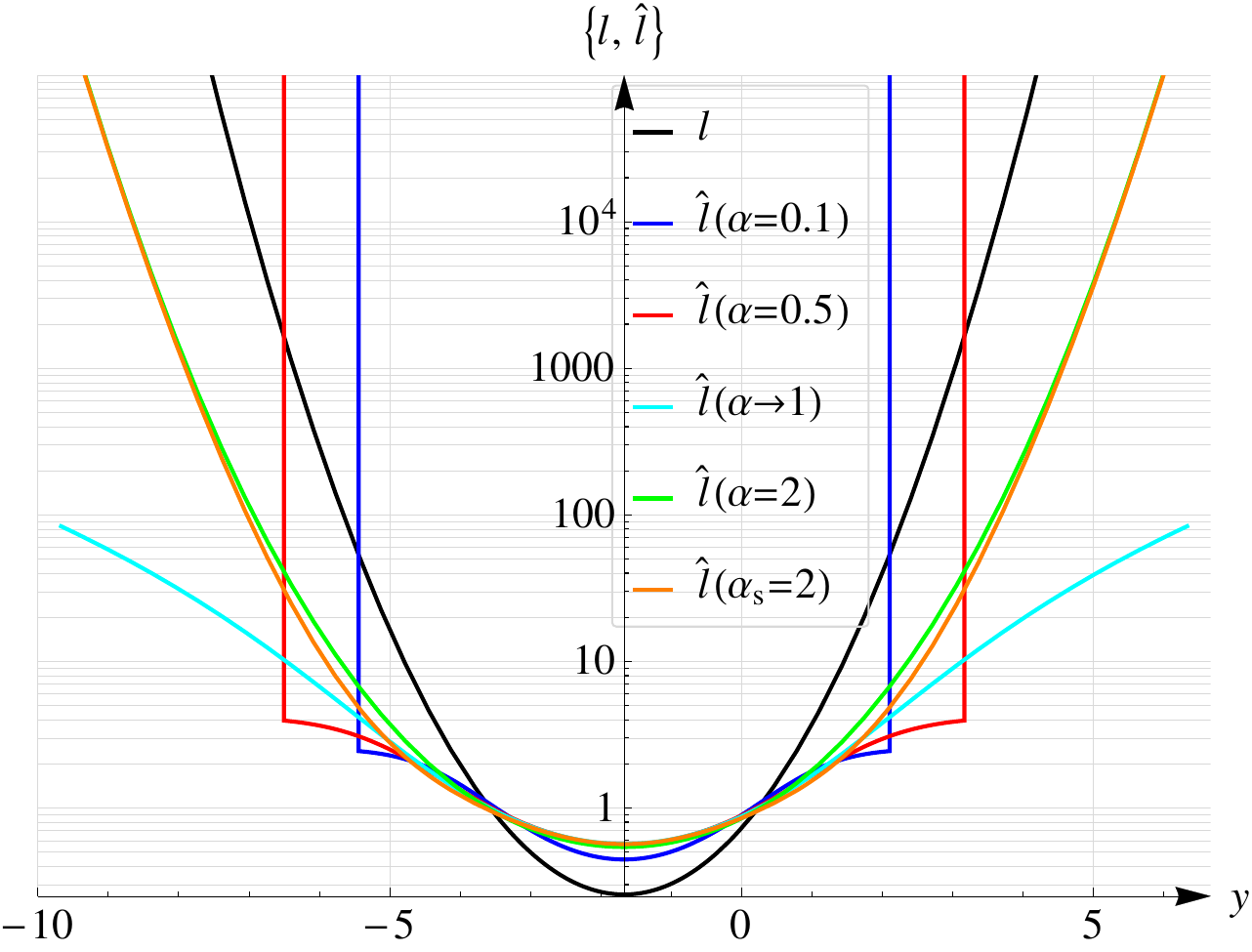}}
\caption{Robust and nominal LRFs found for the nominal distributions denoted by $d_2$ and $\epsilon_0=\epsilon_1=0.1$, including the symmetric case $\alpha_s$. The optimum values of $u$ are $0.95,0.67,0.56,0.59,0.61$, respectively, from $\alpha=0.1$ to $\alpha_s=2$.\label{fig4}}
\end{figure}\\
The nominal LRFs are either increasing, or first decreasing and then increasing, respectively, for the pair of distributions denoted by $d_1$ and $d_2$. It is possible to construct an example for which the nominal LRF is repeatedly increasing and decreasing. This case both confirms the solvability of the related non-linear equations and serves as an example for the convexity analysis in the next section. Let the nominal distributions be denoted by $d_3$ as given in Table~\ref{tab1}. Furthermore, let $\epsilon_0=\epsilon_1=0.05$, as the nominal distributions are now closer to each other. For this setup, Figure~\ref{fig5} and Figure~\ref{fig6} illustrate the LFDs together with the nominals and the robust LRFs, respectively, for the KL-divergence neighborhood. Similar to the previous examples, the nominal LRFs which are smaller than $1$ are amplified and those larger than $1$ are attenuated.

\begin{figure}[ttt]
  \centering
  \centerline{\includegraphics[width=8.8cm]{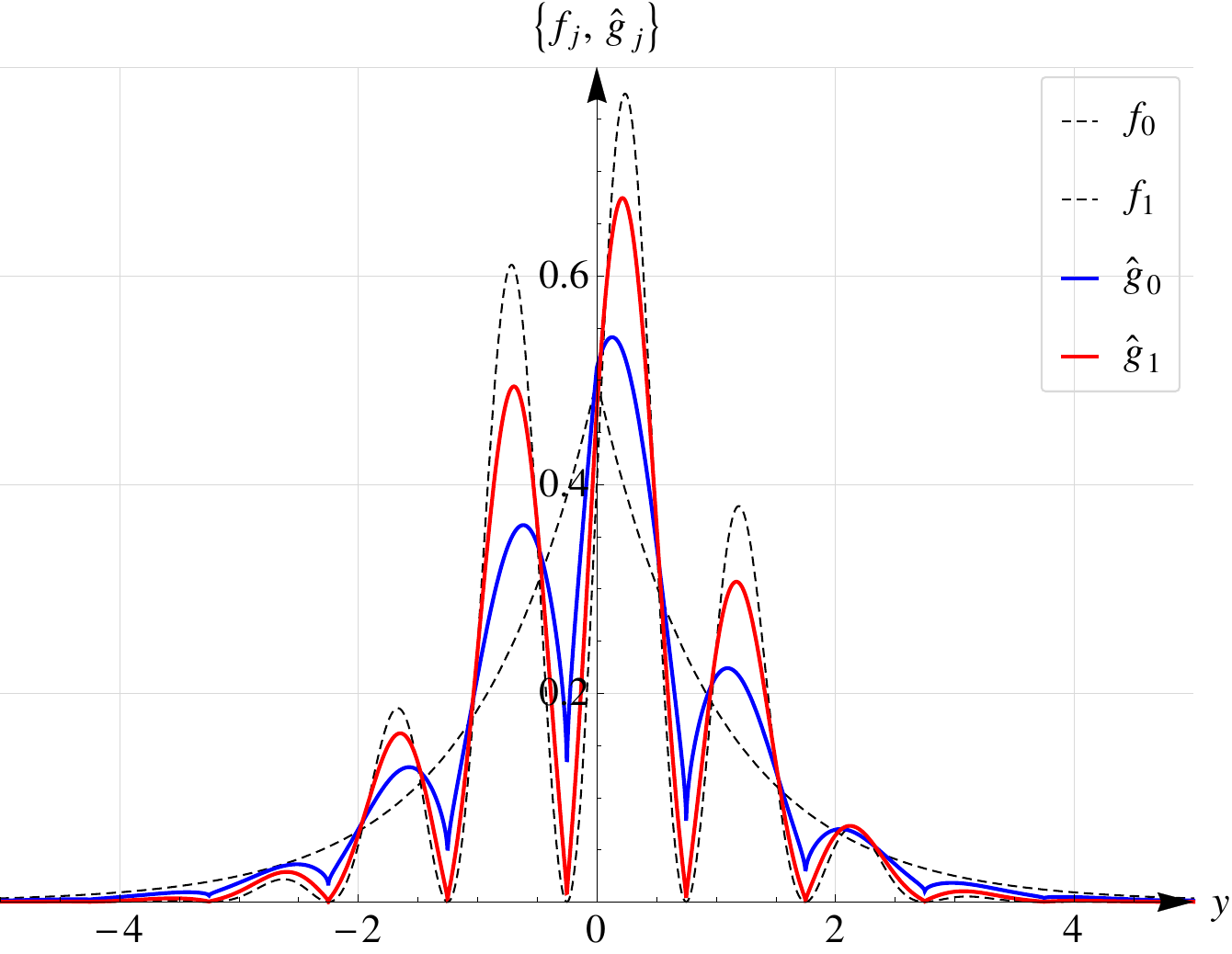}}
\caption{The nominal distributions denoted by $d_3$ and the corresponding LFDs for $\epsilon_0=\epsilon_1=0.05$, where $u=0.46$.\label{fig5}}
\end{figure}

\begin{figure}[ttt]
  \centering
  \centerline{\includegraphics[width=8.8cm]{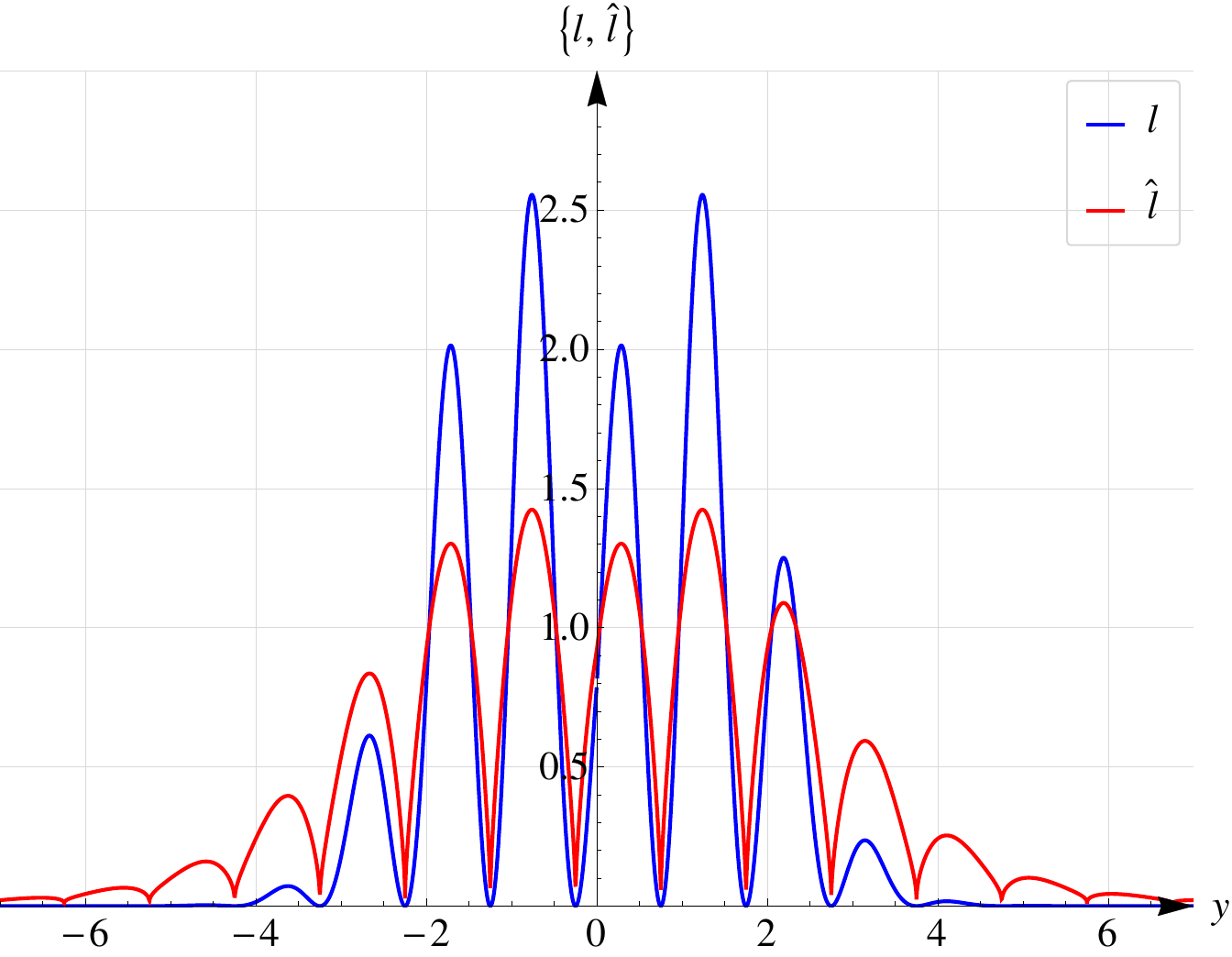}}
\caption{Robust and nominal LRFs found for the nominal distributions denoted by $d_3$ and $\epsilon_0=\epsilon_1=0.05$.\label{fig6}}
\end{figure}

\subsection{Convexity of $D_u$ and the Lagrangian Parameters}
It was mentioned earlier that $D_u$ is convex in $u$, for a fixed pair of distributions. However, it is not necessarily convex if for every $u$ the distribution functions are possibly different. This is especially the case when one considers the LFDs which are found as a function of $u$, cf. Section~\ref{sec5}. In order to see whether the convexity arguments still hold in general, $D_u$ is plotted for the pair of nominal distributions denoted by $d_1$, $d_2$ and $d_3$, when the distance is the \text{KL-divergence} and additionally for $d_2$, when the distance is the $\alpha=0.1$-divergence. The robustness parameters are the same as in the previous simulations. Figure~\ref{fig7} illustrates the outcome of this simulation, which proves the existence of distances (i.e. $\alpha=0.1$) for which $D_u$ is not necessarily convex, although it may not possibly be the case for the KL-divergence.\\
The LFDs are obtained by solving a system of non-linear equations for every choice of $u$. These parameters can be depicted so that the results can easily be verified by others. In this example again the KL-divergence neighborhood is considered with $\epsilon_0=\epsilon_1=0.1$. In Figure~\ref{fig8} the KKT parameters $\lambda_0$, $\lambda_1$, $\mu_0$ and $\mu_1$ are illustrated for the pairs of nominal distributions denoted by $d_1$ and $d_2$. For both examples, the KKT parameters follow similar paths.

\begin{figure}[ttt]
  \centering
  \centerline{\includegraphics[width=8.8cm]{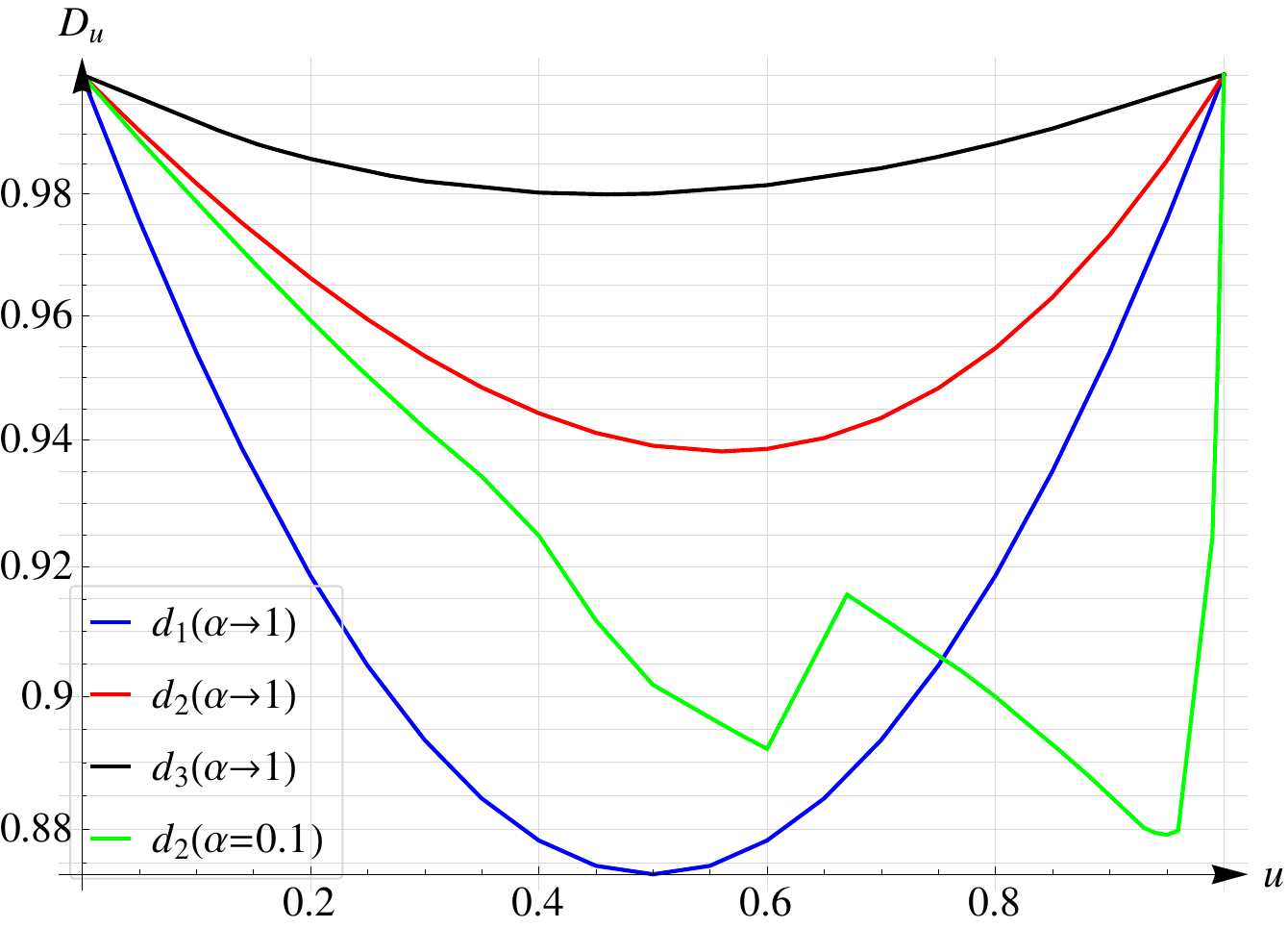}}
\caption{The $u$-divergence as a function of $u$ for the LFDs obtained for various pairs of distributions as well as uncertainty classes.\label{fig7}}
\end{figure}

\begin{figure}[ttt]
  \centering
  \centerline{\includegraphics[width=8.8cm]{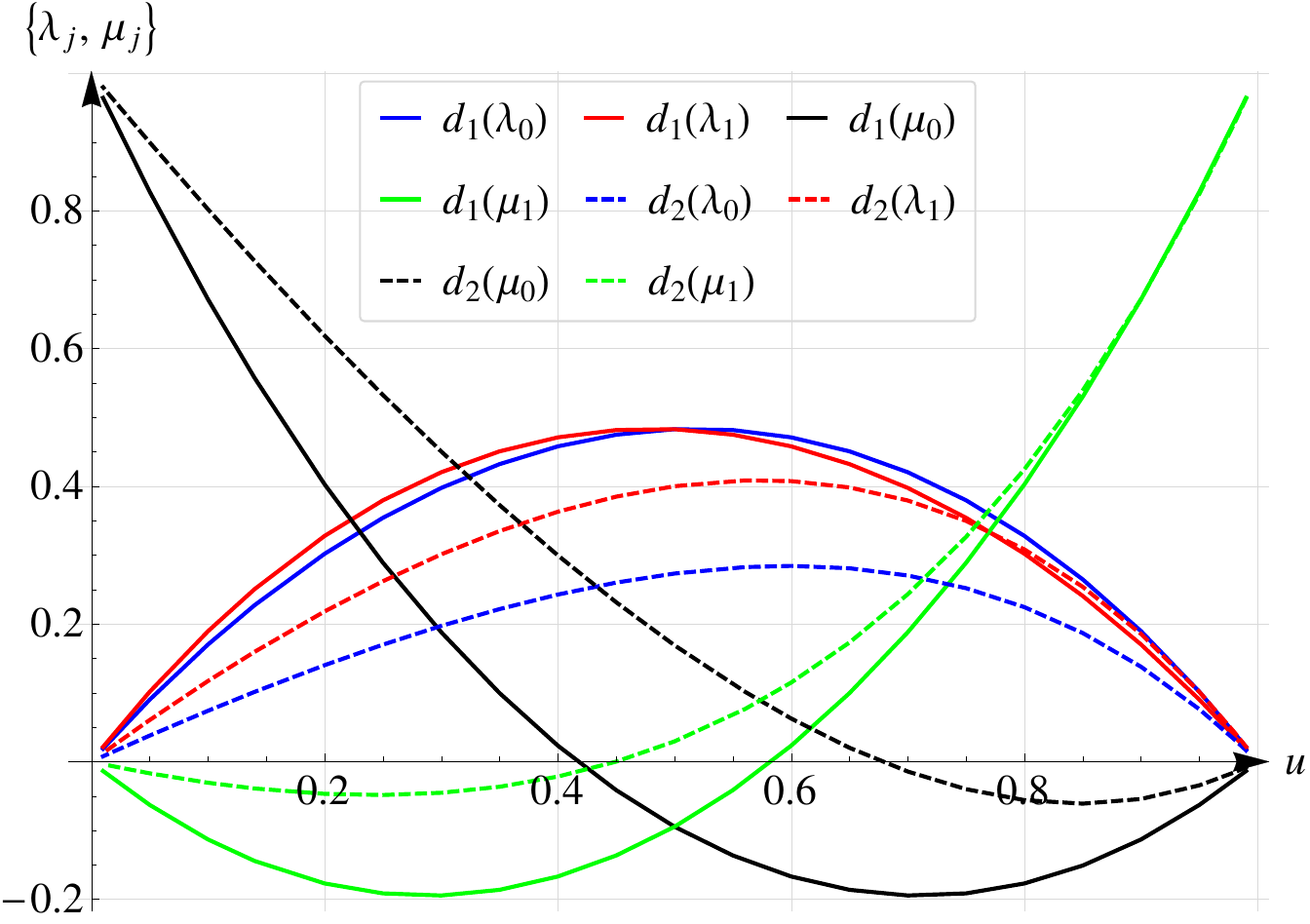}}
\caption{KKT parameters for two pairs of nominal distributions and the KL-divergence neighborhood with $\epsilon_0=\epsilon_1=0.1$.\label{fig8}}
\end{figure}

\subsection{Asymptotically Minimax Robust NP-tests}
Dabak's test is neither minimax robust nor asymptotically minimax robust as it was shown in Section~\ref{sec5},~\ref{sec6_1}. This result can be demonstrated with an example. Let the nominal distributions be given as in Table~\ref{tab1} with $d_2$ and let $\epsilon_0=\epsilon_1=0.01$. The rate functions $I_0$ and $I_1$ are of interest for two cases; $|_a^\cdot$ and $|_{a^*}^\cdot$, i.e. when the test is asymptotically minimax robust Type-I NP-test and Dabak's test, respectively. Figures~\ref{fig15} and \ref{fig16} illustrate the rate functions $I_0$ and $I_1$. Both in Figure~\ref{fig15} and Figure~\ref{fig16} the performance of the ($a^*$)-test is degraded by the data samples obtained from the LFDs of the $(a)$-test in comparison to those obtained from the LFDs of the ($a^*$)-test.

\subsection{Band Model}
Asymptotically minimax robust tests arising from the band model can similarly be simulated. Consider the lower bounding functions
\begin{equation*}
g_0^L(y)=(1-\epsilon)f_{\mathcal{N}}(y;-1,4),\quad g_1^L(y)=(1-\epsilon)f_{\mathcal{N}}(y;1,4),
\end{equation*}
where the contamination ratio is chosen to be $\epsilon=0.2$. Furthermore, let the upper bounding functions be
\begin{equation*}
g_0^U(y)=(1+\varepsilon)f_{\mathcal{N}}(y;-1,4),\quad g_1^U(y)=(1+\varepsilon)f_{\mathcal{N}}(y;1,4),
\end{equation*}
with the parameters $\varepsilon=0.2$ (Type-I), $\varepsilon=0.5$ (Type-II), $\varepsilon=1.5$ (Type-III) or $\varepsilon=19$ (Type-III), simulating three different types of robust LRFs resulting from the band model, cf. Section~\ref{sec6_band}.\\
For this setup, and excluding $\varepsilon=19$ for the sake of clarity, Figure~\ref{fig9} illustrates the corresponding LFDs together with the lower bounding functions, and the upper bounding functions for $\varepsilon=1.5$. For $\varepsilon=0.5$, the LFDs are overlapping around $y=0$, leading to $\hat{l}=1$. This type of overlapping has previously been reported by \cite{gul6} for single-sample minimax robust tests obtained from the KL-divergence neighborhood. However, the test in \cite{gul6} is not minimax robust unless a well defined randomized decision rule is used.\\
In Figure~\ref{fig10}, the corresponding robust likelihood ratio functions are illustrated. Increasing $\varepsilon$ transforms the corresponding robust LRF from Type-I to Type-II and then to Type-III. Further increasing $\varepsilon$, i.e. when $\varepsilon=19$, the robust LRF tends to a clipped likelihood ratio test, which is the limiting LRF stated in Section~\ref{sec6_band}. The robust LRFs can take different shapes depending on the bounding functions. Similar patterns were stated in \cite{kassamband} and also observed in \cite{fauss}.

\begin{figure}[ttt]
  \centering
  \centerline{\includegraphics[width=8.8cm]{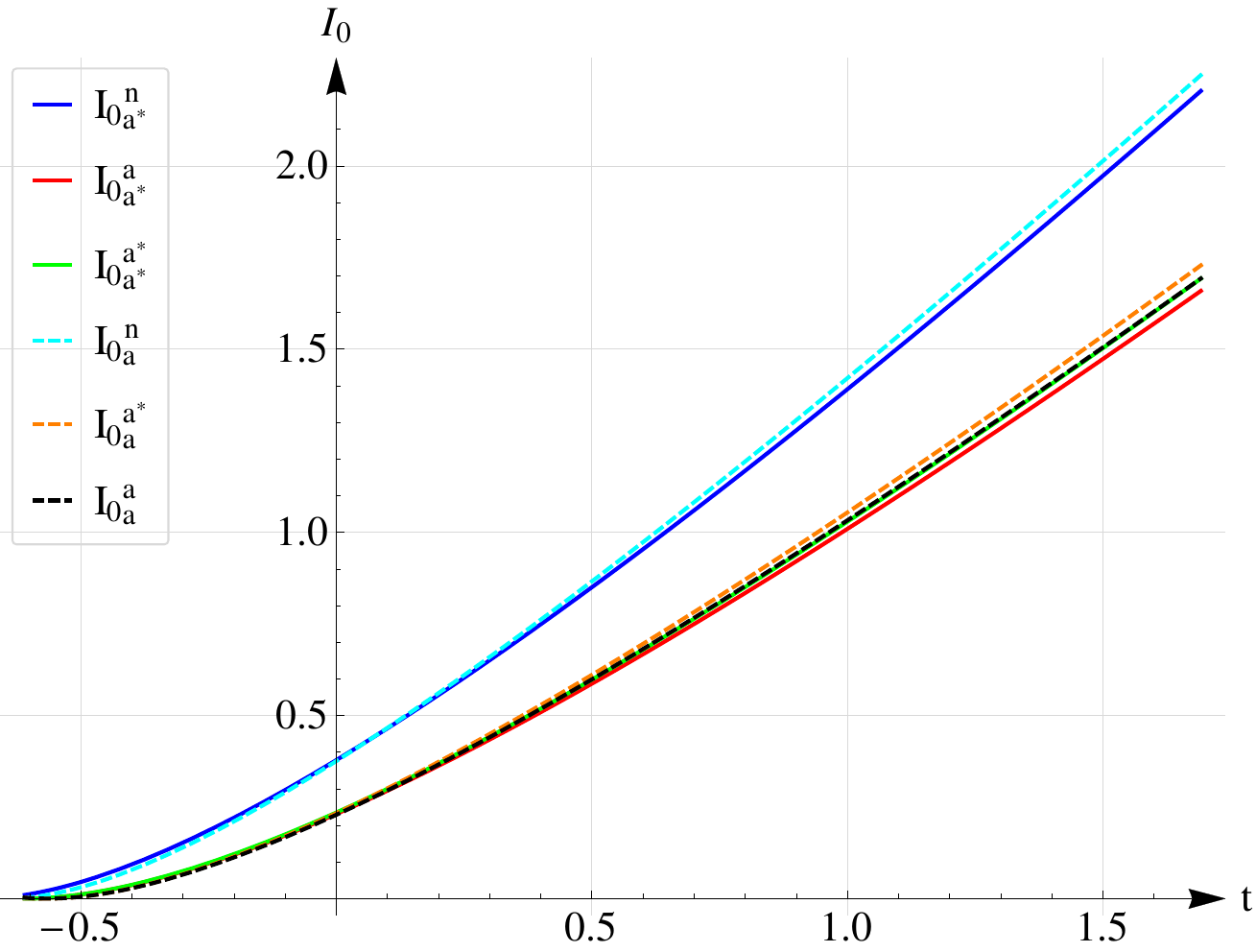}}
\caption{Asymptotic decay rate of the false alarm probability $I_0$ for the asymptotically minimax robust NP-test ((a)-test) and Dabak's test ((a$^*$)-test).\label{fig15}}
\end{figure}

\begin{figure}[ttt]
  \centering
  \centerline{\includegraphics[width=8.8cm]{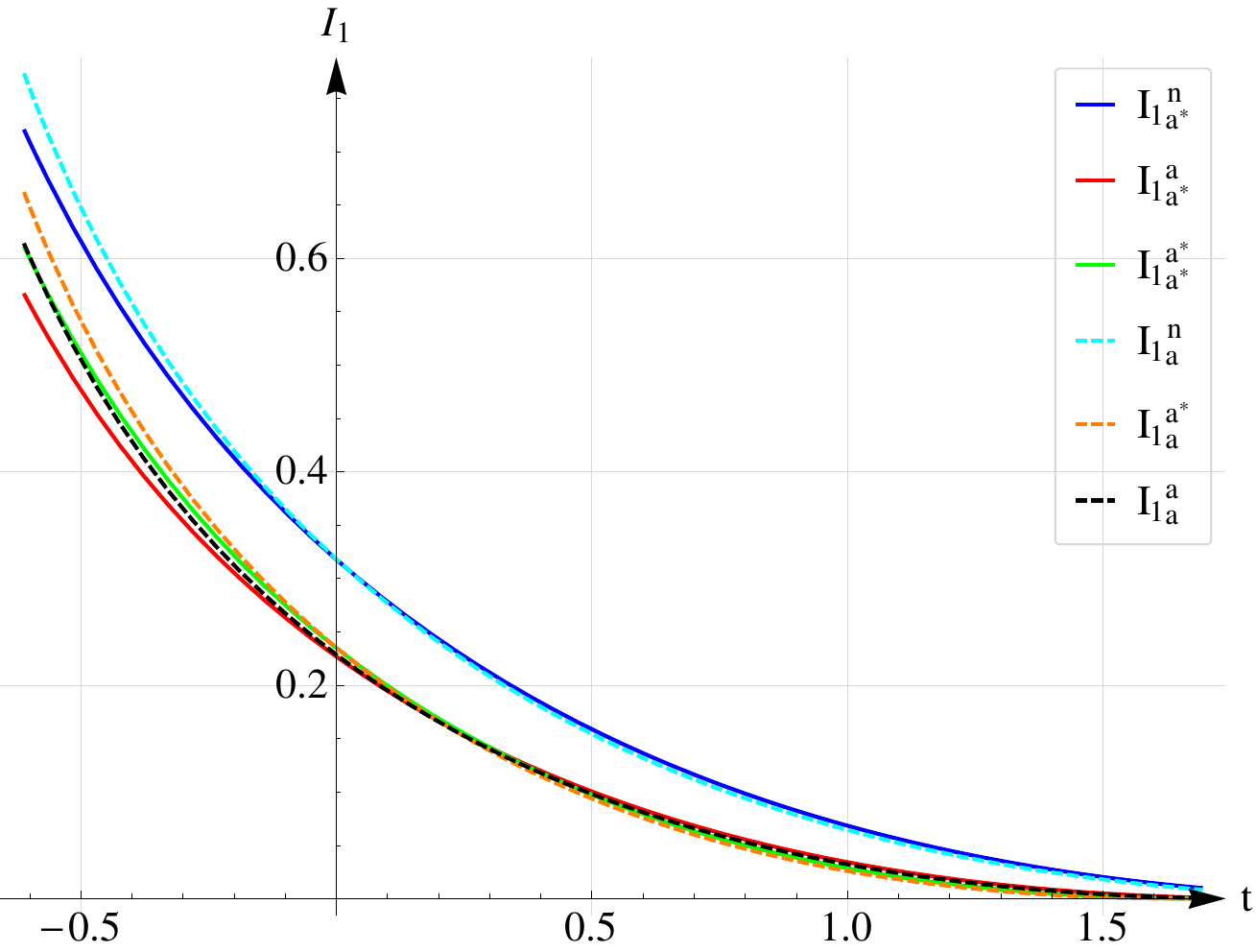}}
\caption{Asymptotic decay rate of the miss detection probability $I_1$ for the asymptotically minimax robust NP-test ((a)-test) and Dabak's test ((a$^*$)-test).\label{fig16}}
\end{figure}

\begin{figure}[ttt]
  \centering
  \centerline{\includegraphics[width=8.8cm]{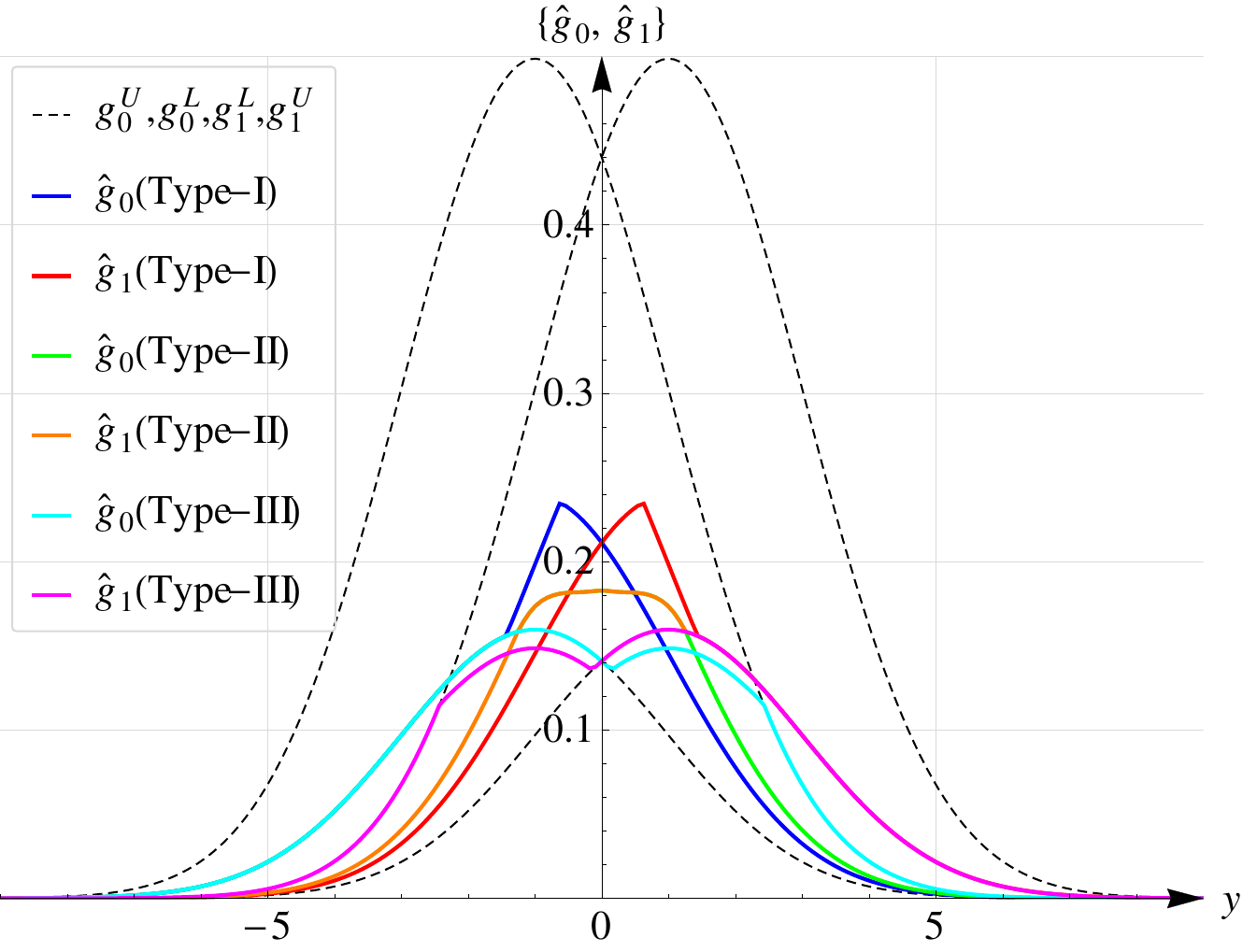}}
\caption{Three different pairs of LFDs arising from the band model together with the bounding functions for $\varepsilon\in\{0.2,0.5,1.5\}$.\label{fig9}}
\end{figure}

\begin{figure}[ttt]
  \centering
  \centerline{\includegraphics[width=8.8cm]{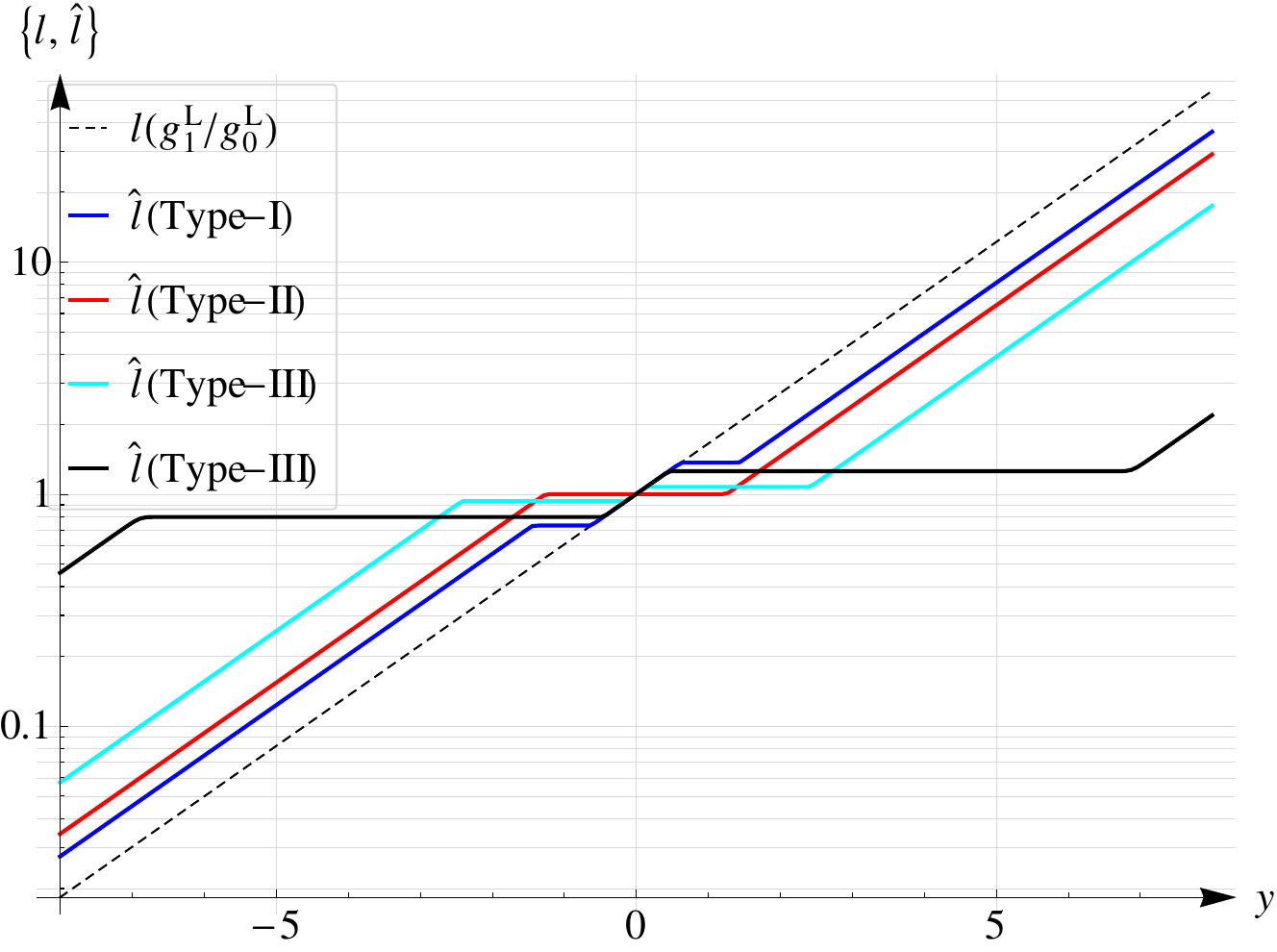}}
\caption{Three different types of Robust LRFs arising from the band model for $\varepsilon\in\{0.2,0.5,1.5,19\}$ together with the nominal LRF.\label{fig10}}
\end{figure}

\subsection{Moment Classes}
The LFDs and robust LRFs arising from the moment classes can be exemplified by solving the convex optimization problem given in Section~\ref{sec5_5} numerically, by first replacing the constraints with the ones defined by the moment classes. Consider the constraints
\begin{align*}
-2 \leq &\mathbb{E}_{G_0}[Y] \leq -0.5,& 0.5 \leq \mathbb{E}_{G_1}[Y] \leq 2,\nonumber \\
0 \leq &\mathbb{E}_{G_0}[Y^2] \leq 2,& 2 \leq \mathbb{E}_{G_1}[Y^2] \leq 4,
\end{align*}
\begin{figure}[ttt]
  \centering
  \centerline{\includegraphics[width=8.8cm]{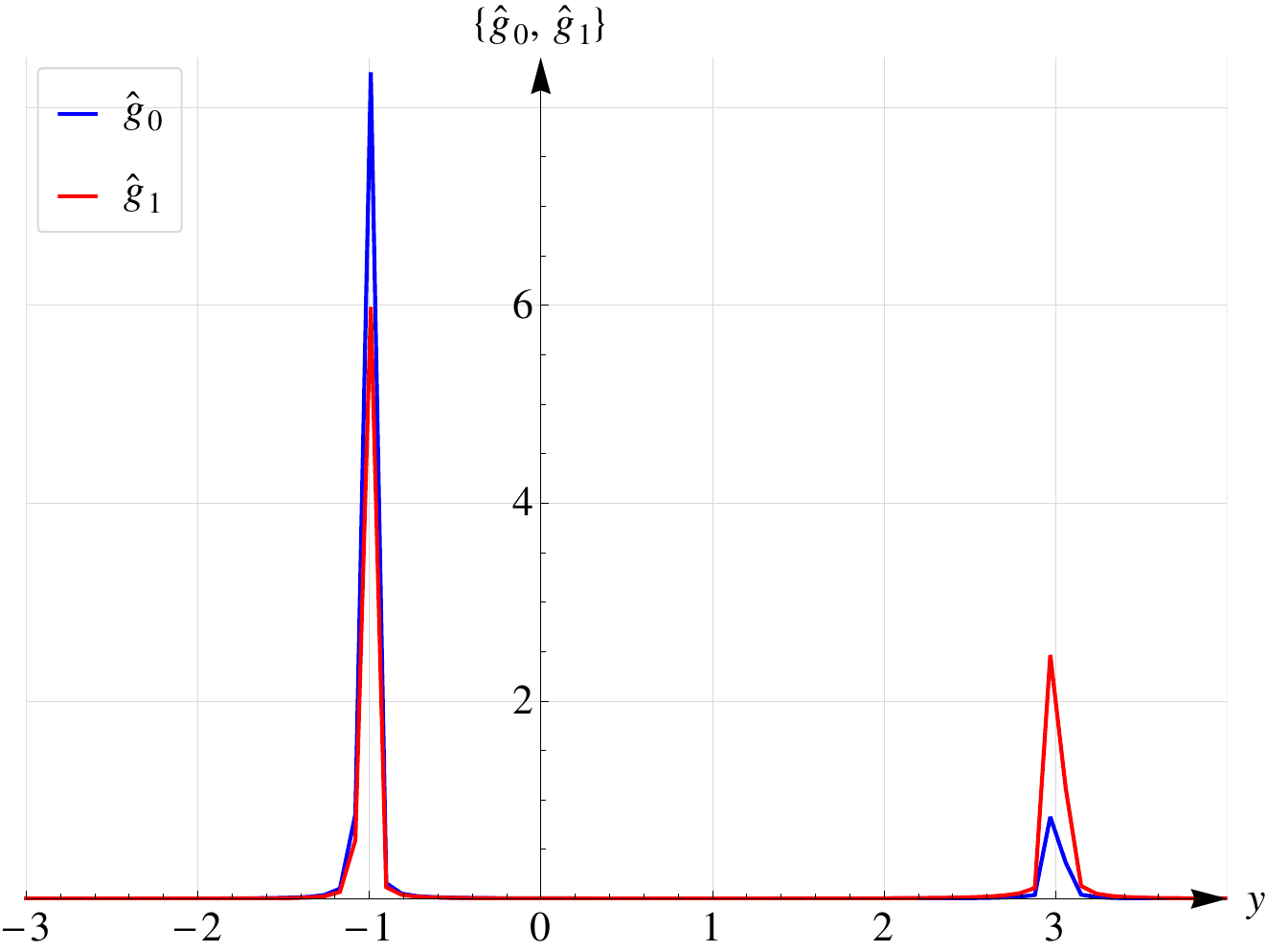}}
\caption{Least favorable distributions arising from the moment classes in the given example.\label{fig19}}
\end{figure}
\begin{figure}[ttt]
  \centering
  \centerline{\includegraphics[width=8.8cm]{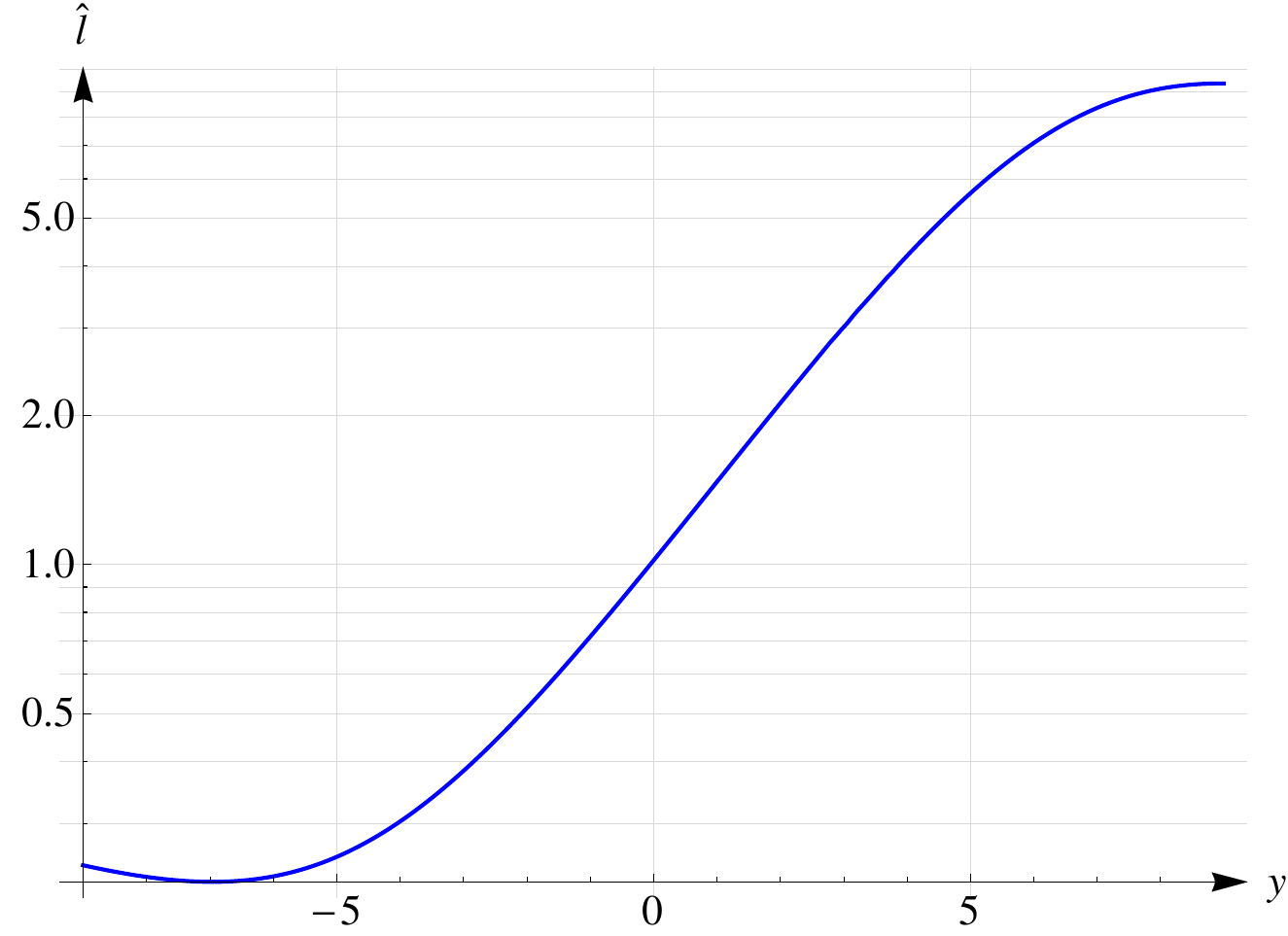}}
\caption{The robust LRF arising from the moment classes in the given example.\label{fig20}}
\end{figure}defined over the first and second moments of the probability density functions. Figures~\ref{fig19} and \ref{fig20} illustrate the LFDs and the corresponding robust LRF, respectively.

\subsection{P-point Classes}
Similarly, an example to the asymptotically minimax robust test arising from the p-point classes can be given. Consider the p-point classes defined by the constraints
\begin{align*}
\int_{-5}^{3} g_0(y)d y \leq 0.3,\quad \int_{0}^{3} g_1(y)d y \geq 0.8.
\end{align*}
Figures~\ref{fig21} and \ref{fig22} illustrate the LFDs and the corresponding robust LRF, respectively.

\begin{figure}[ttt]
  \centering
  \centerline{\includegraphics[width=8.8cm]{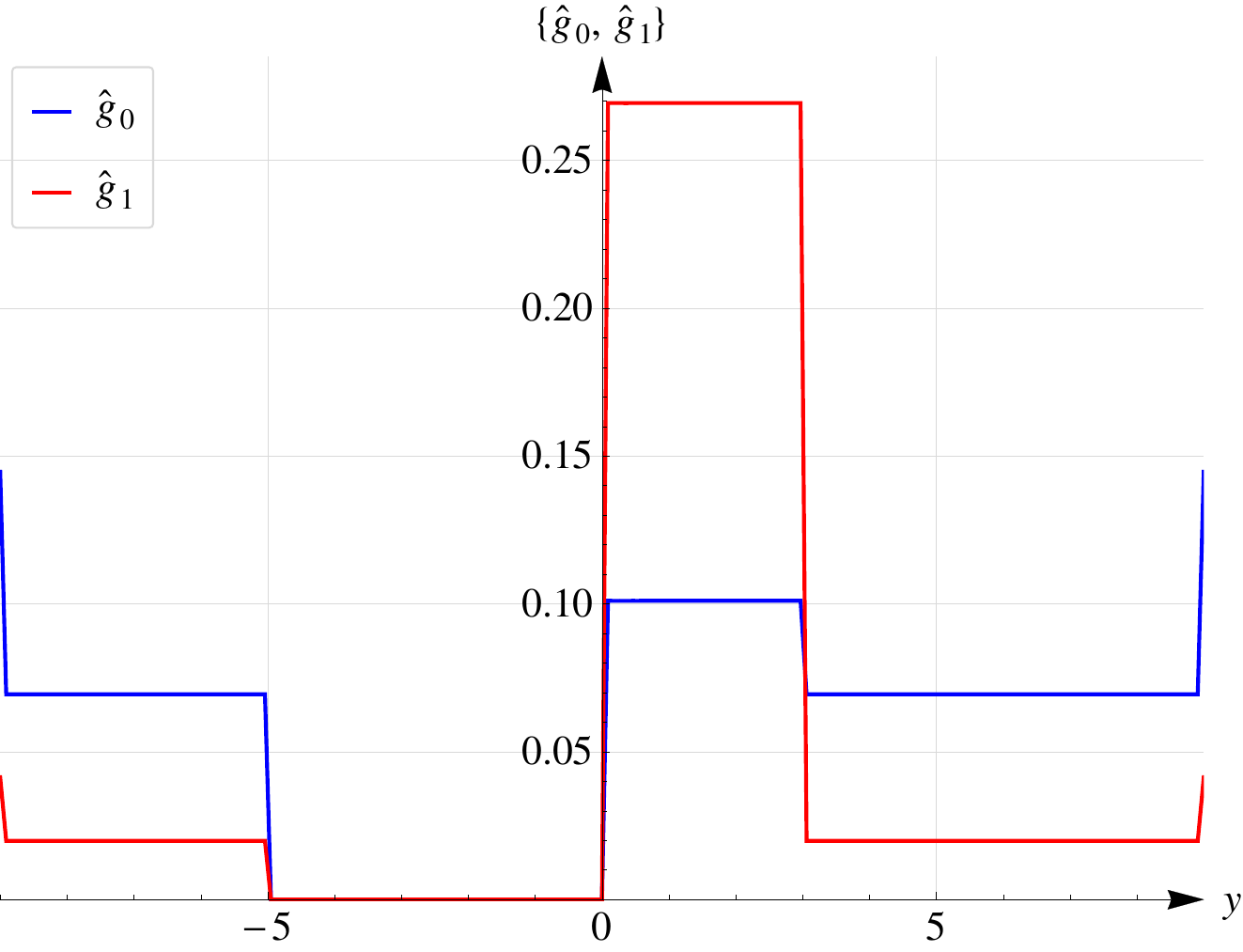}}
\vspace{-3mm}
\caption{Least favorable distributions arising from the p-point classes in the given example.\label{fig21}}
\end{figure}

\begin{figure}[ttt]
  \centering
  \centerline{\includegraphics[width=8.8cm]{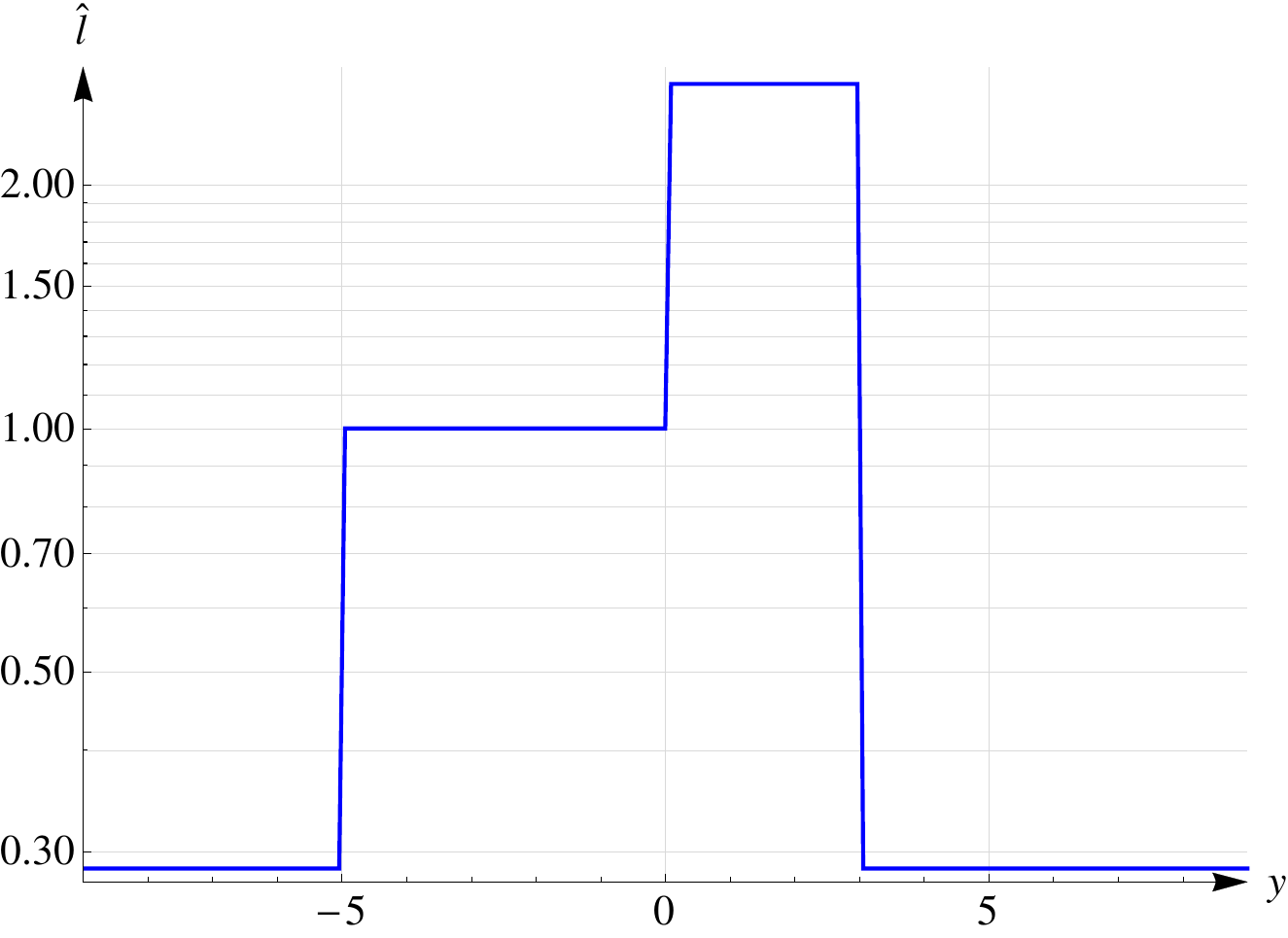}}
\vspace{-3mm}
\caption{The robust LRF arising from the p-point classes in the given example.\label{fig22}}
\end{figure}

\section{Conclusions}\label{sec9}
Designing (asymptotically) minimax robust hypothesis tests for any reasonable construction of uncertainty classes was shown to be made via maximizing $D_u(G_0,G_1)$ over all $(g_0,g_1)\in\mathscr{G}_0\times \mathscr{G}_1$ and minimizing $u$ over all $u\in(0,1)$. The uncertainty classes based on the KL-divergence, $\alpha$-divergence, symmetrized $\alpha$-divergence, total variation distance, as well as the band model, moment classes and p-point classes were considered. For the first six classes, the KKT multipliers were employed to derive the LFDs and the corresponding minimax robust tests in parametric forms. The related parameters can be determined by means of solving non-linear systems of equations. For the last two classes, asymptotically minimax robust tests were evaluated as a convex optimization problem.\\
In addition to the Bayesian formulation, Neyman-Pearson formulations of the asymptotically minimax robust hypothesis testing schemes were considered for the KL-divergence neighborhood. The resulting Neyman-Pearson tests of Type~I and II correspond to a non-linear transformation of the nominal LRF, which involves the Lambert-W function. This result proves that Dabak's test is not asymptotically minimax robust.\\
Existence of a deterministic single-sample minimax robust test implies existence of all-sample minimax robust tests. It is also known that a single-sample minimax robust test may fail to exist even for a simple example. Besides, randomized versions of single-sample minimax robust tests, if they exist, do not lead to finite-sample minimax robust tests. These results inevitably motivate considering asymptotical designs. Although the term \emph{asymptotic} implies validity of the theory for very large sample sizes, it was shown that the proposed theory allows finding the finite-sample minimax robust test, if it exists and  the asymptotically minimax robust test, otherwise. Therefore, in some sense, the asymptotic design may be called the best possible approach in terms of minimax robustness.\\
In order to evaluate theoretical findings and their applicability simulations were performed. In particular, the LFDs and the robust LRFs were exemplified for every uncertainty class considered. While clipping the nominal likelihood ratios has already been known to be a way of employing minimax robustness, it was first observed here that asymptotically minimax robustness may require amplifying the nominal likelihood ratios by a great factor. Additionally, a large number of constraints may be required such that the LFDs resulting from the moment and p-point classes are smooth enough.
From this work the following questions are open:
\begin{itemize}
\item Is it true that if $f_0(-y)=f_1(y)$ for all $y$, then $u=1/2$ is the minimizer for all/some pairs of uncertainty classes?
\item How should the asymptotic design look like if each random variable $Y_k$ is subjected to different uncertainties, or if $Y_k$ are not mutually independent?
\end{itemize}




\appendices\label{appx}
\section{Existence of a Saddle Value for \eqref{eq45}}\label{appendix2}
\begin{thm}[Application of Sion's minimax theorem \cite{sion}]
A solution to \eqref{eq45} exists if the following conditions hold:
\begin{itemize}
\item The objective function $D_u$ is real valued, upper semi-continuous and quasi-concave on ${\mathscr{G}}_0\times{\mathscr{G}}_1$ for all $u\in[0,1]$.
\item The objective function $D_u$ is lower semi-continuous and quasi-convex on $[0,1]$ for all $(G_0,G_1)\in{\mathscr{G}}_0\times{\mathscr{G}}_1$.
\item $[0,1]$ is a compact convex subset of a linear topological space.
\item ${\mathscr{G}}_0\times{\mathscr{G}}_1$ is a convex subset of a linear topological space.
\end{itemize}
\end{thm}

\begin{IEEEproof}
The objective function is real valued, continuous in $u$ and $(g_0,g_1)$, jointly concave on ${\mathscr{G}}_0\times{\mathscr{G}}_1$ for all $u\in[0,1]$, and convex on $[0,1]$ for all $(G_0,G_1)\in{\mathscr{G}}_0\times{\mathscr{G}}_1$, see $2$nd and $3$rd properties of $D_u$. The set $[0,1]$ is trivially convex and is closed and bounded, hence compact with respect to the standard topology by Heine-Borel theorem \cite[Theorem 2.41]{rudin1976}. Finally, ${\mathscr{G}}_0$ and ${\mathscr{G}}_1$ are convex sets, since $D_f$ is a convex distance. As a result ${\mathscr{G}}_0\times {\mathscr{G}}_1$ is also convex.
\end{IEEEproof}

\section{Proof of Theorem~\ref{theorem01}}\label{appendix3}
\begin{IEEEproof}
For any $t$, for which $I_1(t)>I_0(t)$ we have
\begin{equation}
P_E(n,t)=P_0C_F(n)\exp(-nI_0(t))\,\,\text{as}\,\,n\to\infty,
\end{equation}
since
\begin{align}
\frac{P_E(n,t)}{P_0C_F(n)\exp(-nI_0(t))}&=\frac{P_0C_F(n)\exp(-nI_0(t))+(1-P_0)C_M(n)\exp(-nI_1(t))}{P_0C_F(n)\exp(-nI_0(t))}\nonumber\\
&=1+C(n)\exp(-n(I_1(t)-I_0(t)))\to 1,
\end{align}
where
\begin{equation}
C(n)=\frac{(1-P_0)C_M(n)}{P_0C_F(n)}.
\end{equation}
This argument is true because $I_1(t)-I_0(t)>0$ and $C(n)$ is sub-exponential since from \eqref{eq32} we have
\begin{equation}
\frac{1}{n}(\log((1-P_0)C_M(n))-\log(P_0C_F(n)))=0\,\,\text{as}\,\,n\to\infty.
\end{equation}
Similarly, for the case $I_1(t)<I_0(t)$ we have
\begin{equation}
P_E(n,t)=(1-P_0)C_M(n)\exp(-nI_1(t))\,\,\text{as}\,\,n\to\infty.
\end{equation}
Consequently, as $n\to\infty$ we have
\begin{equation}
P_E(n,t) = \begin{cases} P_0C_F(n)\exp(-nI_0(t)), & I_1(t)>I_0(t)  \\ (1-P_0)C_M(n)\exp(-nI_1(t)), & I_0(t)>I_1(t) \end{cases},
\end{equation}
which can be rewritten as
\begin{equation}
P_E(n,t) = P_0^a(1-P_0)^{1-a}C_F(n)^aC_M(n)^{1-a}\exp(-n\min\{I_0(t),I_1(t)\})
\end{equation}
where $a=\mathbf{1}_{\{I_1>I_0\}}$. Hence, as $n\to\infty$
\begin{align}
\min_t P_E(n,t) &\equiv \min_t \exp(-n\min\{I_0(t),I_1(t)\})\equiv \min_t -n\min\{I_0(t),I_1(t)\}\nonumber\\
&\equiv\max_t \min\{I_0(t),I_1(t)\}\equiv\min_t \max\{I_0(t),I_1(t)\},
\end{align}
since $P_0^a(1-P_0)^{1-a}C_F(n)^aC_M(n)^{1-a}$ is positive and independent of $t$. From \cite[Remark. 5.2.2.]{gulbook}, $I_0$ and $I_1$ are increasing and decreasing functions of $u$, respectively. Let $h_j:u\mapsto t$ be the mapping between maximizing $u$ and $t$ in \eqref{eq28}. It is easy to see that $h_j$ is increasing because it is the derivative of a convex function $\log M_{X_1}^j(u)$ \cite[p. 77]{levy}. Hence, $I_0(t)=I_0(h_0(u))$ and $I_1(t)=I_1(h_1(u))$ are also increasing and decreasing functions respectively, as
\begin{align*}
\frac{d I_0(h_0(u))}{d u}&=I^{'}_0(h_0(u))h_0^{'}(u)\geq 0, \\
\frac{d I_1(h_1(u))}{d u}&=I^{'}_1(h_1(u))h_1^{'}(u)\leq 0.
\end{align*}
Since $I_0$ and $I_1$ are also increasing and decreasing functions of $t$, and furthermore, as $M_{X_1}^1(u)=M_{X_1}^0(u+1)$ for \text{$G_j:=\hat{G}_j$} together with \eqref{eq28} implies \text{$I_1(t)=I_0(t)-t$}, it is true that \text{$I_0(0)=I_1(0)$} and together with \text{$\{t:I_1>I_0\}\equiv\{t:t<0\}$} and \text{$\{t:I_1<I_0\}\equiv\{t:t>0\}$} one can write
\begin{equation}\label{eq35}
I_m(t)=\min\{I_0(t),I_1(t)\}=
\begin{cases}
I_0(t), &  t<0 \\
I_1(t), &  t>0 \\
I_0(0)=I_1(0), &  t=0
\end{cases}.
\end{equation}
Hence, we have
\begin{equation}\label{eq36}
\arg\sup_t I_m(t)=0.
\end{equation}
Notice that we need $G_j:=\hat{G}_j$ in Theorem~\ref{theorem0}. Else, \eqref{eq35} and \eqref{eq36} do not necessarily hold.
\end{IEEEproof}

\bibliographystyle{IEEEtran}
\bibliography{strings4}
\end{document}